\newtheorem{definition}{Definition}
\newtheorem{theorem}{Theorem}
\newtheorem{proposition}{Proposition}
\newcolumntype{E}{>{\centering\arraybackslash}p{0.8cm}}
\newcolumntype{I}{>{\centering\arraybackslash}p{1cm}}
\newcolumntype{B}{>{\centering\arraybackslash}p{4.1cm}}
\newcolumntype{P}{>{\centering\arraybackslash}p{4.1cm}}
\newcolumntype{D}{>{\centering\arraybackslash}p{5.6cm}}
\newcolumntype{Z}{>{\centering\arraybackslash}p{0.6cm}}
\newcolumntype{Y}{>{\centering\arraybackslash}p{0.8cm}}
\newcolumntype{L}{>{\centering\arraybackslash}p{2.41cm}}
\newcolumntype{V}{>{\centering\arraybackslash}p{2.5cm}}
\newcolumntype{N}{>{\centering\arraybackslash}p{2.5cm}}
\newcolumntype{O}{>{\centering\arraybackslash}p{2.53cm}}
\newcolumntype{R}{>{\centering\arraybackslash}p{2.43cm}}
\let\oldnl\nl
\newcommand{\nonl}{\renewcommand{\nl}{\let\nl\oldnl}}%
\newcommand{\system}{ShieldDB}
\begin{document}

\title{ShieldDB: An Encrypted Document Database with Padding Countermeasures}

\author{Viet~Vo, Xingliang~Yuan, Shi-Feng~Sun, Joseph~K.~Liu, Surya~Nepal, and~Cong~Wang
\IEEEcompsocitemizethanks{\IEEEcompsocthanksitem Viet~Vo, Xingliang~Yuan, and Joseph~K.~Liu are with the Faculty of Information Technology, Monash University, Melbourne, Australia. 
E-mail: {\{viet.vo,xingliang.yuan,joseph.liu\}}@monash.edu
\IEEEcompsocthanksitem Viet Vo and Surya Nepal are with the Data 61, CSIRO, Australia. 
E-mail: {\{viet.vo,surya.nepal\}}@data61.csiro.au
\IEEEcompsocthanksitem Shi-Feng~Sun is with the Department of Computer Science and Engineering, Shanghai Jiao Tong University, China. 
E-mail: shifeng.sun@sjtu.edu.cn
\IEEEcompsocthanksitem Cong Wang is with the Department of Computer Science, City University of Hong Kong, Hong Kong, China. 
E-mail: congwang@cityu.edu.hk
}}


\markboth{Journal of \LaTeX\ Class Files,~Vol.~14, No.~8, August~2015}%
{Shell \MakeLowercase{\textit{et al.}}: Bare Demo of IEEEtran.cls for Computer Society Journals}

\IEEEtitleabstractindextext{
\begin{abstract}
Cloud storage systems have seen a growing number of clients due to the fact that more and more businesses and governments are shifting away from in-house data servers and seeking cost-effective and ease-of-access solutions. However, the security of cloud storage is underestimated in current practice, which resulted in many large-scale data breaches. To change the status quo, this paper presents the design of ShieldDB, an encrypted document database. ShieldDB adapts the searchable encryption technique to preserve the search functionality over encrypted documents without having much impact on its scalability. However, merely realising such a theoretical primitive suffers from real-world threats, where a knowledgeable adversary can exploit the leakage (aka access pattern to the database) to break the claimed protection on data confidentiality. To address this challenge in practical deployment, ShieldDB is designed with tailored padding countermeasures. Unlike prior works, we target a more realistic adversarial model, where the database gets updated continuously, and the adversary can monitor it at an (or multiple) arbitrary time interval(s). ShieldDB's padding strategies ensure that the access pattern to the database is obfuscated all the time. We present a full-fledged implementation of ShieldDB and conduct intensive evaluations on Azure Cloud.
\end{abstract}

%

\begin{IEEEkeywords}
Data Security and Privacy, Management and Querying of Encrypted Data, Padding Strategies.
\end{IEEEkeywords}
}

\maketitle
\IEEEdisplaynontitleabstractindextext

\IEEEpeerreviewmaketitle

\vspace{-10pt}
\IEEEraisesectionheading{\section{Introduction}\label{sec:introduction}}

The adaptation of cloud storage by various governments and businesses is rapid and inexorable.
It has brought many benefits to the economy and user productivity improvements, and caused a huge upheaval in the data center infrastructure development.
However, data breaches in cloud stores are happening quite frequently in recent time, affecting millions of individuals~\cite{InformationisBeautiful,IBMAU-breach19,verizon-breach20}. 
This phenomenon calls for increased control and security for private and sensitive data~\cite{CLiu14,Fadolalkarim20,Ji19,Yi16}. 
To combat against such ``breach fatigue'', encrypted database systems recently receive wide attention~\cite{Poparz11,microsoftEDB,PappasVK14,PapadimitriouBCRHSMB16,PoddarBP16,YuanGWWLJ17}. 
Their objective is to preserve the query functionality of databases over encrypted data; that is, the server can process a client's encrypted query without decryption. The first generation of encrypted databases~\cite{Poparz11,microsoftEDB,Huanchen20,Meng18} implements property-preserving encryption (PPE) in a way that a ciphertext inherits equality and/or order properties of the underlying plaintext. 
However, inference attacks can compromise these encryption schemes by exploiting the above properties preserved in the ciphertexts~\cite{NaveedKW15,Vincent18}.

In parallel, dedicated privacy-preserving query schemes are investigated intensively in the past decade. Among others, searchable symmetric encryption (SSE)~\cite{SoWa00, CurtmolaGKO06} is well known for its  applications to ubiquitous keyword based search. In general, SSE schemes utilise an encrypted index to enable the server to search over encrypted documents. The server is restricted such that only if a query token (keyword ciphertext) is given, the search operation against the index will be triggered to output the matched yet encrypted documents. 
This ensures that an adversary with a full image of the encrypted database learns no useful information about the documents. 
In that sense, SSE outperforms PPE in terms of security. 
Apart from security, SSE is scalable, because it is realised via basic symmetric primitives.
%

In this paper, we design an encrypted document database system based on SSE, for which the client can add encrypted documents to the database and query encrypted keywords against it. However, deploying SSE in practice is non-trivial.
With recent emerging inference attacks against SSE~\cite{IslamKK12,CashGP15,ZhangKP16} introduced, it raises doubts whether SSE achieves an acceptable tradeoff between the efficiency and security. 
As a noteworthy threat, the count attack~\cite{CashGP15} demonstrates that an adversary with extra background knowledge of a dataset can analyse the size of the query result set to recover keywords from the query tokens. 
The above information is known as \emph{access pattern}, and can be monitored via the server's memory access and communication between the server and client.
If SSE is deployed to a database, access patterns can also be derived from database logs~\cite{GrubbsRS17}. This situation further reduces the security claim of SSE, since the adversary does not have to stay online for monitoring. 

Using oblivious-RAM (ORAM) is an quintessential approach to enable encrypted search without exposing access pattern~\cite{Saba19,Liu20}, but it is shown as an expensive tool~\cite{Naveed15,IslamKK12,Cash15,Mishra18}.
Alternatively, using padding (bogus documents) for inverted index solution~\cite{CurtmolaGKO06} is proven as a conceptually simple but effective countermeasure to obfuscate the access pattern against the aforementioned attacks~\cite{IslamKK12,CashGP15,BostF17}. Unfortunately, existing padding countermeasures only consider a static database, where padding is only added at the setup~\cite{BostF17,Cash15}. 
They are not sufficient for real-world applications. In practice, the states of database change over the time. Specifically, the updates of documents change the access pattern for a given keyword, and new keywords can be introduced randomly at any time. Hence, exploring to what extend adversaries can exploit such changes to compromise the privacy of data and how padding countermeasures can be applied in a dynamic environment are essential to make SSE deployable in practice.

\vspace{2pt}
\noindent \textbf{Contributions}: To address the above issues, we propose and implement an encrypted document database named \system, in which the data and query security in realistic and dynamic application scenarios is enhanced via effective padding countermeasures. Our contributions are as follows:   
\begin{itemize}

\item \system~is the first encrypted database that supports encrypted keyword search, while equipping with padding countermeasures against inference attacks launched by  adversaries with database background knowledge. 

\item We define two new types of attack models, i.e., \emph{non-persistent} and \emph{persistent} adversaries, which faithfully reflect different real-world threats in a continuously updated database. Accordingly, we propose padding countermeasures to address these two adversaries. 

\item \system~ is designed with a dedicated system architecture to achieve the functionality and security goals. Apart from the client and server modules for encrypted keyword search, a \textit{Padding Service} is developed. This service leverages two controllers, i.e., \emph{Cache Controller} and \emph{Padding Controller}, to enable efficient and effective database padding.  

\item \system~ implements advanced features to further improve the security and performance. These features include: 1) forward privacy that protects the newly inserted document, 2) flushing that can reduce the load of the padding service, and 3) re-encryption that refreshes the ciphertexts while realising deletion and reducing padding overhead.  

\item We present the implementation and optimization of \system, and deploy it in Azure Cloud. We build a streaming scenario for evaluation. 
In particular, we implement an aggressive padding mode (\emph{high} mode) and a conservative padding mode (\emph{low} mode), and compare them with padding strategies against \textit{non-persistent} and \textit{persistent adversaries}, respectively. 
We perform a comprehensive set of evaluations on the load of the cache, system throughput, padding overhead, and search time.  
Our results show that the \emph{high} mode results in much larger padding overhead than the \emph{low} mode does, while achieving lower cache load.
In contrast, the \emph{low} mode results in higher system throughput (accumulated amount of real data) but requiring a higher cache load.
%
%
The evaluations of flushing and re-encryption demonstrate the effective reduction of the cache load and padding overhead.

\end{itemize}

\noindent \textbf{Organisation}: We present important related works in Section~\ref{sec:preliminary}. Section~\ref{sec:system} overviews the design of~\system~ and demonstrates \textit{non-persistent} and \textit{persistent} adversaries. Section~\ref{sec:construction} demonstrates  padding countermeasures and other optimisation features supported in~\system. Section~\ref{subsec:security_shielddb} analyses the security of the system. Sections~\ref{sec:implementation} and~\ref{sec:exp} present system deployment and evaluations. Conclusion is in Section~\ref{sec:conclusion}.

\section{Background} 
\label{sec:preliminary}
In this section, we introduce the background knowledge related to the design of our system.  

\vspace{2pt}
\noindent \textbf{Dynamic symmetric searchable encryption}. Considering a client \textit{C} and a server \textit{S}, the client encrypts documents in a way that the server can query keywords over the encrypted documents. Functions included in an SSE scheme are \textsf{setup} and \textsf{search}. If the scheme is dynamic, \textsf{update} functions (data addition and deletion) are also supported.  
Let DB represent a database of  documents, and each document is a variable-length set of unique keywords. We use $\Delta=\{w_1,..., w_m\}$ to present all keywords occurred in \textnormal{DB}, \textnormal{DB}($w$) to present documents that contain $w$, and $|\textnormal{DB} (w)|$ to denote the number of those documents, i.e., the size of the query result set for $w$. In SSE, the encrypted database, named EDB, is a collection of encrypted keyword and document id pairs $(w,id)$'s, aka an encrypted searchable index. 

In \textsf{setup}, client encrypts \textnormal{DB} by using a cryptographic key \textit{k}, and sends EDB to server. During \textsf{search}, client takes \textit{k} and a query keyword $w$ as an input, and outputs a query token \textit{tk}. \textit{S} uses \textit{tk} to query EDB to get the pseudo-random identifies of the matched documents, so as to return the encrypted result documents. In \textsf{update}, \textit{C} takes an input of \textit{k}, a document $\textit{D}_i$ parsed as a set of $(w,id)$ pairs, and an operation \textsf{op} $\in \{add, del\}$. For addition, the above pairs are  inserted to EDB. For deletion, the server no longer returns encrypted documents in subsequent search queries. As an output, the server returns an updated EDB. We refer readers to~\cite{Kamara12,Cash14} for more details about dynamic SSE.

The security of dynamic SSE can be quantified via a tuple of stateful leakage functions $\mathcal{L}=(\mathcal{L}^{Setup},\mathcal{L}^{Search},\mathcal{L}^{Update})$. They define the side information exposed in \textsf{setup}, \textsf{search}, and \textsf{update} operations, respectively. 
The detailed definitions can be referred to Section~\ref{subsec:security_shielddb}. 

\vspace{2pt}
\noindent \textbf{Count Attack}. Cash et al.~\cite{Cash15} propose a practical attack that exploits the leakage in the \textsf{search} operation of SSE. It is assumed that an adversary with full or partial prior knowledge of DB can uncover keywords from query tokens via \textit{access pattern}. Specifically, the prior knowledge allows the adversary to learn the documents matching a given keyword before queries. Based on this, she can construct a keyword co-occurrence matrix indicating keyword coexisting frequencies in known documents. As a result, if the result length $|\textnormal{DB}(w)|$ for a query token $tk$ is unique and matches with the prior knowledge, the adversary directly recovers $w$. For tokens with the same result length, the co-occurrence matrix can be leveraged to narrow down the candidates. In this work, we extend the threat model of the count attack to the dynamic setting, which will be introduced in Section~\ref{subsec:attackmodels}. 

\vspace{2pt}
\noindent \textbf{Forward Privacy}. Forward privacy in SSE prevents the adversary from exploiting the leakage in \textsf{update} (addition) operations. Given previously collected query tokens, this security notion ensures that these tokens cannot be used to match newly added documents. As our system considers the scenario, where the documents are continuously inserted, we adapt an efficient scheme with forward privacy~\cite{SongDYXZ17} proposed by Song et al. This scheme follows Bost's scheme~\cite{Bost16} that employs trapdoor permutation to secure states associated to newly added $(w, id)$'s. Without being given new states, the server cannot perform search on the new data, and those states can be used to recover old states via trapdoor permutation. 
Specifically, we optimise the adapted scheme in the context of batch insertion and improve the efficiency, which will be introduced in Section~\ref{subsec:optimisation}.

\vspace{-5pt}
\section{Overview}
\label{sec:system}

\subsection{System Overview}
\label{subsec:architecture}

%
\system~is a document-oriented database, where semi-structured records are modeled and stored as documents, and can be queried via keywords or associated attributes. 

\noindent \textbf{Participants and scenarios}: As illustrated in Figure~\ref{fig:system}, \system~consists of a query client \textit{C}, a trusted padding service \textit{P} and an untrusted storage server \textit{S}. 
In our targeted scenario, new documents are continuously inserted to \textit{S}, and required to be encrypted. Meanwhile, \textit{C} expects \textit{S} to retain search functionality over the encrypted documents. 
To enhance the security, \textit{P} adapts padding countermeasures during encryption. 
In this paper, we consider an enterprise that utilises outsourced storage. \textit{P} is deployed at the enterprise gateway and in the same network with \textit{C}, and  \textit{P} encrypts and uploads the documents created by its employees, while \textit{C} is deployed for employees to search the encrypted documents at \textit{S}. 
Note that the deployment of \textit{P} is flexible. It can be separated from or co-located with \textit{C}. 
%




\vspace{2pt}
\noindent \textbf{Overview}: \system~supports three main operations, i.e., \textsf{setup}, \textsf{streaming}, and \textsf{search}. The full protocols of \system ~is presented in Figure~\ref{fig:shieldDB_protocols}.
Apart from the main functions, \system~also supports optimisation features \textsf{deletion} and \textsf{re-encryption}, and \textsf{flushing} operations (Section~\ref{subsec:optimisation}). 

During \textsf{setup}, \textit{P} receives a sample training dataset $\Delta_{stp}$ from \textit{C}, and then it groups keywords into clusters $L=\{L_1, \cdots, L_m\}$ based on keyword frequencies. After that, the module \emph{App Controller} in \textit{P} notifies $L$ to the module \emph{Cache Controller} to initialise a cache capacity $L_i$ for each keyword cluster. \emph{App Controller} also notifies $L$ to the module \textit{Padding Controller} to generate a padding dataset $B$.

\begin{figure}
\centering
\includegraphics[width=0.5\textwidth]{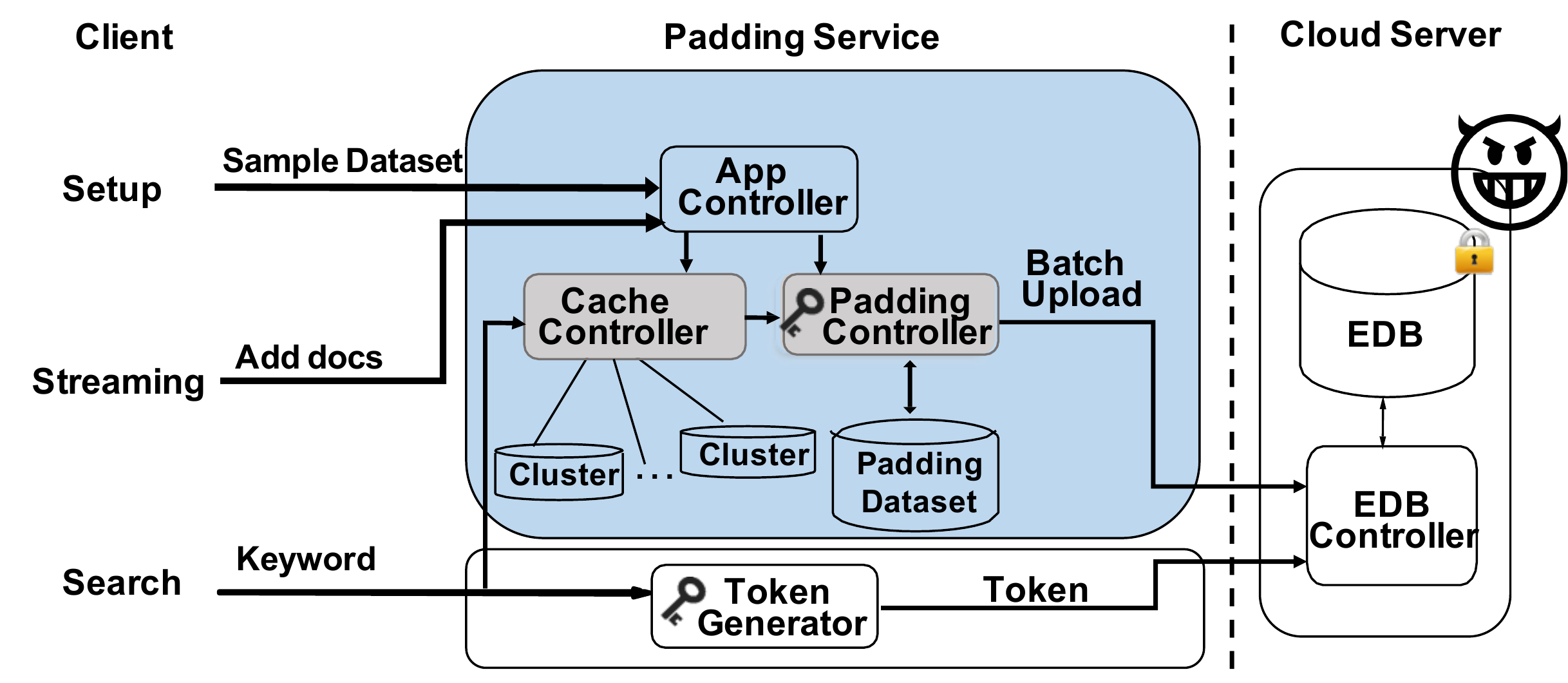} 
\vspace{-20pt}
\caption{High-level design of \system}
\vspace{-15pt}
\label{fig:system}
\end{figure}

During \textsf{streaming}, \textit{P} receives an input, $\textsf{in}=\{(\textsf{doc},id)\}$ containing a collection of documents, each element is a document $\textsf{doc}$ with identifier $id$, sent from \textit{C}.
Then, \textit{P} parses them into a set of keyword and document identifier $(w, id)$ pairs, i.e., index entries for search.
Then, \textit{Cache Controller} stores these pairs to the caches of the corresponding keyword clusters.
Based on the targeted attack model, \textit{Cache Controller} applies certain constraints in ${\tt PaddingCheck}$ to flush the cache (Algorithm~\ref{alg:paddingstrategy}).
Once the constraints on a cluster are met, \textit{Cache Controller} notifies the satisfied cluster to \textit{Padding Controller} for padding.
In particular, \textit{Padding Controller} adds bogus $(w, id)$ pairs extracted from the padding dataset to make the keywords in this cluster have equal frequency.
Then, \textit{P} encrypts and inserts all those real and bogus index entries as a data collection in a batch to EDB (see \textsf{streaming} lines $7-30$ in Figure~\ref{fig:shieldDB_protocols}).

During \textsf{search}, for a given single query keyword $w$, \textit{C} wants to retrieve documents matching that keyword from \textit{S} and \textit{P}.  First, \textit{C} retrieves the local results from \textit{Cache Controller} in \textit{P}, since some index entries might have not been sent to EDB yet. 
After that, \textit{C} sends a query token generated from this keyword to \textit{S} to retrieve the rest of the encrypted results.
After decryption, \textit{C} filters padding and combines the result set with the local one. 
For security, \textit{C} will not generate query tokens against the data collection which is currently in streaming; this constraint enforces \textit{S} to query only over data collections which are already inserted to EDB.    
Following the setting of SSE~\cite{CurtmolaGKO06,KamaraPR12}, \textsf{search} is performed over the encrypted index entries in EDB, and document identifiers are pseudo-random strings. In response to query, \textit{S} will return the encrypted documents via recovered identifiers in the result set after \textsf{search}.

Apart from padding countermeasures, \system~provides several other salient features. First, it realises forward privacy~\cite{Bost16} (an advanced notion of SSE) for the \textsf{streaming} operation. Our realisation is customised for efficient batch insertion and can prevent \textit{S} from searching the data collection in streaming. 
Second, \system~integrates the functionality of \textsf{re-encryption}. Within this operation, index entries in a targeted cluster are fetched back to \textit{P} and the redundant padding is removed. At the same time, \textsf{deletion} can be triggered, where the deleted index entries issued and maintained at \textit{P} are removed and will not be re-inserted. 
After that, real entries combing with new bogus entries are re-encrypted and inserted to EDB.  
Third, \textit{Cache Controller} can issue a secure \textsf{flushing} operation before meeting the constraints for padding. This reduces the overhead of \textit{P} while preserving the security of padding. 

\vspace{2pt}
\noindent \textbf{Remark}: \system~ assigns \textit{P} for key generation and management, and \textit{P} issues the key for \textit{C} to query. 
In addition, \textit{C} also gets the latest state of the query keyword from \textit{P}, and together with the key, to generate query tokens and send them to \textit{S}.
In our current implementation,  \textit{P} and \textit{C} use the same key for index encryption, just as most SSE schemes do. 
This is practical because SSE index only stores pseudorandom identifies of documents, and documents can separately be encrypted via other encryption algorithms. 
Advanced key management schemes of SSE~\cite{SunLSSY16,JareckiJK13} can readily be adapted; yet, this is not relevant to our problem.
%

\begin{figure}
\centering
\includegraphics[width=0.5\textwidth]{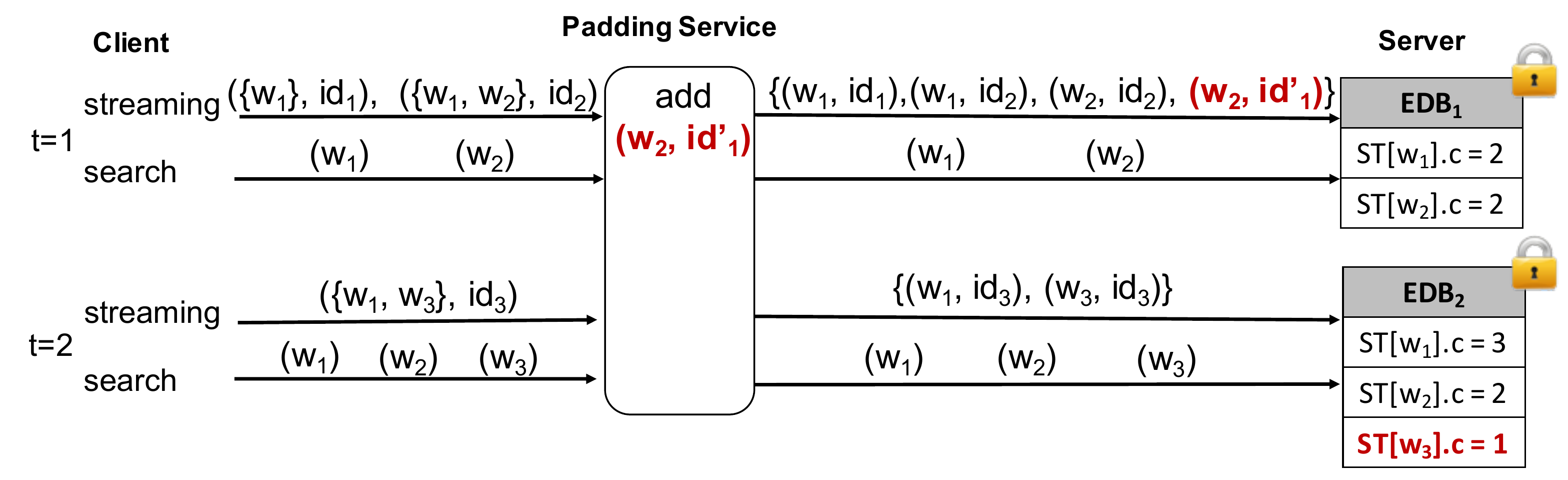} 
\vspace{-20pt}
\caption{Strawman padding against non-persistent adversary}
\vspace{-10pt}
\label{fig:np_strawman}
\end{figure}

Like many other SSE works~\cite{BostF17,RaphaelBO17,SunYLS18,Vo20} that focus on search document index, we only present that \textsf{streaming} in \system~ updates real/bogus document identifiers via $(w,id)$ pairs to the index map EDB. 
Real and padding documents containing these pairs can be uploaded separately by \textit{P} to \textit{S} via other encryption algorithms. 
In \textsf{search}, once the identifiers of real/bogus documents matching the query keyword are uncovered, \textit{S} retrieves the corresponding documents and returns them to \textit{C}.
Note that, we omit presenting the physical document management in \textit{S} in the rest of the paper since it does not affect the security of \system~against the non-persistent and persistent adversaries as proposed in Section~\ref{subsec:attackmodels}.

\begin{figure}[!t]
\centering
\includegraphics[width=0.5\textwidth]{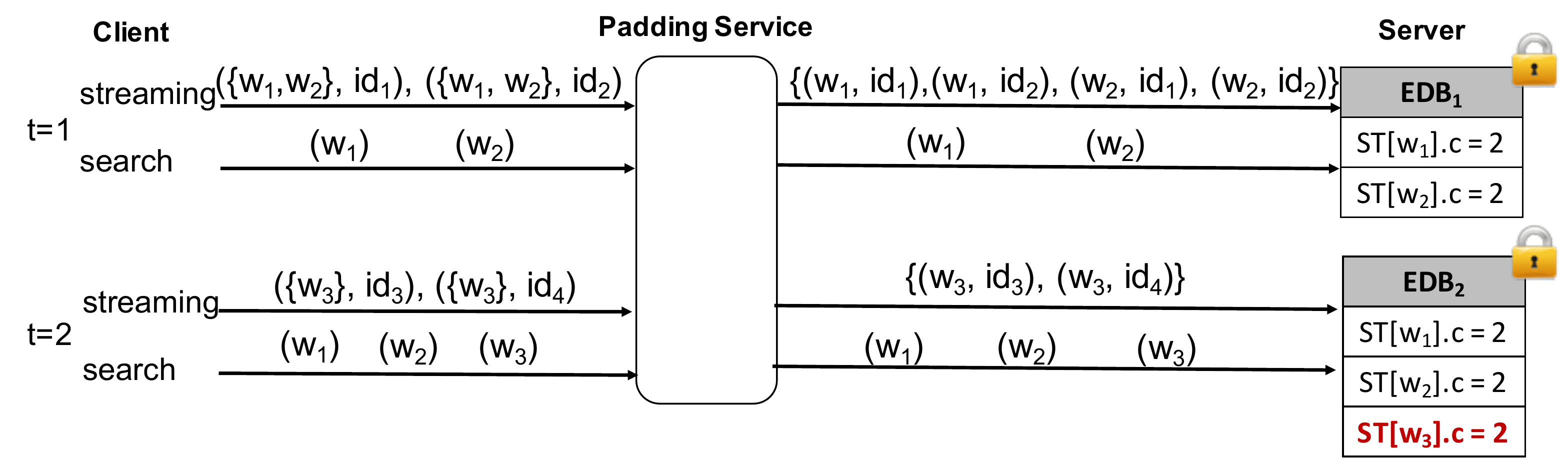} 
\vspace{-20pt}
\caption{Strawman padding against persistent adversary}
\vspace{-15pt}
\label{fig:p_strawman}
\end{figure}

\begin{figure*}
\begin{framed}
\vspace{-10pt}
\begin{multicols}{3}
\underline{\textsf{setup}}~($1^\lambda,\Delta_{stp}$)\\[6pt]
\hspace*{7pt}\textit{Client:}\\
1. Transfer dataset $\Delta_{stp}$ to \textit{P};\\
\hspace*{7pt}\textit{Padding Service:}\\
2: $\{k_1, k_2\}\xleftarrow{\$}\{0, 1\}^\lambda$;\\
3: Initialise a map $ST^\prime$ and a tuple $T$;\\  
4: Run ${\tt Setup}(\Delta_{stp})$ {\normalsize (see Section~\ref{subsec:setup})};\\
\hspace*{7pt}\textit{Server:}\\
5: Initialise an index map EDB;\\

\underline{\textsf{streaming}}~($\textsf{in}=\{(\textsf{doc},id)\}$)\\[6pt]
\hspace*{7pt}\textit{Client:}\\
1. Transfer $\textsf{in}$ to \textit{P};\\[5pt]
\hspace*{7pt}\textit{Padding Service:}\\
2: Parse $\textsf{in}$ to $M=\{(w,id)\}$;\\
//cache and check padding constraints\\
3: $V \leftarrow {\tt PaddingCheck}(M)$ (Algorithm~\ref{alg:paddingstrategy});\\
//if there is no real/bogus pairs returned by \textit{Padding Controller}\\
4: \textbf{if} $V=\{\emptyset\}$ \textbf{then}\\
5:\hspace*{9pt} return;//not sending to \textit{Server}\\
6: \textbf{else:} //perform  encryption\\  
7: \textbf{foreach} $w$ $\textbf{in}$ $V$ \textbf{do}\\
8: \hspace*{9pt}$k_e\xleftarrow{\$}\{0, 1\}^\lambda$;\\
\columnbreak

9: \hspace*{13pt}$k_w\leftarrow F(k_1,w)$;\\
10: \hspace*{9pt}$k_{id}\leftarrow F(k_{2},w)$;\\
//let $b$ is the current batch\\
11: \hspace*{9pt}\textbf{if} $ST^\prime[w]$ $\neq$ $\perp$ \textbf{then}\\
12: \hspace*{20pt}$(st_{w_{(b-1)}},c_{w_{(b-1)}})\leftarrow $ $ST^\prime[w]$;\\[2pt]
13:\hspace*{9pt}\textbf{else}\\
14:\hspace*{20pt}$st_{w_0}\xleftarrow{\$}\{0, 1\}^\lambda$, $c_{w_0} \leftarrow$ 0;\\
15:\hspace*{9pt}\textbf{end if}\\
16:\hspace*{9pt}$st_{w_{b}} \leftarrow F (k_e,st_{w_{(b-1)}});$\\
17:\hspace*{9pt}$i\leftarrow0$;\\
18.\hspace*{9pt}\textbf{foreach} $id$ that matches $w$ \textbf{do}\\
19:\hspace*{15pt}$u\leftarrow H_1(F(st_{w_{b}},i) \parallel k_w);$\\
20:\hspace*{15pt}$v\leftarrow H_2(F(st_{w_{b}},i) \parallel k_{id})\oplus id$;\\ 
21:\hspace*{15pt}$T\leftarrow T$ $\cup$ $(u,v)$;\\
22:\hspace*{15pt}$i\leftarrow i+1;$\\
23:\hspace*{9pt}\textbf{end foreach} \\
24:\hspace*{9pt}$c_{w_b}\leftarrow i$;\\ 25:\hspace*{9pt}$ST^\prime[w]\leftarrow(st_{w_{b}},{c_{w_b}})$;\\
26:\hspace*{9pt}$u_{w_b}\leftarrow H_1(F(st_{w_{b}},{c_{w_b}}) \parallel k_w);$\\
27:\hspace*{9pt}$v_{w_b}\leftarrow H_2(F(st_{w_{b}},{c_{w_b}}) \parallel k_{id}) \\ 
					\hspace*{50pt}\oplus (k_e \parallel c_{w_{(b-1)}})$;\\
28:\hspace*{9pt}$T\leftarrow T$ $\cup$ $(u_{w_b},v_{w_b})$;\\
29: \textbf{end foreach}\\
30: Send $T$ to Server;\\[3pt]
\hspace*{7pt}\textit{Server:}\\
31: \textbf{foreach} $(u,v)$  $\textbf{in}$ $T$ \textbf{do}\\
32:\hspace*{9pt} EDB$[u] = v $;\\
33: \textbf{end foreach}\\
\columnbreak

\underline{\textsf{search}}$(w)$\\[5pt]
\hspace*{7pt}\textit{Client:} //receive $ST^\prime[w]$ from \textit{P}\\
1: \textbf{if} $ST^\prime[w]$ $\neq$ $\perp$ \textbf{then}\\
2: \hspace*{9pt}$k_w\leftarrow F(k_1,w)$;\\
3: \hspace*{9pt}$k_{id}\leftarrow F(k_{2},w)$;\\
4: \hspace*{9pt}$(st_{w_{b}},c_{w_{b}})$$\leftarrow$ $ST^\prime[w]$;\\
5: \hspace*{8pt}Send {\normalsize $(k_w, k_{id},st_{w_{b}},c_{w_{b}})$} to \textit{Server}\\
6: \textbf{else}\\ 
7: \hspace*{9pt}Search $w$ in \textit{P}, return $R$;\\
8: \textbf{end if}\\
\hspace*{7pt}\textit{Server:}\\
9: \hspace*{3pt}$R \leftarrow \emptyset$, $st_{i} \leftarrow$ $st_{w_{b}}$, $c_{i} \leftarrow$ $c_{w_{b}}$;\\
10: \textbf{while} $c_i \neq 0 $ \textbf{do}\\
11: \hspace*{9pt}\textbf{for} $j=0$ to $(c_i-1)$ \textbf{do}\\
12: \hspace*{16pt}$u\leftarrow H_1(F(st_i,j) \parallel k_w)$\\
13: \hspace*{16pt}$v \leftarrow \textnormal{EDB}[u]$\\
14: \hspace*{16pt}$id\leftarrow$  $v \oplus H_2(F(st_i,j) \parallel k_{id});$\\
15: \hspace*{16pt}$R\leftarrow R$ $\cup$ $(u,v)$;\\
16: \hspace*{9pt}\textbf{end for} \\
17: \hspace*{9pt}$u_k\leftarrow H_1(F(st_i,c_i) \parallel k_w)$;\\
18: \hspace*{9pt}$v_k \leftarrow \textnormal{EDB}[u_k]$;\\
19: \hspace*{9pt}$ (k_i \parallel c_{i-1})\leftarrow$  $v_k \oplus \\
            \hspace*{70pt} H_2(F(st_i,c_i) \parallel k_{id})$\\
20: \hspace*{9pt}$st_{i-1} \leftarrow F^{-1}(k_i,st_i)$\\
21: \hspace*{9pt}$st_{i} \leftarrow st_{i-1}, c_i \leftarrow c_{i-1};$\\
22: \textbf{end while}\\
23: send $R$ to \textit{Client}\\ 

\end{multicols}
\vspace{-30pt}
\end{framed}
\vspace{-10pt}
\caption{Forward-private SSE protocols in ShieldDB.
%
%
In \textsf{streaming}, $k_e$ is an ephemeral key generated for batch insertion.
\textit{P} maintains the master state $ST^\prime[w]=(st_{w_b}, {c_{w_b}})$ for each keyword $w$, where 
$st_{w_b}$ is the master key to derive entries for $(w,id)$ pairs in the same latest batch $b$, and
${c_{w_b}}$ presents the result length $w$ (i.e., the number of real and bogus $id$s containing $w$) in that batch $b$. 
The result length of $w$ in the previous batch ${c_{w_{(b-1)}}}$ is embedded in $v_{w_b}$ (see \textsf{streaming} protocol line 26).
In \textsf{search}, ${st_i}$ and $c_i$ present the state key and the result length of $w$ in batch $i$.
$H_1$ and $H_2$ are hash functions, and $F$ is AES cipher.}
\label{fig:shieldDB_protocols}
\vspace{-15pt}
\end{figure*}

\vspace{-10pt}
\subsection{Attack Models}
\label{subsec:attackmodels}

\system~mainly considers a passive adversary who monitors the server \textit{S}'s memory access and the communication between the \textit{Server} and other participants. 
Following the assumption of the count attack~\cite{CashGP15}, the adversary has access to the background knowledge of the dataset and aims to exploit this information with the access pattern in \textsf{search} operations to recover query keywords. 
In this paper, we extend this attack model to the dynamic (streaming) setting.

Before elaborating the attack models, we define the streaming setting.
In our system, \textsf{streaming} performs batch insertion on a collection of encrypted $(w,id)$ pairs to \textit{S}. 
Giving a number of continuous \textsf{streaming} operations, encrypted collections are added to a sequence over time. 
Accordingly, \textit{S} orders the sequence of data collections by the timestamp.  
We define the gap between any two consecutive timestamps is a time interval $t$, and \textit{C} is allowed to search at any time interval. 
Note that at a given $t$, \textit{S} can only perform \textsf{search} operations against the collections that have been completely inserted to EDB.  

In the dynamic setting, we observe two new attack models, which we refer to as non-persistent and persistent adversaries, respectively.


\noindent \textbf{Non-persistent adversary}: This adversary controls \textit{S} within one single arbitrary time interval $t_{i}$, where $i$ is a system parameter that monotonically increases and $i \geq 0$. During $t_{i}$, she observes query tokens that \textit{C} issued to \textit{S}, and the access patterns returned by \textit{S}. She knows the accumulated (not separate) knowledge of the document sets inserted from $t_{0}$ to $t_{i}$. 

\noindent \textbf{Persistent adversary}:   This adversary controls \textit{S} across multiple arbitrary time intervals, for example, from $t_{0}$ to $t_{i}$. She persistently observes query tokens and access patterns at those intervals, and knows the separate knowledge of the document sets inserted from $t_{0}$ to $t_{i}$.


For both attack models, \textit{S} cannot obtain the query tokens against the encrypted data collections streamed in the current time interval. It is enforced by our streaming operation with forward privacy (see Section~\ref{subsec:optimisation}). 

\vspace{2pt}
\noindent \textbf{Strawman padding service against the adversaries}: We note that a basic \textit{Padding Service} \textit{P} that only maintains one single cache for batch \textsf{streaming} cannot mitigate the proposed adversaries as presented in Figures~\ref{fig:np_strawman} and \ref{fig:p_strawman}.

In Figure~\ref{fig:np_strawman}, we show that the non-persistent adversary, capturing query tokens of $w_1$, $w_2$, and $w_3$, and their corresponding access patterns (i.e., result lengths $ST[w_1].c$, $ST[w_2].c$, and $ST[w_3].c$) at time $t=2$, can uncover which tokens used for what keywords if she has the corresponding background knowledge of DB at time $t=2$ (i.e., DB$_2$). The reason is due to the unique result lengths introduced in EDB at time $t=2$ (i.e., EDB$_2$) when \textit{P} adds bogus pairs to equalise the number of pairs for keywords sent to EDB during every \textsf{streaming} operation.

In Figure~\ref{fig:p_strawman}, we demonstrate that the persistent adversary can detect when new keywords are inserted in EDB. For example, she might know the states of the database DB at time $t=1$ and $t=2$ as her background knowledge, and the query results of $w_3$ in EDB$_1$ and EDB$_2$ are different. Then, she knows the occurrence of a new keyword $w_3$ is introduced in EDB$_2$ at $t=2$. Then, she is able to identify the query token of $w_3$ during \textsf{search} at $t=2$.

\vspace{2pt}
\noindent \textbf{Real-world implication of the adversaries}: We note that the proposed attack models are new for leakage-abuse attacks, which have not been investigated and formalised in any of the prior works~\cite{IslamKK12,CashGP15,BlackstoneKM20}. We stress that non-persistent adversary could be any external attackers, e.g., hackers or organised cyber criminals. They might compromise the server at a certain time window. We also assume that this adversary could obtain a snapshot of the database via public channels, e.g., a prior data breach~\cite{NaveedKW15}.  
Because the database is changed dynamically, the snapshot might only reflect some historical state of the database. 
On the contrary, the persistent adversary is more powerful and could be database administrators or insiders of an enterprise. They might have long term access to the server and could obtain multiple snapshots of the database via internal channels.

\vspace{2pt}
\noindent \textbf{Other threats}: Apart from the above adversaries, \system~ considers another specific rational adversary~\cite{ZhangKP16} who can inject documents to compromise query privacy. As mentioned, this threat can be mitigated via forward privacy SSE. Note that \system~currently does not address an active adversary who sabotages the search results. 

\vspace{-5pt}
\section{Design of ShieldDB} 
\label{sec:construction}

In this section, we present the detailed design of \system~ in \textsf{Setup}, \textsf{Stream}, and \textsf{Search}. 
%
%
Then, we present some advanced features of \system~ to further improve the security and efficiency. 

\vspace{-10pt}
\subsection{Setup}
\label{subsec:setup}

We consider $\Delta_{stp}=\{w_1,w_2,...,w_l\}$ is the training dataset for the system. During ${\tt Setup}(\Delta_{stp})$, \textit{P} invokes \textit{Cache Controller} to initialise the cache for batch insertion, and \textit{Padding Controller} to generate bogus documents for padding.

To reduce padding overhead, \system~implements cache management in a way that it groups keywords with similar frequencies together and performs padding at each individual keyword cluster. We denote  $L=\{L_1, \cdots, L_m\}$ as caching clusters managed by \textit{Cache Controller}, where $m$ is the number of cache clusters. This approach is inspired from existing padding countermeasures in the static setting~\cite{CashGP15,BostF17}. 
The idea of doing this in a static database is intuitive; the variance between the result lengths of keywords with similar frequencies is small, which can minimize the number of bogus entries added to the database. 
We note that it is also reasonable in the dynamic setting, where the keyword frequencies in specific applications can be stable in the long run. If a keyword is popular, it is likely to appear frequently during \textsf{streaming}, and vice versa.  
Therefore, we assume that the existence of the training dataset, where the keyword frequencies are close to the real ones during \textsf{streaming} is reasonble (see Section~\ref{subsec:evaluation} for that distribution evaluation). 
%
%
We further suggest alternative training data collection approaches in Section~\ref{sec:discussion}. 
%

\vspace{-2pt}
%

Given $\Delta_{stp}$, \textit{Cache Controller} partitions keywords based on their frequencies by using a heuristic algorithm. The objective function in Eq.~\ref{eq:1}, such that 
the clustering can be formed as $\left[\left(w_1,\ldots,w_i\right),\left(w_{i+1},\ldots,w_j\right),\ldots,\left(w_k,\ldots,w_l\right)\right]$.
We note that the minimum size of each group $\alpha$ is subjected to $\alpha \geq 2$. 
For security, the keyword frequency in each cluster after padding should be the same, i.e., the maximum one, and thus \textit{Cache Controller} computes the padding overhead $\gamma$ as follows: 
\begin{equation}
\label{eq:1}
\begin{split}
\gamma = &\left({i*{f_{{w_i}}} - \sum\limits_{t = 1}^i {{f_{{w_t}}}} } \right)  + \left( {(j - i)*{f_{{w_j}}} - \sum\limits_{t = i + 1}^j {{f_{{w_t}}}} } \right) + \\
&\ldots+\left({(l - k - 1)*{f_{{w_l}}}-\sum\limits_{t=k}^l {{f_{{w_t}}}}}\right)
\end{split}
\raisetag{20pt}
\end{equation}
This algorithm iterates evaluating $\gamma$ for every combination of the partition. 
We denote by $m$ the number of clusters. 
After that, the \textit{Cache Controller} allocates the capacity of the cache based on the  aggregated keyword frequencies of each cluster, i.e., $|L| \sum\limits_{t = 1}^i {{f_{{w_t}}}}$, $|L| \sum\limits_{t = i + 1}^j {{f_{{w_t}}}}$, $\ldots$, $|L|  \sum\limits_{t=k}^l {{f_{{w_t}}}}$, where $|L|$ is the total capacity assigned for the local cache. 
We denote by $L_i.threshold()$ the function that outputs the caching capacity of cluster $L_i$.

After that, \textit{Padding Controller} initialises a bogus dataset $B$ with size $|B|$, where the number of bogus keyword/id pairs for each keyword $w_i$ is determined via the frequency, i.e., $|B|  (f_{w} - f_{w_i})$, where $f_{w}$ is the maximum frequency in the cluster of $w_i$.  
The reason of doing so is that it still follows the assumption in cache allocation. 
If the keyword is less frequent in a cluster, it needs more bogus pairs to achieve the maximum result length after padding, comparing other keywords with higher frequency, and vice versa. 
Then the controller generates bogus index pairs. Once the bogus pairs for a certain keyword $w_{i}$ is run out, the controller is invoked again to generate padding for it through the same way. 

\begin{algorithm}[!t]
    \DontPrintSemicolon
    \SetKwInOut{Input}{Input}
    \SetKwInOut{Output}{Output}
    \textbf{function} ${\tt PaddingCheck}$()\\
    \Input{$M=\{(w,id)\}$: entries for streaming,\\ $\{L_1, \cdots, L_m\}$: cache clusters, \\ $ST$: a map that tracks keyword states, \\ $B$: bogus document set; \\
    $mode$: padding mode (\emph{high} or \emph{low});\\
    }
    \Output{$V$: a set of real and bogus entries}

push entries in $M$ to $\{L_1, \cdots, L_m\}$;\\
$V \leftarrow \{\emptyset\}$;\\
	\If{padding against non-persistent adversary} 
	{	\For{cluster $L_i$ $\in$ $\{L_1, \cdots, L_m\}$}
		{
			\If{$L_i.capacity() \geq L_i.threshold()$} 
			{
				\For{$w \in L_i $}
		        {
    				\If {$ ST[w].flag = false $}
    				{
    				  skip padding for $w$ when executing $\tt{PaddingByMode}()$; //not occurred yet\\
    				}
    			}
				$M_i \leftarrow {\tt PaddingByMode}(L_i,ST,B,mode)$;\\	
				add $M_i$ to $V$;\\ 		
			}
		}
	}
	 \If{padding against persistent adversary} 
	 {
		\For{cluster $L_i$ $\in$ $\{L_1, \cdots, L_m\}$}
		{
			\uIf{$L_i$.firstBatch=true \& $ST[w].flag = true$ \textnormal{for} $\forall w\in L_{i}$ } 
			{

				$M_i \leftarrow {\tt PaddingByMode}(L_i,ST,B,mode)$;\\	
				add $M_i$ to $V$;\\ 
			}
			\ElseIf{$L_i.capacity() \geq L_i.threshold()$}
			{

				$M_i \leftarrow {\tt PaddingByMode}(L_i,ST,B,mode)$;\\	
				add $M_i$ to $V$;\\	
			}		
		}
	 }
return $V$;\\
\caption{Padding strategies}
   \label{alg:paddingstrategy}
\end{algorithm}

\begin{algorithm}[!t]
    \DontPrintSemicolon
    \SetKwInOut{Input}{Input}
    \SetKwInOut{Output}{Output}
    \textbf{function} ${\tt PaddingByMode}$($L_i,ST,P,mode$)\\
    \Input{$L_i$: cluster for padding, \\ $ST$: a map that tracks keyword states, \\ $B$: bogus dataset; \\
    $mode$: padding mode (\emph{high} or \emph{low})}
    \Output{$M_i$: a set of real and bogus entries}

$M_i \leftarrow \emptyset$;\\
 $st_{max}$ $\leftarrow$ $\max \{ST[w].c\}$ for $\forall w\in L_{i}$;\\

let $c_w$ is the length of the currently matching list of $w$ cached in $L_i$; \\
	\eIf{$mode = high$} 
	{
	    let $c_{max}$ is max$\{c_w,\forall w \in L_i\}$;\\
		$C \leftarrow st_{max} + c_{max}$;\\
		
		\For{ $w \in L_i ~\textnormal{with}~ST[w].flag=true$ } 
		 {
		 add $(C - c_w)$ bogus entries from $B[w]$ to $M_i$; \\
		 add all $c_w$ cached entries of $w$ in $L_i$ to  $M_i$; \\
		 $ST[w].c \leftarrow  C$; \\
		 }
	}
	{ // $mode$ is $low$\\
	    let $c_{min}$ is min$\{c_w >0,\forall w \in L_i\}$; \\
		$C \leftarrow st_{max} + c_{min}$;\\
		\For{ $w \in L_i ~\textnormal{with}~ST[w].flag=true$ } 
		 {
		 $m \leftarrow C - ST[w].c;$\\ 
    		 \eIf{$m > c_w$}
    		 {
    		 add $(m-c_w)$ bogus entries from $B[w]$ to $M_i$; \\
    		 add all $c_w$ cached entries of $w$ in $L_i$ to  $M_i$ ; \\
             }
             {
             //do not add bogus entries for $w$;\\
             put $(m)$ cached entries of $w$ in $L_i$ to $M_i$;\\
             //the remaining $(c_w - m)$ entries of $w$ are still cached in $L_i$;\\
             }
        $ST[w].c \leftarrow  C$; \\
 	    }
	}
return $M_i$;\\  

\caption{Padding modes}
   \label{alg:paddingmode}
\end{algorithm}

\vspace{2pt}
\noindent \textbf{Remark}: We assume that the distribution of the sample dataset is close to the one of the streaming data in a running period.
We acknowledge that it is non-trivial to obtain an optimal padding overhead in the dynamic setting due to the variation of streaming documents in different time intervals. Nevertheless, if the distribution of the database varies during the runtime, the \textsf{setup} can be re-invoked. Namely, keyword clustering algorithm can be re-activated based on the up-to-date streaming data (e.g., in a sliding window), and the cache can be re-allocated. Additionally, our proposed \textsf{re-encryption} operation can further reduce the padding overhead (see Section~\ref{subsec:optimisation}) if the streaming distribution only differs on particular keyword clusters.
We discuss the distribution difference detection in Sections~\ref{subsec:evaluation} and~\ref{sec:discussion}. 
We also note that there are  applications and scenarios where the distribution does not vary much, like IoT streaming data for environment sensors. In such applications, the range of numbers/indicators are already specified by the vendors.

\subsection{Padding Strategies}
\label{subsec:padding}

During \textsf{streaming}, documents are continuously collected and parsed as  $M=\{(w,id)\}$ in \textit{P}.
Then, \textit{P} executes ${\tt PaddingCheck}(M)$ to cache and check padding constraints.
In details, these $(w,id)$ pairs are cached at their corresponding clusters by \textit{Cache Controller}. 
Once a cluster $L_i$ is full, \textit{Padding Controller} adapts the corresponding padding strategy to the targeted adversary, encrypts and inserts all real and bogus pairs to EDB in a batch manner.   
We elaborate on the padding strategies against the non-persistent and persistent adversaries, respectively. The details are given in Algorithms~\ref{alg:paddingstrategy} and~\ref{alg:paddingmode}.
%
%

\vspace{2pt}
\noindent \textbf{Padding strategy against the non-persistent adversary}: Recall that this adversary controls \textit{S} within a certain time interval $t$. 
From the high level point of view, an effective padding strategy should ensure that all keywords occurred in EDB at $t$ do not have unique result lengths. 
There are two challenges to achieve this goal. 
First, $t$ can be an arbitrary time interval. Therefore, the above guarantee needs to be held at any certain time interval. 
Second, not all the keywords in the keyword space would appear at each time interval. It is non-trivial to deal with this situation to preserve the security of padding.

\begin{figure}
\centering
\includegraphics[width=0.5\textwidth]{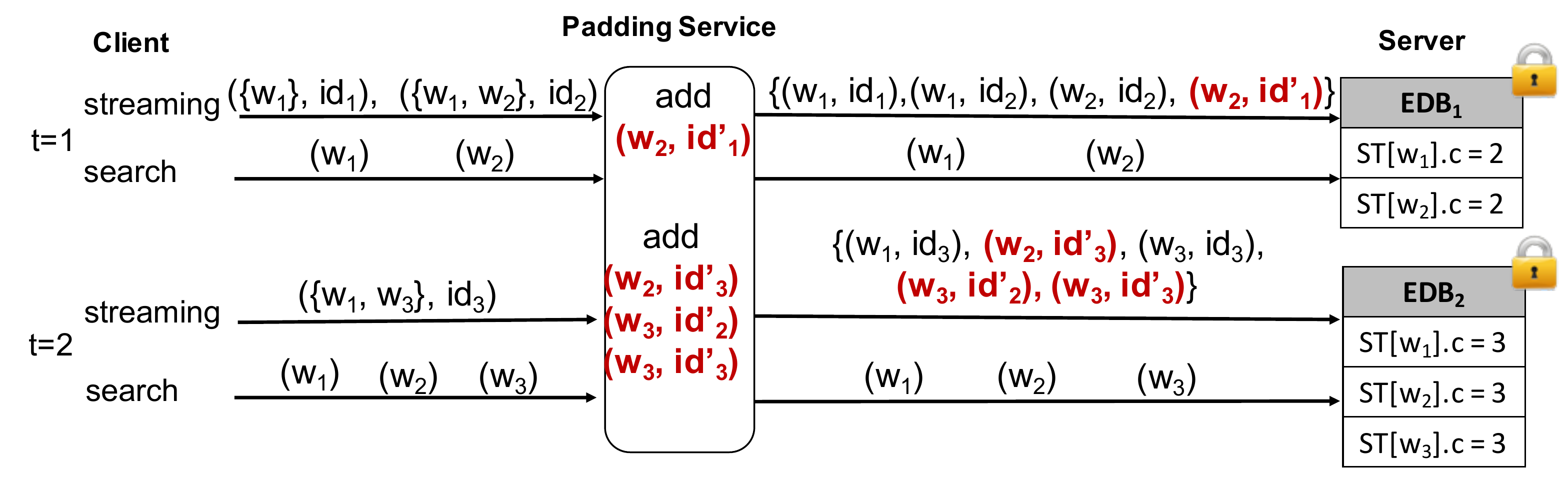} 
\vspace{-20pt}
\caption{\textit{High} mode padding against non-persistent adversary}
\vspace{-15pt}
\label{fig:np_shieldDB}
\end{figure}

To address the above challenges, \system~programs \textit{Padding Controller} to track the states of keywords over the time intervals from the beginning. Specifically, each keyword state $ST[w]$ includes two components, a flag $ST[w].flag$ that indicates whether the keyword has existed before in the streamed documents, and a counter $ST[w].c$ that presents the number of total real and bogus $(w,id)$ pairs already uploaded in EDB of the keyword $w$.
Note that $ST[w].flag=true$ is kept permanently once $w$ has existed in the documents streamed to the server.
\textit{Padding Controller} only pads the keywords in a cluster $L_i$ if the number of cached real $(w,id)$ pairs of the cluster, denoted by $L_i.capacity()$, exceeds $L_i.threshold()$ defined in  \textsf{setup} (see Algorithm~\ref{alg:paddingstrategy} line 6).

Based on the states of keywords, \textit{Padding Controller} performs the following actions.
If the keyword has not existed yet, the controller will not pad it even its cluster is full (see Algorithm~\ref{alg:paddingstrategy} line 8). 
The reason is that the adversary might also know the information of keyword existence. 
If \textit{C} queries a keyword which does not exist, \textit{S} should return an empty set. 
Otherwise, the adversary can identify the token of this keyword if padded. 
Accordingly, only when a keyword $w$ appears at the first time (i.e., $ST[w].flag=true$), padding over this keyword will be invoked  (see Algorithm~\ref{alg:paddingmode} lines 8 and 17). 
Once $ST[w].flag=true$,  this keyword will always get padded later, as long as the cache of its cluster $L_i.capacity() \geq L_i.threshold()$, no matter it exists in a certain time interval or not. The padding ensures that all existing keywords in the cluster always have the same result length at any following time interval. 

\vspace{2pt}
\noindent \textbf{Padding strategy against the persistent adversary}: Recall that this adversary can monitor the database continuously and obtain multiple references of the database across multiple time intervals. 
Likewise, the padding strategy against the persistent adversary should ensure that all keywords have no unique access pattern in all time intervals from the very beginning. 
However, directly using the strategy against the non-persistent adversary here does not address the leakage of keyword existence. 

To address this issue,  \textit{Padding Controller} is programmed to enforce another necessary constraint to invoke padding. That is, all keywords in the cluster at the first batch have to exist before streaming.
Formally, we let $L_i.firstBatch$ be the constraint that evaluates the existence of all predefined keywords in $L_i$. Then, $L_i.firstBatch=true$ implies $ST[w].flag = true$ for $\forall w \in L_i$. 
The constraint remains $false$ if there is $\exists w_j \in L_i$, $ST[w_j]=false$.
As a trade-off, \textit{Cache Controller} has to hold all the pairs in the cluster even the cache is full, if there are still keywords yet to appear (see Algorithm~\ref{alg:paddingstrategy} line 18). 
%
%
For subsequent batches of the cluster, the padding constraint follows the same strategy for the non-persistent adversary (see Algorithm~\ref{alg:paddingstrategy} line 21).

\noindent \textbf{Padding modes}: \system~implements two modes for padding, i.e., \emph{high} and \emph{low} modes. These two modes both are applicable to the above two padding strategies (see $mode$ in Algorithm~\ref{alg:paddingstrategy}). The padding mechanism of these modes are described in Algorithm~\ref{alg:paddingmode}. In the \emph{high} mode, once the constraint for the  cache of a cluster is met, the keywords to be padded have the maximum result length of keywords in this cluster (see Algorithm~\ref{alg:paddingmode} lines 9-11). Accordingly, the cache can be emptied since all entries are sent to \textit{Padding Controller} for streaming. 
On the contrary, the \emph{low} mode is invoked in a way that the keywords to be padded have the minimum result length of keywords in this cluster. Therefore, some entries of keywords might still be remained in the cache (see Algorithm~\ref{alg:paddingmode} lines 24-25). Yet, this mode only introduces necessarily minimum padding for keywords which do not occur in random time intervals. 
The two modes have their own merits. The \emph{high} mode consumes a larger amount of padding and execution time for padding and encryption, but it reduces the load of cache in \textit{P}. In contrast, the \emph{low} mode introduces relatively less padding overhead but heavier load of \textit{P}.

\begin{figure}
\centering
\includegraphics[width=0.5\textwidth]{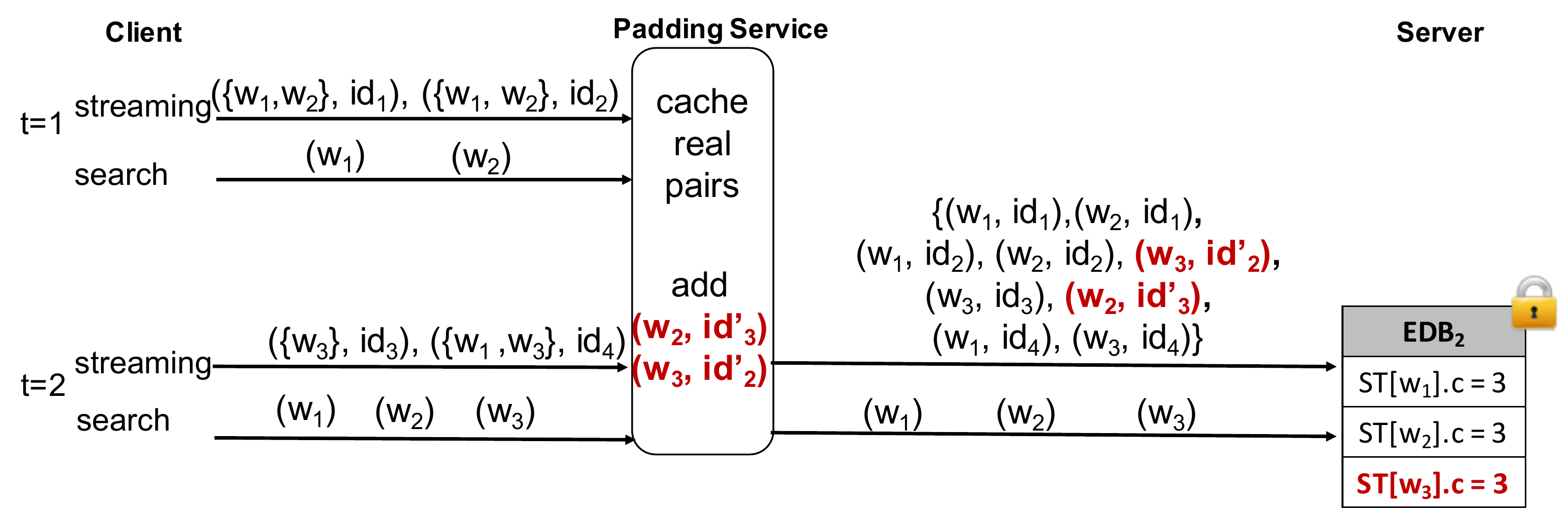} 
\vspace{-20pt}
\caption{\textit{High} mode padding against persistent adversary}
\vspace{-15pt}
\label{fig:p_shieldDB}
\end{figure}

Figure~\ref{fig:np_shieldDB} demonstrates the padding strategy against the non-persistent adversary by using \textit{high} padding mode.
Given a cluster $L_j$ containing three keywords $w_i, \forall i \in [1,3]$, \textit{P} tracks their keyword state $ST[w_i].c$ and $ST[w_i].flag$ and applies padding when real cached pairs exceed the capacity of the cluster.
Then, \textit{P} adds bogus pairs to ensure they have the same result length in EDB at $t=1$ and $t=2$. 
Note that, there is no padding applied for $w_3$ at time $t=1$ due to $ST[w_3].flag=false$ at that time. 
At $t=2$, \textit{P} pads all the keywords to the maximum result length among them in the cluster (i.e., $ST[w_i]=3, \forall i \in [1,3]$).

In Figure~\ref{fig:p_shieldDB}, we demonstrate the padding strategy against the persistent adversary by using \textit{high} padding mode for a cluster $L_j=\{w_i\}, \forall i \in [1,3]$.
\textit{P} only performs padding and inserts encrypted real/bogus entries to EDB when all keywords in $L_j$ have appeared (i.e., waiting for $w_3$ occurs at $t=1$).
Querying any $w_i$ of the cluster prior this time does not make the client \textit{C} send query tokens to \textit{S}.
The reason is because \textit{C} only receives $ST^\prime[w_i] =\perp$ sent by \textit{P} to do \textsf{search}.
We note that \textit{P} only updates $ST^\prime[w_i]$, (i.e., keyword state of $w_i$ for encryption) (see $ST^\prime[w]$ in \textsf{streaming} and \textsf{search} in Figure~\ref{fig:shieldDB_protocols}) when $w_i$ in the cluster is ready for encryption after padding applied.
We note that $L_j.firstBatch=true$ once all keywords in $L_j$ have been appeared. %
Then, subsequent batches at $t>1$ of $L_j$ does not need to check keyword occurrence.
Instead, it follows the padding strategy addressing the non-persistent adversary.
The reason is because $w_i$ is always padded in the following batches once it had appeared at the first time.
The strategy aims to ensure there is no new keyword introduced at any random time interval to address the persistent adversary.

\begin{table}
\caption{Time complexity of \textit{Padding Service} and \textit{Server}}
\begin{center}
\small
\begin{tabular}{|c|c|c|}
\hline
\multicolumn{2}{|c|}{\textit{Padding Service}} & \multicolumn{1}{c|}{\textit{Server}} \\ \hline 
{\textbf{padding}}&{\textbf{encryption}} &  \multirow{2}{*}{\textbf{update/search}} \\ 
(\textsf{streaming})& (\textsf{streaming})&  \\ \hline
$|B|(f_w-f_{w_i})$ &  $\mathcal{O}(n_r + n_b)$ &  $\mathcal{O}(n_r + n_b)$ \\ \hline
\end{tabular}%
\end{center}
\vspace{-15pt}
\label{table:complexity}
\end{table}

\vspace{2pt}
\noindent \textbf{Complexity analysis of padding}:
We note that during \textsf{setup}, the \textit{Padding Service} initialises a padding (bogus) dataset \textit{B} with size of $|B|$. Then, the total number of bogus pairs used for the keyword $w_i$ during the \textsf{streaming} phase is $|B|(f_w-f_{w_i})$, where  $f_w$ is the maximum frequency in the cluster of $w_i$ found in the \textsf{setup}. The asymptotic complexity to encrypt real/bogus pairs of $w_i$ in the \textsf{streaming} is $\mathcal{O}(n_r + n_b)$, where $n_r$ and $n_b$ present the total number of real and bogus pairs of $w_i$, respectively. In \textsf{streaming} (resp. \textsf{search}), upon receiving the update (resp. query) token(s) of $w_i$, the \textit{Server} inserts (resp. retrieves) the entries (resp. search result) based on encrypted labels of $w_i$ from the map EDB (i.e., $\mathcal{O}(n_r + n_b)$) by following the forward-private SSE protocols (see Figure~\ref{fig:shieldDB_protocols}). The performance is summarised in Table~\ref{table:complexity}.

\vspace{2pt}
\noindent \textbf{Security guarantees}:
Our padding countermeasures ensure that no unique access pattern exists for keywords which have occurred in EDB. 
For the persistent adversary, the padding countermeasure also ensures that the keyword occurrence is hidden across multiple time intervals. 
Note that padding not only protects the result lengths of queries, but also introduces false counts in keyword co-occurrence matrix, which further increases the efforts of the count attack. 
Regarding the formal security definition, we follow a notion recently proposed by Bost et al.~\cite{BostF17} for SSE schemes with padding countermeasures. 
This notion captures the background knowledge of the adversary and formalises the security strength of padding. That is, given any sequence of query tokens, it is efficient to find another same-sized sequence of query tokens with identical leakage. 
We extend this notion to make the above condition hold in the dynamic setting in Section~\ref{subsec:security_shielddb}.

\vspace{4pt}
\noindent \textbf{Remark}: Our padding strategies are different from the approach proposed by Bost et al.~\cite{BostF17}, which merely groups keywords into clusters and pads them to the same result length for a static database. 
Directly adapting their approach for different batches of incoming documents will fail to address persistent or even non-persistent adversaries. 
The underlying reason is that the above approach treats each batch individually, while the states of database are  accumulated. 
Effective padding strategies in the dynamic setting must consider the accumulated states of the database so that the adversaries can be addressed in arbitrary time intervals.


\vspace{-7pt}
\subsection{Other Features}
\label{subsec:optimisation}

\system~provides several other salient features to enhance its security, efficiency, and functionality. 

\vspace{2pt}
\noindent \textbf{Forward privacy}: In \textsf{streaming} and \textsf{search}, \system~ realises the notion of forward privacy~\cite{Bost16,SongDYXZ17} to protect the newly added documents and  mitigate the injection attacks~\cite{ZhangKP16}. 
In particular, our system customises an efficient SSE scheme with forward privacy~\cite{SongDYXZ17} to our context of batch insertion.
The detailed algorithm for encryption and search can be found in Figure~\ref{fig:shieldDB_protocols}.
Our forward-private scheme is built on symmetric-key based trapdoor permutation and is faster than the public-key based solution~\cite{Bost16}. 
The ephemeral key $k_e$ of permutation is embedded inside the index entry to recover the state $(st_{w_{(b-1)}}, c_{w_{(b-1)}})$ of the previous entries in batch $(b-1)$ (see Figure~\ref{fig:shieldDB_protocols}, lines 16-27 in \textsf{streaming}, and lines 17-21 in \textsf{search} protocol).
To reduce the computation and storage overhead, we link a master state $ST^\prime[w]=(st_{w_b},c_{w_b})$ to a set of entries with the same keyword in the batch $b$ (see \textsf{streaming} at lines 19-20).

Upon receiving $ST^\prime[w]$ of the query keyword $w$ sent from \textit{P}, \textit{C} generates a query token and sends to \textit{S} (see Figure~\ref{fig:shieldDB_protocols} line 5 in \textsf{search} protocol). We note that $ST^\prime[w]$ is different from the state $ST[w]$ used for padding in \textit{Padding Controller}.
The benefit of our forward-private design is that \textit{S} can be enforced to perform \textsf{search} operations over the completed batches. The batches which are still transmitted on the fly cannot be queried without the latest keyword state $ST^\prime[w]$ from \textit{C}.

\vspace{2pt}
\noindent \textbf{Re-encryption and deletion}: \system~also implements the \textsf{re-encryption} operation. This operation is periodically conducted over a certain keyword cluster. \textit{Padding Service} \textit{P} first fetches all entries in this cluster stored in EDB from \textit{S}. After that, \textit{P} removes all bogus entries and re-performs the padding over this cluster of keywords. All the real and bogus entires are then encrypted via a fresh key, and inserted back to EDB. The benefits of \textsf{re-encryption}  are two-fold: (1) redundant bogus entries in this cluster can be eliminated; and (2) the leakage function can be reset to protect the search and access patterns. 
During \textsf{re-encryption}, \system~can also execute deletion. A list of deleted document ids is maintained at \textit{P}, and the deleted entries are physically removed from the cluster before padding. 

\vspace{2pt}
\noindent \textbf{Cache flushing}: During \textsf{streaming}, the keywords in some clusters might not show up frequently. Even the cache capacity of such clusters is set relatively small, the constraint might still not be triggered very often. 
To reduce the load of the cache at \textit{P} and improve the streaming throughput, \system~develops an operation called \textsf{flushing} to deal with the above ``cold'' clusters. In particular,  \textit{Cache Controller} monitors all the caches of clusters, and sets a time limit to trigger \textsf{flushing}. If a cluster is not full after a period of this time limit, all entries in this cluster will be sent to \textit{Padding Controller}. Note that the padding strategies still need to be followed for security and the \emph{high} mode of padding is applied to empty the cache.

\vspace{-10pt}
\section{Security of ShieldDB}
\label{subsec:security_shielddb}

\system~implements a dynamic searchable encryption scheme $\Sigma = (\textsf{setup}, \textsf{streaming}, \textsf{search})$, consisting of three protocols between a padding service $P$, a storage server $S$, and an querying client $C$.
We note that a database DB$_t = (w_i,id_i)_{i=1}^{|DB_t|}$ is defined as a tuple of keyword and document id pairs with $w_i \subseteq \{0,1\}^*$ and $id_i \in \{0,1\}^l$ at the time interval $t \geq0$.
DB$_t$ presents the accumulated database up to time $t$.
The security of ShieldDB can be quantified via a leakage function $\mathcal{L}=(\mathcal{L}^{Stp},\mathcal{L}^{Stream},\mathcal{L}^{Srch})$. 
It defines the information exposed in \textsf{setup}, \textsf{streaming}, and \textsf{search}, respectively.
The security of~\system ~ensures that it does not reveal any information beyond the one that can be inferred from $\mathcal{L}^{Stp}$, $\mathcal{L}^{Stream}$, and $\mathcal{L}^{Srch}$.
%
%

We adapt the notion of \textit{constrained security}~\cite{BostF17} to formalise the knowledge the information known by the non-persistent and persistent adversaries.
Formally, we  model the fact that the non-persistent adversary knows the database by considering the constraint set $\mathfrak{C}^{\mathcal{DB}_t}=\{C^{\textnormal{DB}_{t}},{\textnormal{DB}_{t}}\in \mathcal{DB}_t\}$, where $\mathcal{DB}_t$ is a computable database set at the time $t$ (see Appendix A-B and A-C).
Then, with the knowledge of ${\textnormal{DB}_t}$, the leakage function of $\Sigma_{NP}$ is formally defined as $\mathcal{L}_{NP}=(\mathcal{L}^{Stp},\mathcal{L}^{Srch},\mathcal{L}^{\alpha-pad})$, where $\mathcal{L}^{Stp}$ and $\mathcal{L}^{Srch}$ reveals the leakage in \textsf{Setup} at $t$ and \textsf{Search} against $\textrm{EDB}_t$, respectively, and the new leakage $\mathcal{L}^{\alpha-pad}$ reveals the minimum size of clusters induced by $\mathcal{L}^{\alpha-pad}$.
We define the {\normalfont Ind$_{SSE,\mathcal{A},\mathcal{L}_{NP}, \mathfrak{C}^{\mathcal{DB}_t},\alpha}$} game for the non-persistent in Definition 5 in the supplementary material.

We state the following theorem regarding the security of \system~ against the adversary as follows (full proof is provided in the supplementary material Section A-C) .
\begin{theorem}\label{theo:theorem_non_persistent}
Let $\Sigma_{NP}$ = {\normalfont(\textsf{Setup,Search})} be our SSE scheme, and $\mathfrak{C}^{\mathcal{DB}_t}$ a set of knowledge constraints. If $\Sigma_{NP}$ is $\mathcal{L}_{NP}$-constrained-adaptively-indistinguishable secure, and  $\mathfrak{C}^{\mathcal{DB}_t}$ is   $(\mathcal{L}_{NP},\alpha)$-acceptable, then $\Sigma_{NP}$ is ($\mathcal{L}_{NP}$,$\mathfrak{C}^{\mathcal{DB}_t},\alpha$)-constrained-adaptively-indistinguishability secure.
\end{theorem}

\begin{figure}
\centering
\includegraphics[width=0.49\textwidth,height=3.6 cm]{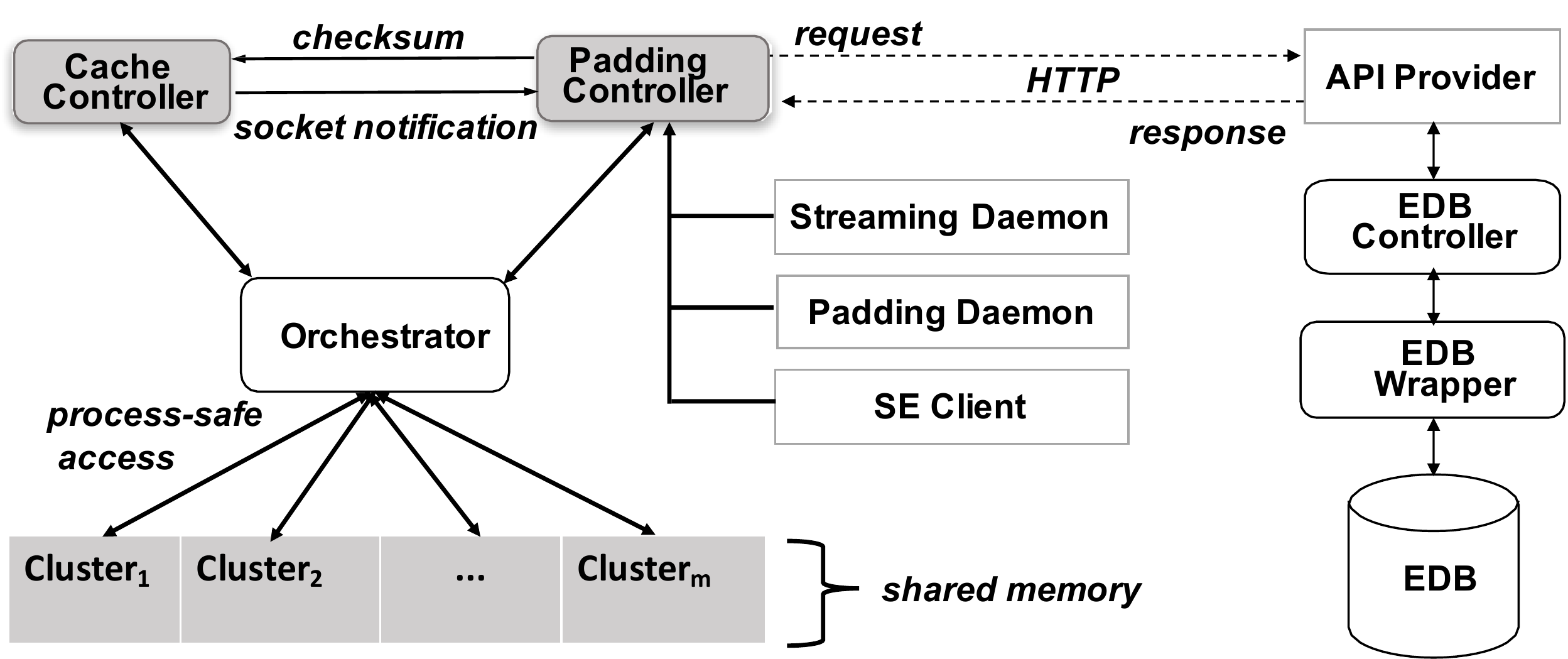}
\caption{Implementation of \system}
\vspace{-20pt}
\label{fig:implementation}
\end{figure} 

Then, we generalise the knowledge of the non-persistent adversary over the time to be the persistent adversary's knowledge of databases. 
Namely, we denote by $(\textnormal{DB}_{t=0},\dots, \textnormal{DB}_{t=T})$ such knowledge, where $T$ denotes the streaming period.
We also make use of the constraint set $\mathfrak{C}^{\mathcal{DB}_t}=\{C^{\textnormal{DB}_{t}},{\textnormal{DB}_{t}}\in \mathcal{DB}_t\}$ to formulate
$\mathbf{C}^{[1,\dots,T]}=\{\mathfrak{C}^{\mathcal{DB}_0},\dots,\mathfrak{C}^{\mathcal{DB}_T}\}$ the generalisation of constraint sets over the period such that we know every $\mathfrak{C}^{\mathcal{DB}_t}(\mathcal{L}_{NP},\alpha)$-acceptable, $\forall t\in [0,T]$.

The security of a scheme $\Sigma_{P}=(\textsf{Setup},\textsf{Stream},\textsf{Search})$ against the persistent adversary over a streaming period $T$.
We start adding a padding mechanism against the persistent adversary in Algorithm 1 (i.e., Padding Strategies) to $\Sigma_{P}$ such that, $\forall G_{i,t}$ in $\Gamma_t=\{G_{1,t},\dots,G_{m,t}\}$ induced by $\textsf{Search}_t$ (i.e., searching at time $t$) against  $\textrm{EDB}_t$, $G_{i,t}$ always has the same size $|G_{i,t}|=|G_{i,t^\prime}|$, where $G_{i,t^\prime}$ in $\Gamma_{t^\prime}=\{G_{1,t^\prime},\dots,G_{m,t^\prime}\}$, $\forall t^\prime \neq t$.
%
%
Let $\mathcal{L}^{Stream}_{[1,\dots,T]}=\{{L^{Stream}_1},\dots,{{L^{Stream}_T}}\}$, and $\mathcal{L}^{Search}_{[1,\dots,T]}=\{\mathcal{L}^{Srch}_{NP,i}\}$, for $i\in [1,\dots,T]$, where $\mathcal{L}^{Srch}_{NP,i}=(\mathcal{L}^{Srch}_i,\mathcal{L}^{\alpha-pad}_i)$ is the search leakage at time $i$, where  $\mathcal{L}^{\alpha-pad}_i)$ reveals the sizes of clusters induced by $\mathcal{L}^{\alpha-pad}_i$. 
Then, the leakage function of $\Sigma_{P}$ can be quantified via the leakage function $\mathcal{L}_{P}=(\mathcal{L}^{Stp},\mathcal{L}^{Stream}_{[1,\dots,T]},\mathcal{L}^{Search}_{[1,\dots,T]})$.
The {\normalfont Ind$_{\mathrm{DSSE},\mathcal{A},\mathcal{L}_{P}, \mathbf{C}^{[1,\dots,T]},\alpha,\mathcal{F}}$} game is given in Definition 7.

We state the below theorem regarding the security  of \system~ against the \textit{persistent} adversary. Security proof is provided in the supplementary material in Section A-D).
\begin{theorem}\label{theo:theorem_persistent}
Let $\Sigma_{P}$ = {\normalfont(\textsf{Setup,Streaming,Search})} be our DSSE scheme, and $\mathbf{C}^{[1,\dots,T]}=\{\mathfrak{C}^{\mathcal{DB}_0},\dots,\mathfrak{C}^{\mathcal{DB}_T}\}$ is a set of constraint sets. If $\Sigma$ is $\mathcal{L}_{P}$-constrained-adaptively-indistinguishable secure, and $\mathbf{C}^{[1,\dots,T]}$ is $(\mathcal{L}_{P},\alpha,\mathcal{F})$-acceptable, then $\Sigma$ is $(\mathcal{L}_{P},\mathbf{C}^{[1,\dots,T]},\alpha,\mathcal{F})$-constrained-adaptively-indistinguishability secure.
\end{theorem}

\vspace{-20pt}
\section{System Implementation} 
\label{sec:implementation}

A simple way to implement the padding service \textit{P} of \system~ is that \textit{Cache Controller} and \textit{Padding Controller} are maintained synchronously in a   single process.
That is, \textit{Cache Controller} is idle while \textit{Padding Controller} performs padding and encryption, and vice versa. Then, encrypted batches are uploaded to the server \textit{S}.
We observe that this single process becomes extremely slow in the long run because \textit{Cache Controller} and \textit{Padding Controller} cannot make use of CPU cores in parallel.
As a result, there are a very few batches uploaded to \textit{S}. 

To address the above bottleneck, we propose \textit{Orchestrator}, a component bridging data flow between \textit{Cache Controller} and \textit{Padding Controller}. 
\textit{Orchestrator} enables ShieldDB to maximise the usage of CPU cores by splitting two controllers to process in parallel. 
Figure~\ref{fig:implementation} depicts the implementation of the system. 
In details, \textit{Cache Controller} and \textit{Padding Controller} are spawned as separate system processes during \textsf{setup}. 
\textit{Orchestrator} acts as an independent proxy manager managing the cache clusters in \textit{P}'s shared memory. 
It provides process-safe access methods of collecting, clearing, and updating data in a given cluster. 

The communication between \textit{Cache Controller} and \textit{Padding Controller} is performed by sockets during the \textsf{streaming} operations. 
\textit{Cache Controller} acts as a client socket, and notifies \textit{Padding Controller} in the order of clusters that are ready for padding as per padding strategy. 
Then, \textit{Cache Controller} awaits a checksum notified by \textit{Padding Controller}. 
The checksum reports the number of keyword and document id pairs in the cached cluster. 
%
Note that \textit{Padding Controller} only collects necessarily cached data for padding upon the \textit{high} or \textit{low} padding mode. 
%

Apart from these components, ShieldDB contains \textit{Padding Daemon}, \textit{Streaming Daemon}, and \textit{SE Client}. 
They are activated by \textit{App Controller} during \textsf{setup}. 
\textit{Padding Daemon} provides \textit{Padding Controller} with the access to a bogus dataset, and maintains the track of remaining bogus entries for each keyword. It will generate a new bogus dataset if it is run out. 
\textit{Streaming Daemon} allows \textit{App Controller} to setup HTTP request/response methods to \textit{S}'s address.
\textit{SE Client} deploys our encryption protocols at \textit{C}, as presented in Figure~\ref{fig:shieldDB_protocols}.  This service is separated so that later protocol updates are compatible to other components in the system.

\begin{figure}
    \centering
    \begin{subfigure}{0.28\textwidth}
        \includegraphics[width=0.8\textwidth, height=2.cm]{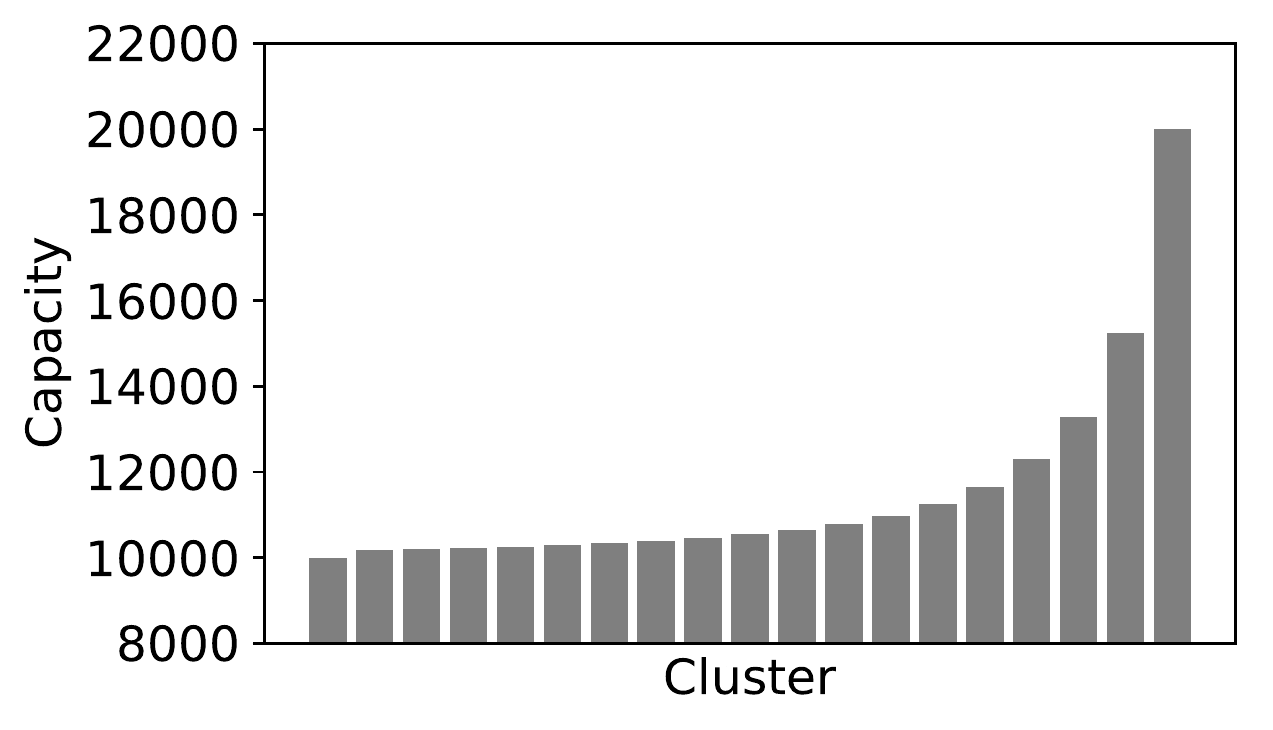}
        \vspace{-5 pt}
        \caption{$\alpha=256$}
        \label{fig:capacity256}
    \end{subfigure}~
    \begin{subfigure}{0.28\textwidth}
        \includegraphics[width=0.8\textwidth,height=2.cm]{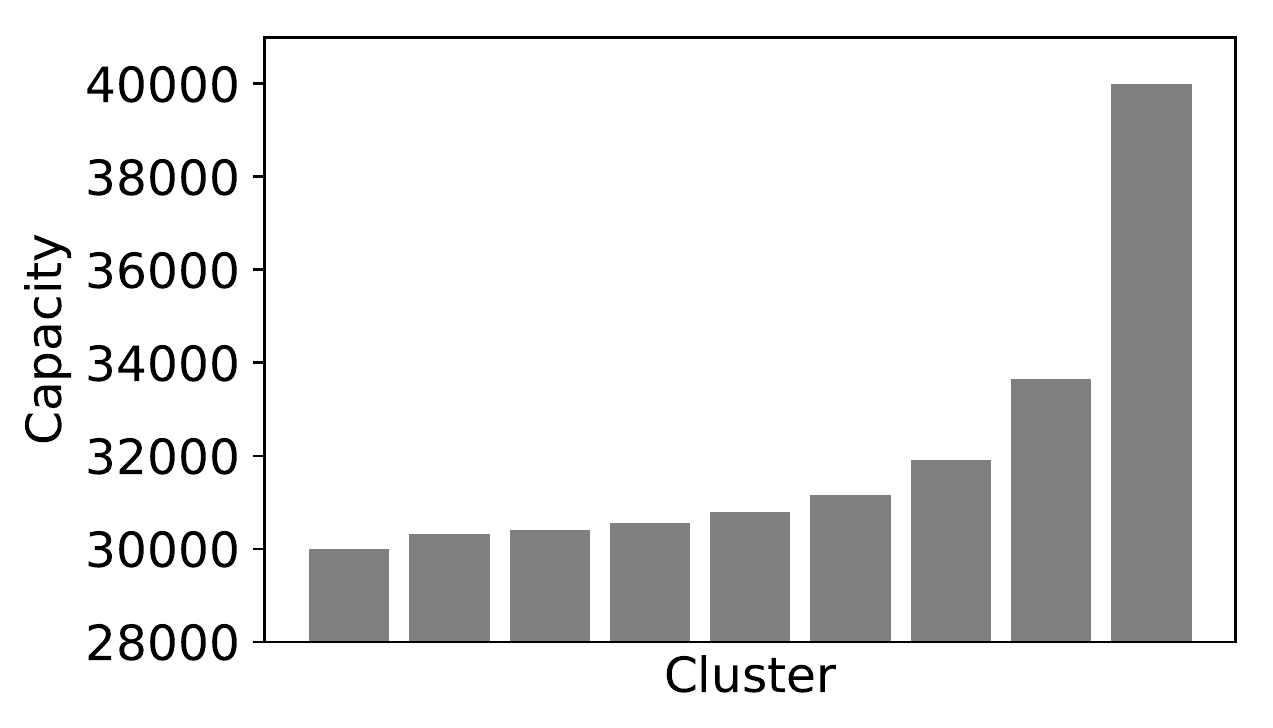}
        \vspace{-5 pt}
        \caption{$\alpha=512$}
        \label{fig:capacity512}
    \end{subfigure}
    \caption{Cache capacities}
    \vspace
    {-5pt}
    \label{fig:capacity}
\end{figure}

\begin{table}
\caption{Batch processing results}
\vspace{-15pt}
\begin{center}
\small
\begin{tabular}{|c|c|c|c|c|}
\hline
\multirow{2}{*}{Setting} & \multicolumn{2}{c|}{\textbf{Batch Insertions}} & \multicolumn{2}{c|}{\textbf{Avg. time/batch} (ms)} \\ \cline{2-5} 
& \multicolumn{1}{c|}{$\alpha=256$} & \multicolumn{1}{c|}{$\alpha=512$} & \multicolumn{1}{c|}{$\alpha=256$} & \multicolumn{1}{c|}{$\alpha=512$} \\ \hline
\textbf{NH}& $30$ & $5$ & $7047.2$ & $51384.41$ \\ \hline
\textbf{NL}& $1919$ & $497$ & $113.94$ & $456.87$ \\ \hline
\textbf{PH}& $45$ & $6$ & $5280.58$ & $45734.16$\\ \hline
\textbf{PL}& $1916$ & $549$ & $138.34$ & $465.09$\\ \hline
\end{tabular}%
\end{center}
\vspace{-15pt}
\label{table:batch}
\end{table}

\begin{figure*}[!t]
\centering
\begin{minipage}[t]{0.33\linewidth}
\centering
\includegraphics[width=0.9\textwidth,height=3.5cm]{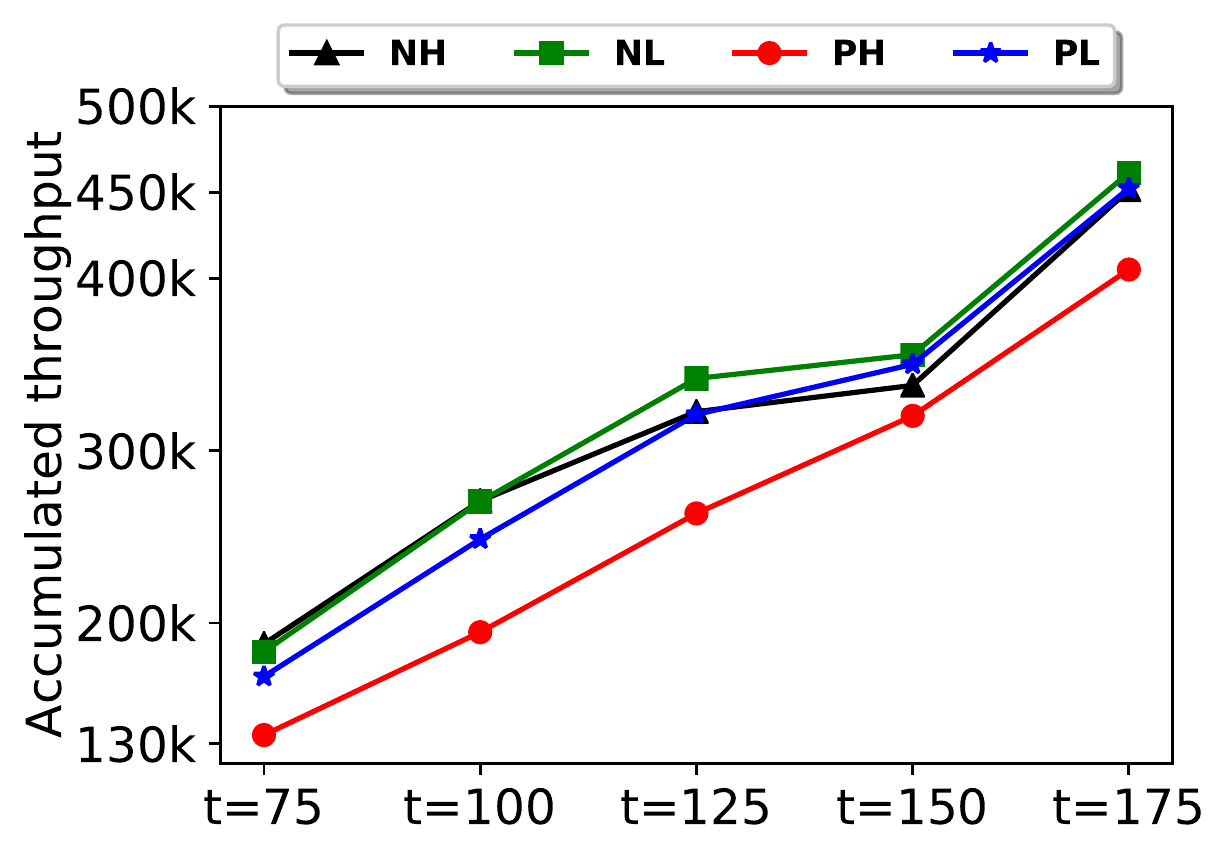}

\end{minipage}%
\begin{minipage}[t]{0.33\linewidth}
\centering
\includegraphics[width=0.9\textwidth,height=3.5cm]{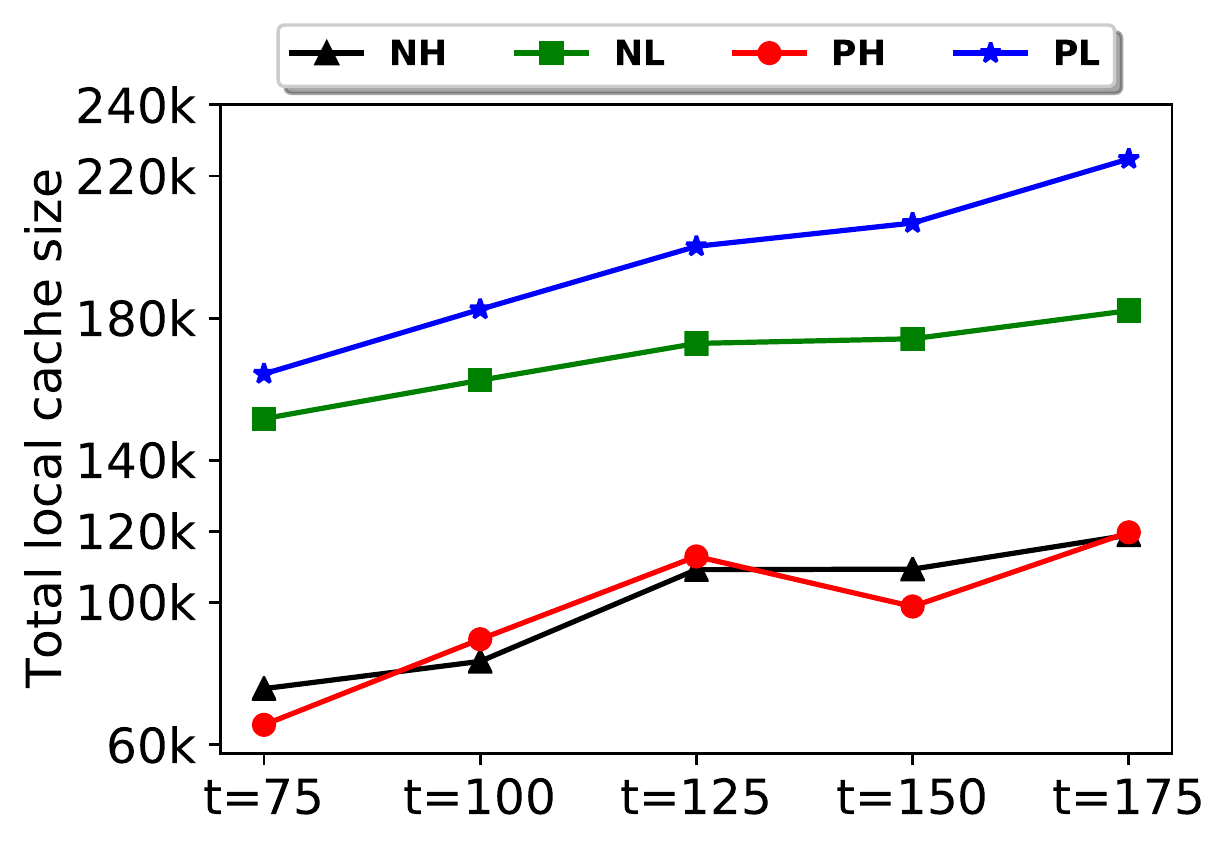}

\end{minipage}
\begin{minipage}[t]{0.33\linewidth}
\centering
\includegraphics[width=0.9\textwidth,height=3.5cm]{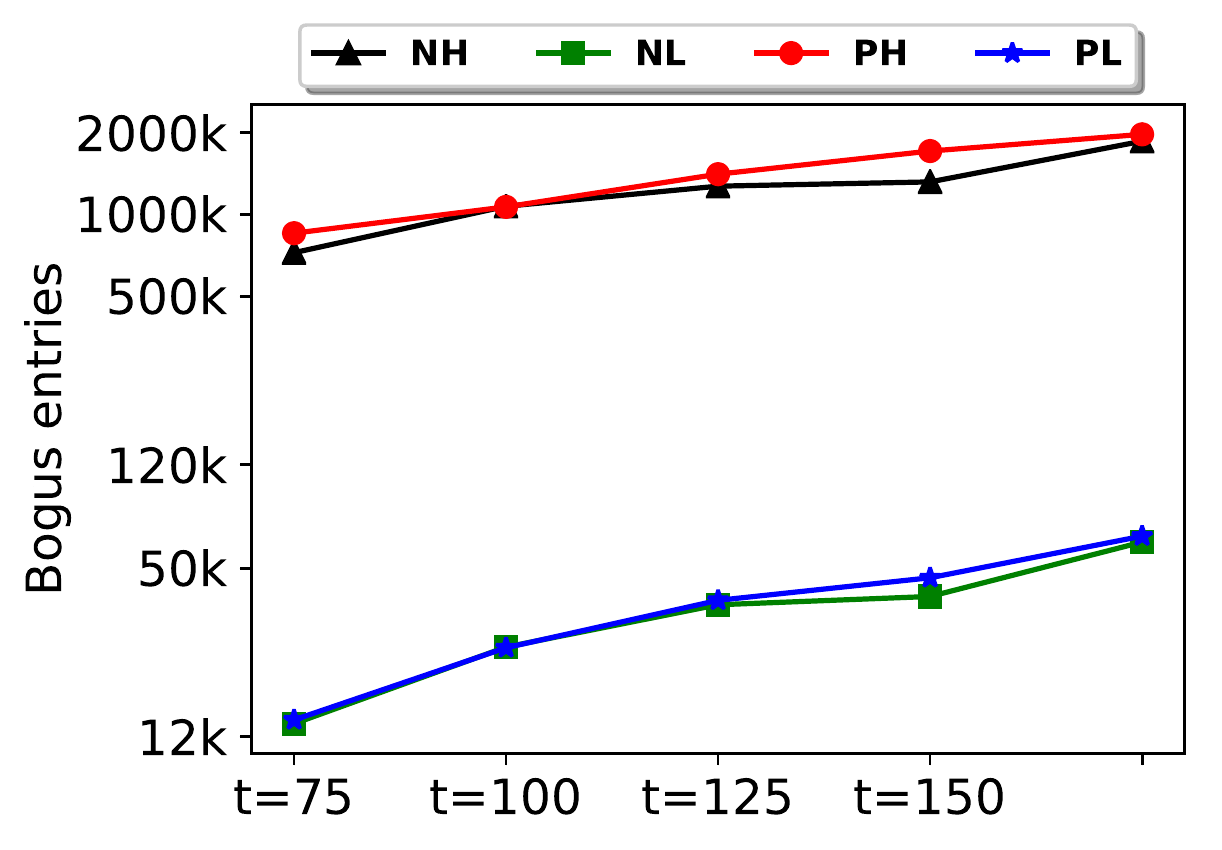}

\end{minipage}

(a) $\alpha=256$

\centering
\begin{minipage}[t]{0.33\linewidth}
\centering
\includegraphics[width=0.9\textwidth,height=3.5cm]{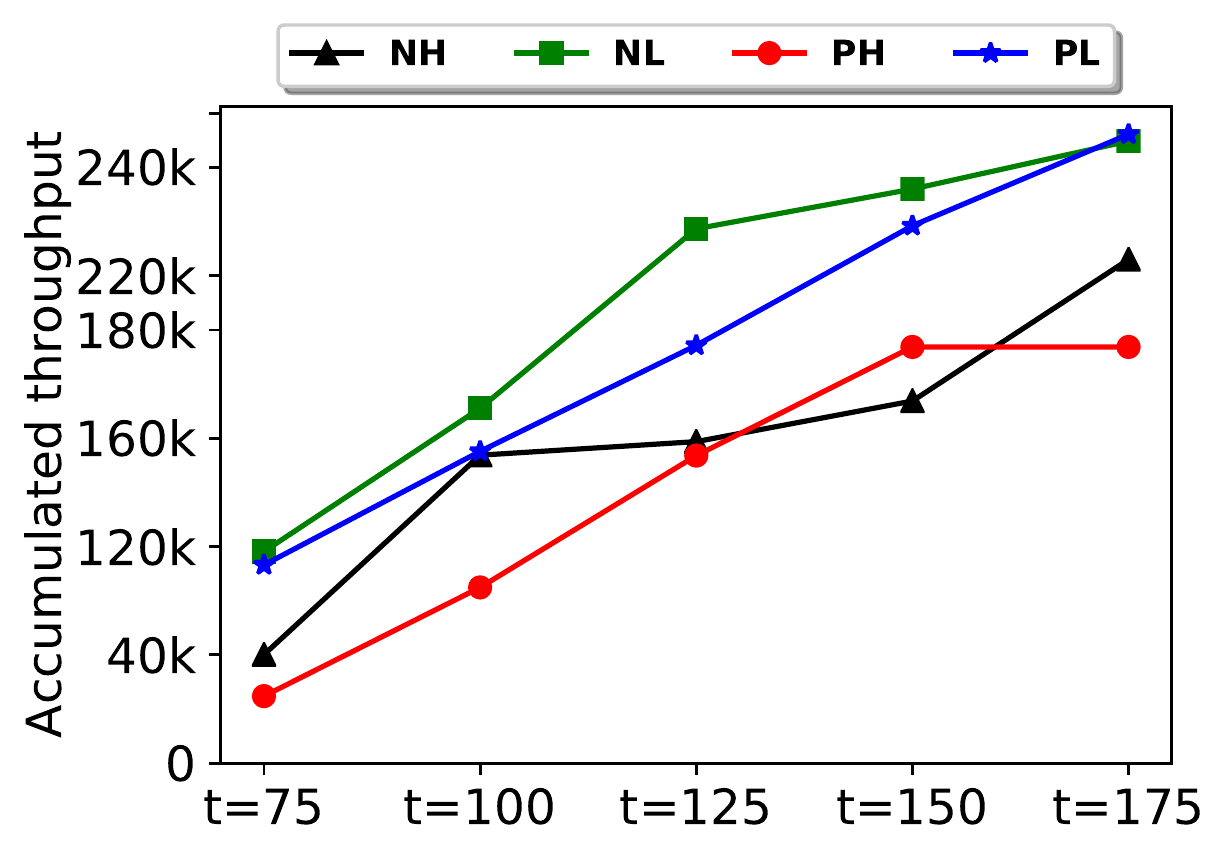}

\end{minipage}%
\begin{minipage}[t]{0.33\linewidth}
\centering
\includegraphics[width=0.9\textwidth,height=3.5cm]{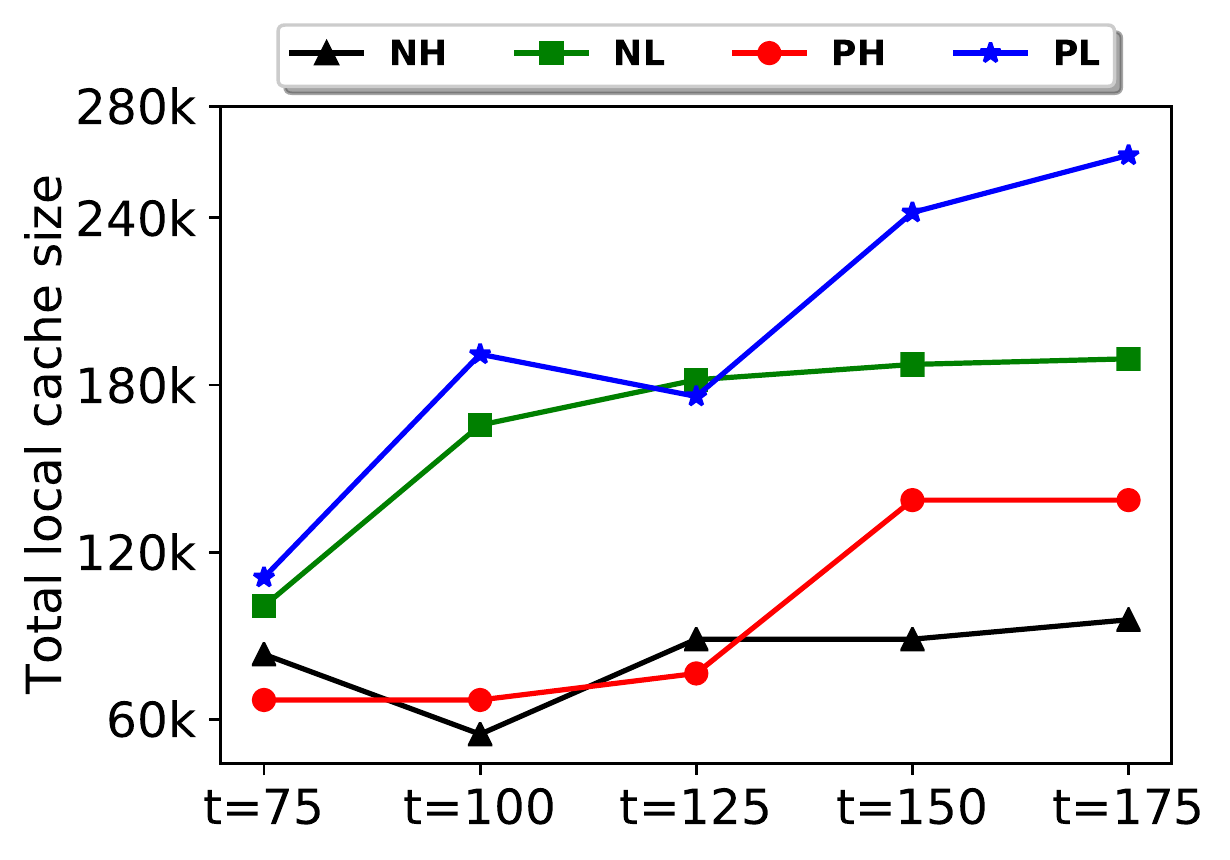}

\end{minipage}
\begin{minipage}[t]{0.33\linewidth}
\centering
\includegraphics[width=0.9\textwidth,height=3.5cm]{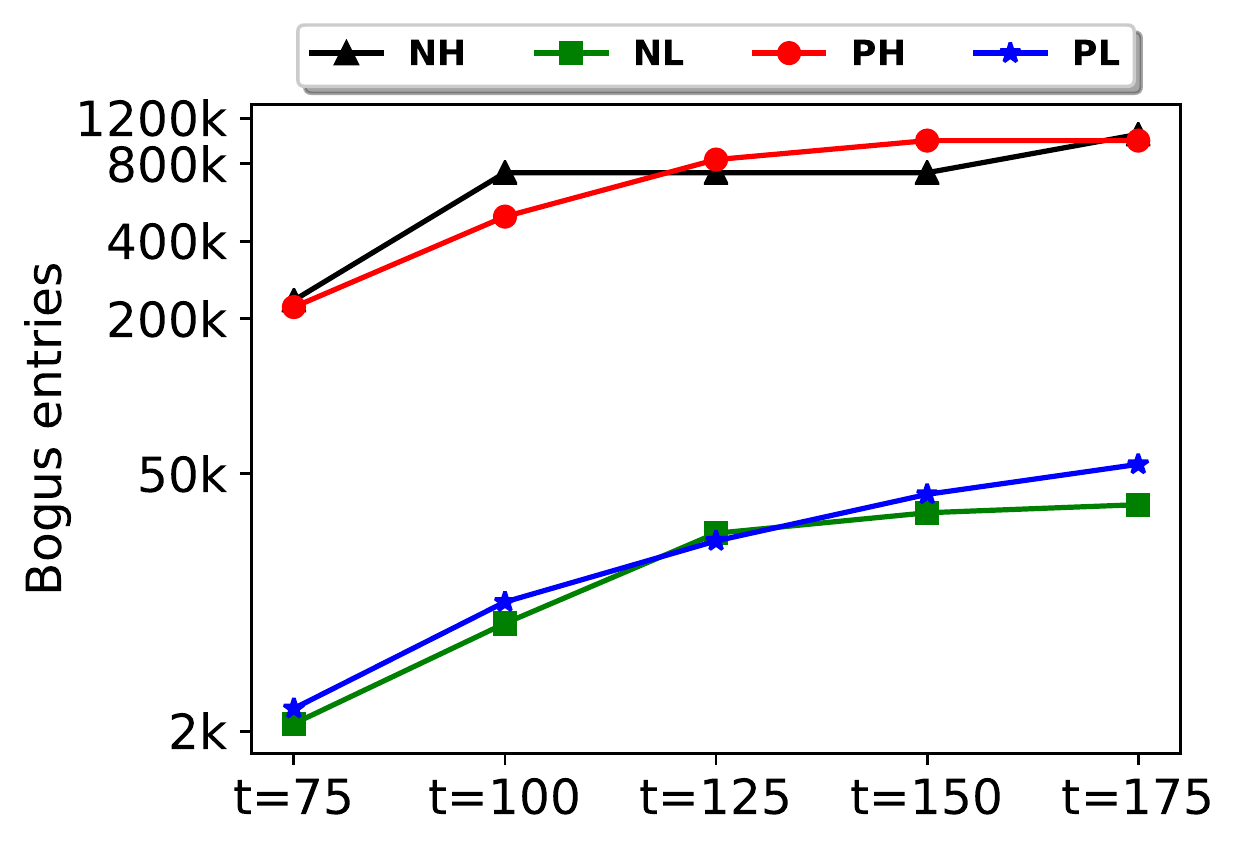}

\end{minipage}

(b) $\alpha=512$

\vspace{-10pt}
\caption{Evaluations on the accumulated throughput, local cache size, and number of used bogus entries, respectively}
\label{fig:client}
\vspace{-10pt}
\end{figure*} 

At \textit{S}, \textit{API Provider} provides RESTful APIs to serve \textit{C}'s HTTP requests. 
%
%
By calling streaming API requests, \textit{API Provider} then passes collected batches in \textsf{streaming} to the \textit{EDB Controller}. 
This component executes the insertion protocol as presented in Figure~\ref{fig:shieldDB_protocols}. 
\system~introduces a component called \textit{EDB Wrapper}, which separates \textit{EDB Controller}'s protocols from any database storage technology.


\vspace{-12pt}
\section{Experimental Evaluation}
\label{sec:exp}
We evaluated ShieldDB to understand the applicability of the padding strategies by investigating (1) the streaming throughput, padding overhead, and local cache load of the \textit{Padding Service} when using different padding strategies against the non-persistent and persistent adversaries, (2) the corresponding EDB size and search latency relating to the padding overhead, and (3) the efficiency of \textsf{flushing} in reducing  the cache load and the padding overhead reduction when \textsf{re-encryption} is applied.

%


\vspace{-12pt}
\subsection{Setup and Overview}
\label{experiment:setup}
ShieldDB is developed using Python and the code is published online\footnote{\url{https://github.com/MonashCybersecurityLab/ShieldDB}}. We use standard packages of Pycrypto (2.6.1) to implement cryptographic primitives (SHA256 for cryptographic hash functions and AES-128 cipher for pseudo-random functions) and NLTK (3.3) for textual processing. We deploy ShieldDB in Azure Cloud and run on an isolated DS15 v2 instance (Intel Xeon E5-2673 2.4GHz CPU with 20 cores and 140G RAM), where Ubuntu Server 17.1 is installed.  The controllers of the padding service are implemented by using Python multiprocessing package. For simplicity, we co-locate the \textit{Client} and \textit{Padding Service} at the same instance. At the server side. \textit{API Controller} works on top of the Flask-a micro web framework, while EDB is realised by RocksDB, a key-value storage.

We select the Enron data set\footnote{Enron email dataset: \url{https://www.cs.cmu.edu/~./enron/}}, and extract 2,418,270 keyword/document id pairs from the top 5,000 most frequent keywords in the dataset as the keyword space in our experiment.
%
%
We group them and allocate the cache capacity for each keyword cluster based on their frequencies, as introduced in Section~\ref{subsec:setup}. Figure~\ref{fig:capacity} presents the normalised cache capacities of these clusters at different values of $\alpha$. 
Recalled that $\alpha$ indicates the minimum number of keywords in each cluster (see Section~\ref{subsec:setup}). 
%
%
During the \textsf{setup}, ShieldDB generates a padding dataset for the keyword set. In our experiments, the dataset is estimated empirically enough to be used in streaming data up to 175 seconds for both $\alpha= 256$ and $\alpha= 512$. In details, the dataset contains 1,859,877 bogus pairs ($\approx$ 389 Kb).

To create the streaming scenario, the \textit{Client} groups every $10$ documents in the Enron data set as a batch (approx. $560$ stemmed keyword/id pairs) and continuously inputs batches to the system.
Note that we do not limit the processing capability of the \textit{Padding Service} \textit{P} and we let it continuously handle batches in a queue sent by the \textit{Client}. In every batch, \textit{P} performs the padding and encryption, and then continuously streams encrypted batches to the \textit{Server}.
To faithfully understand the performance of padding, we deploy the client and server to the same dedicated instance so that the impact of network I/O is minimised. Note that we begin to record the performance of ShieldDB after the cold start period of $75$ seconds. 

We experiment ShieldDB with different combinatorial settings of padding strategies and modes. They are denoted as \emph{NH} (strategy against non-persistent adversary via \emph{high} mode), \emph{NL} (non-persistent padding strategy via \emph{low} mode), \emph{PH} (strategy against persistent adversary via \emph{high} mode, and \emph{PL} (strategy against persistent adversary via \emph{low} mode).
The performance of ShieldDB is evaluated via  a set of measurements such as  system throughput, local cache size, used bogus pairs, EDB size, and search time. Here, the system throughput represents the total accumulated number of real $(w,id)$ pairs that have been encrypted and inserted to EDB. 
%
%

\vspace{4pt}
\noindent\textbf{Remark}:
Our focus is to analyse the system performance with different settings of padding strategies and modes related to the security as mentioned in Section~\ref{subsec:padding}. The empirical settings of 175 second streaming period and 10 documents per batch are used for the evaluation under a stable workload, not causing performance bottleneck, when the \textit{Client} and \textit{Padding Service} are co-located at the same Azure instance. Other parameters of a streaming period expects to result in the same observation as we obtained. The batch size can be adjusted empirically based on the application and client's resources.

\vspace{-10pt}
\subsection{Evaluation}
\label{subsec:evaluation}
We measure the performance of ShieldDB at both \textit{Padding Service} and the untrusted server \textit{S} over a 175-second streaming period.
In details, Fig.~\ref{fig:client} summarises the performance of \textit{Padding Service} with the three different metrics of accumulated throughput, local cache size, and padding overhead when setting $\alpha=256$ and $\alpha=512$.
Then, Fig.~\ref{fig:server} describes the performance of \textit{S} by observing EDB size, search time, and the average result length of query keywords.

\noindent \textbf{System throughput}: We first measure the accumulated throughput over time when ShieldDB is deployed with different padding modes of \emph{NH} (\underline{n}on-persistent using \underline{h}igh padding mode), \emph{NL} (\underline{n}on-persistent using \underline{l}ow padding mode), \emph{PH} (\underline{p}ersistent using \underline{h}igh padding mode), and \emph{PL} (\underline{p}ersistent using \underline{l}ow padding mode). 
We also monitor the number of batch insertions and the average batch processing time of \textit{Padding Controller} to evaluate the throughput difference between these padding strategies.  

Fig.~\ref{fig:client} shows that these padding modes have similar throughput at a lower $\alpha=256$. 
However, the overall throughput reduces nearly a half when setting $\alpha=512$. It is explained that padding overhead and encryption cost are higher when more keywords are allocated in each cluster. Consequently, the throughput will be decreased. 
Table~\ref{table:batch} also supports that finding when fewer batches are inserted to \textit{S} and the average processing time per batch takes a longer time when setting $\alpha=512$.
%

Furthermore, when setting $\alpha=512$, Fig.~\ref{fig:client}  shows that \emph{low} mode promotes more real keyword/id pairs to be inserted to EDB than \emph{high} mode. In details, the throughput of \emph{NL} is $1.23$ times higher than the throughput of \emph{NH}, and \emph{PL}'s is about $1.51$ timer higher than \emph{PH}'s. 
Table~\ref{table:batch} also supports this finding when it reports that \emph{low} mode creates more batch insertions than \emph{high} mode, while its average batch processing time is completely negligible compared to that value of the latter. 
This observation shows the efficiency of \emph{low} mode since it only performs necessarily minimum padding for keywords in every batch. 
In contrast, \textit{Padding Controller} takes longer time under \emph{high} padding mode due to higher padding overhead and the longer encryption time taken by the large number of bogus pairs. 
%

\noindent \textbf{Cache size}: To investigate the overhead at the padding service, we monitor the local cache as shown in Fig.~\ref{fig:client}.
In general, \emph{low} mode results in a larger number of cached pairs in cache clusters than \emph{high} mode, regardless of padding constraints. 
The cache in \emph{NL} consumes $150\%{\sim}197\%$ larger space than the cache in \emph{NH}.
The load of cache in \emph{PL} is $1.8{\sim}2.5 \times$ higher than the load of the cache in \emph{PH}.
%

\noindent \textbf{Padding Overhead}: 
We rely on the number of used bogus entries reported in Fig.~\ref{fig:client} to compute the padding overhead of different combinatorial settings of padding strategies and modes. 
The padding overhead is estimated as the ratio between the bogus and real (throughput) pairs.
We see that although \emph{high} padding mode achieves a lower load of cache than \emph{low} mode, it utilises more bogus pairs from the generated padding dataset than the latter.
In details, the padding overhead of \emph{NH} ranges from $3.8{\sim} 4.1 $ and from $5.6 {\sim} 5.8 $ for $\alpha = 256$ and  $512$, respectively.
In contrast, the padding overhead of \emph{NL} ranges is marginal, varying from $0.07 {\sim} 0.13 $ and $0.06 {\sim} 0.16$ for $\alpha = 256$ and $512$, respectively. The reason is that a portion of streamed keyword/id pairs are still cached at the padding service. 
It also demonstrates that when $\alpha$ is large, \emph{PH} generates a larger padding overhead than \emph{NH} does. Specifically, the padding overhead of \emph{PH} is in the range of $6.4{\sim}8.9$ for $\alpha = 512$. 
The reason is that \emph{PH} will add bogus pairs for keywords that do not appear in the current time interval, while \emph{NH} will not if the keywords have not existed. 
%
%

\begin{figure}
\centering
\includegraphics[width=0.40\textwidth,height=3.5cm]{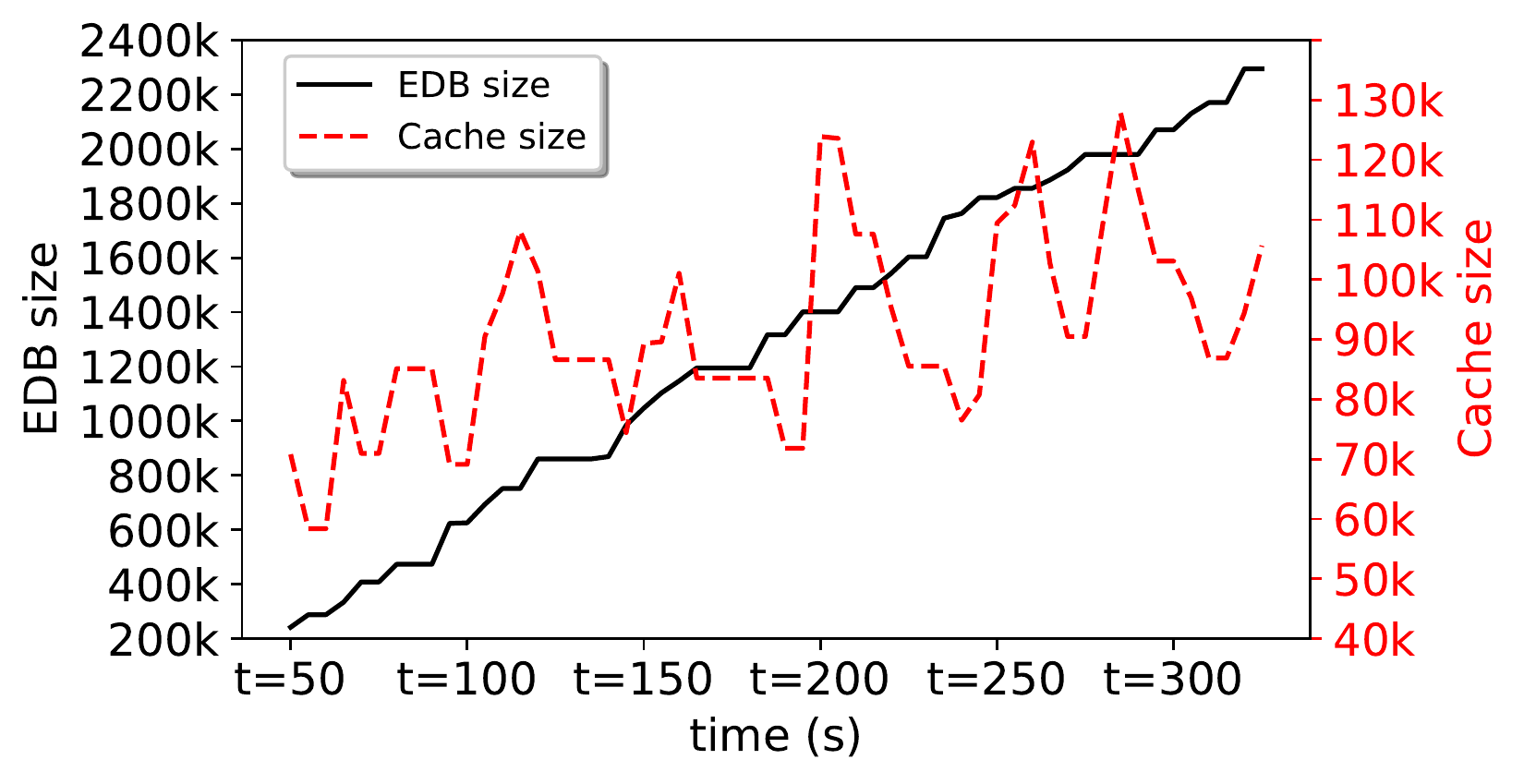} 
\vspace{-5pt}
\caption{Flushing operation with $\alpha=256$}
\vspace{-23pt}
\label{fig:flush}
\end{figure}

\noindent \textbf{EDB size}: We report the number of real and bogus pairs in EDB over the time in Fig.~\ref{fig:server}. It demonstrates that \emph{high} mode generates more data in EDB than \emph{low} mode due to the selection of all cached pairs in clusters for padding and the  large number of used bogus pairs.

\noindent \textbf{Search time}:
To demonstrate the search performance, we configure the client  to query 10\% randomly selected keywords in EDB at timestamps, i.e, $t=75, 100 , 125, 150$, and $175$. Fig.~\ref{fig:server} shows that \emph{high} mode makes querying a keyword take a longer time, because \textit{S} decrypts more bogus pairs. In contrast, the search time in \emph{low} mode is shorter due to the fewer used bogus pairs. The search time in  \emph{NH} and \emph{PH} is fluctuated due to the change of the result lengths of keywords in EDB as given in Table~\ref{table:resultlength256} and Table~\ref{table:resultlength512}. 


\begin{figure*}[t]
	\hspace{-0.3cm}
	\begin{minipage}[t]{0.30\linewidth}
	    \centering
	    \includegraphics[width=\textwidth,height=3.5cm]{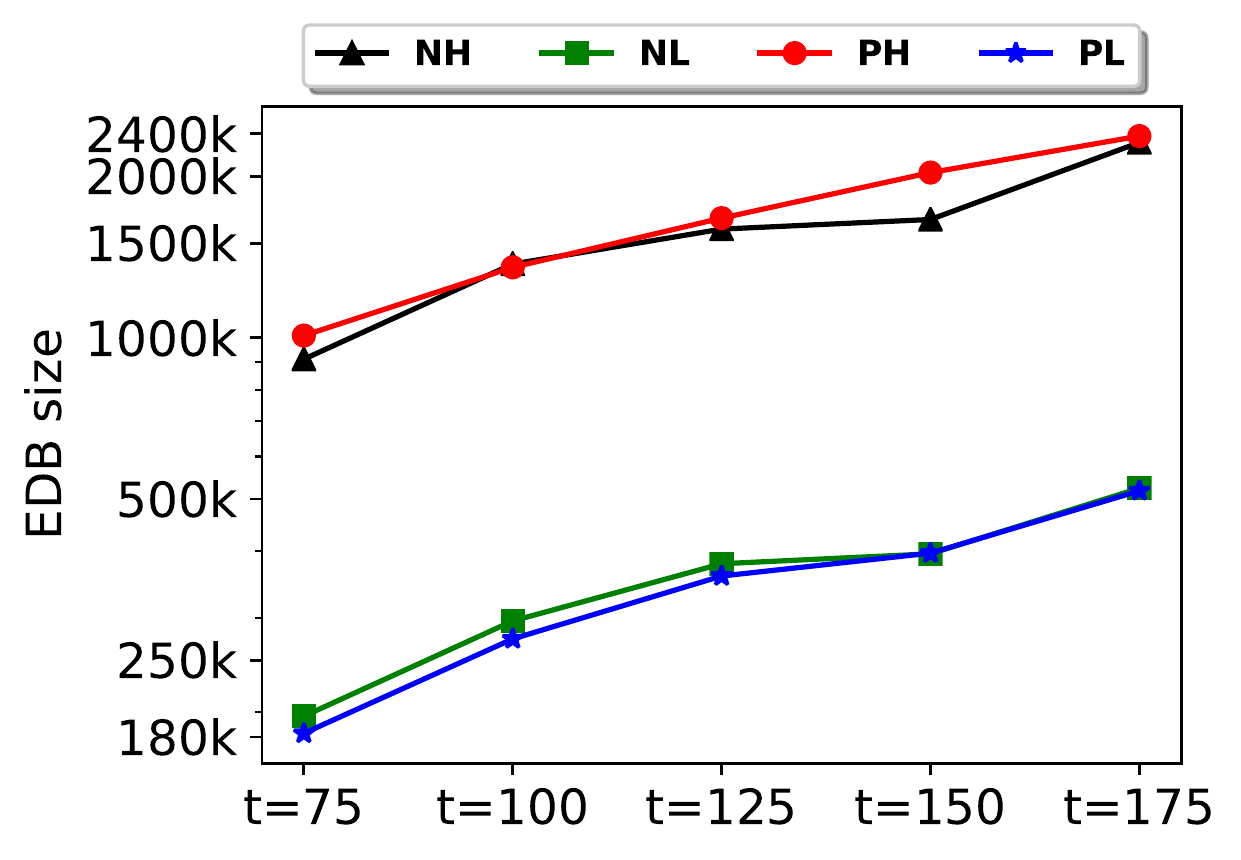} 
  	\end{minipage} 
  	\hspace{-0.3cm}
  \begin{minipage}[t]{0.30\linewidth}
	    \centering
	    \includegraphics[width=\textwidth,height=3.5cm]{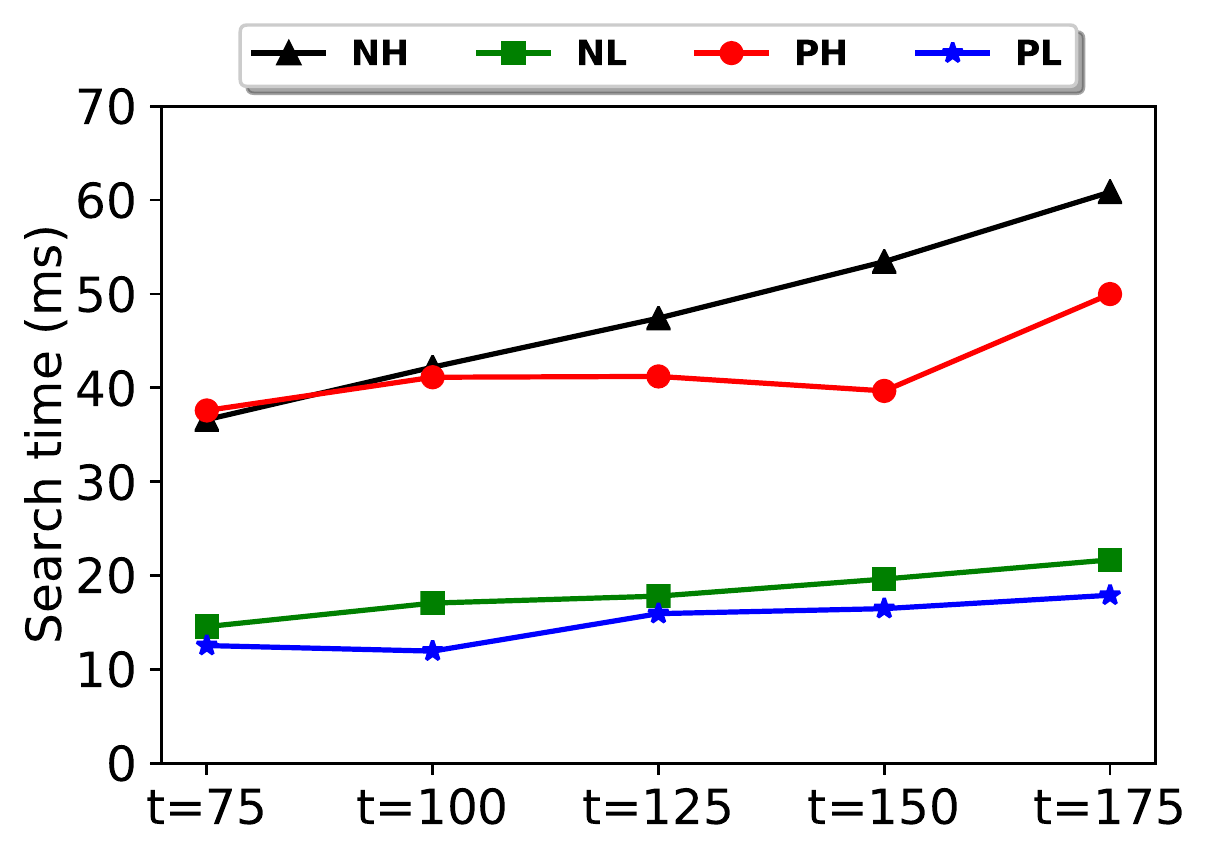} 
  \end{minipage}
  \hspace{-0.4cm}
  \begin{minipage}[t]{0.44\linewidth}
	    \centering
	    \vspace{-100pt}
	    \captionof{table}{Result length with $\alpha=256$}
	    \scalebox{0.8}{
			\begin{tabular}{|c|c|c|c|c|c|}
				\hline
				\multirow{2}{*}{Setting} &  \multicolumn{5}{c|}{\textbf{Time intervals}}\\ 
				\cline{2-6}
				&$t=75$ & $t=100$ & 	$t=125$ & $t=150$ & $t=175$ \\ \hline
				\textbf{NH}      &  	$593.78$ &  $669.94$ 		& 	$778.45$ 		& $811.25$ &$903.53$ \\ \hline
				\textbf{NL}      &   	$109.856$ &  $144.66$ 		& 	$164.40$ 		& $171.56$ & $186.04$ \\ \hline
				\textbf{PH}      &   	$562.86$ &  $660.22$ 		& 	$593.18$ 		& $579.30$ & $714.12$ \\ \hline
				\textbf{PL}      &    	$89.25$ &  $82.38$ 		& 	$107.92$ 		& $110.57$ & $126.43$ \\ \hline
			\end{tabular}
		}
	\label{table:resultlength256}
  \end{minipage}

\hspace{8cm} (a) $\alpha=256$

\hspace{-0.3cm}
	\begin{minipage}[t]{0.30\linewidth}
	    \centering
	    \includegraphics[width=\textwidth,height=3.5cm]{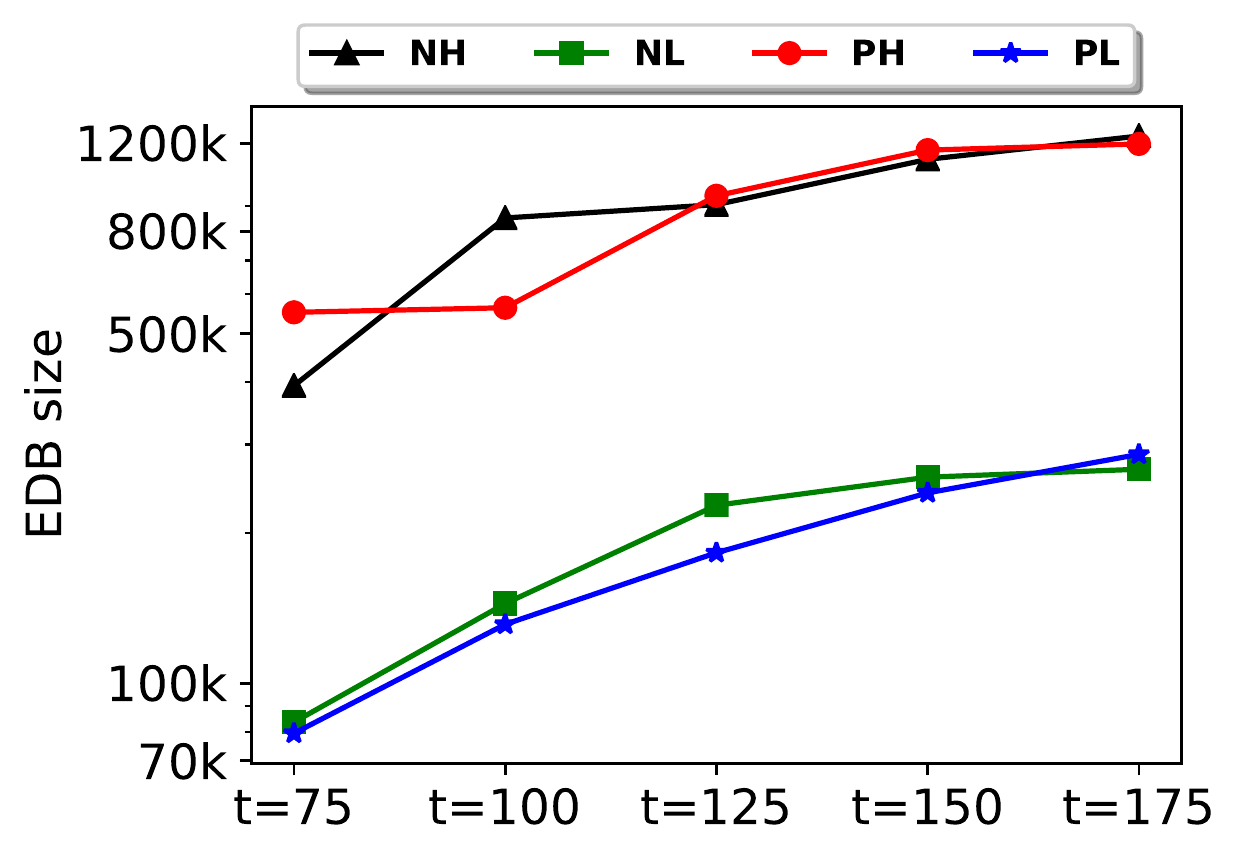} 
  	\end{minipage} 
  	\hspace{-0.3cm}
  	\begin{minipage}[t]{0.30\linewidth}
	    \centering
	    \includegraphics[width=\textwidth,height=3.5cm]{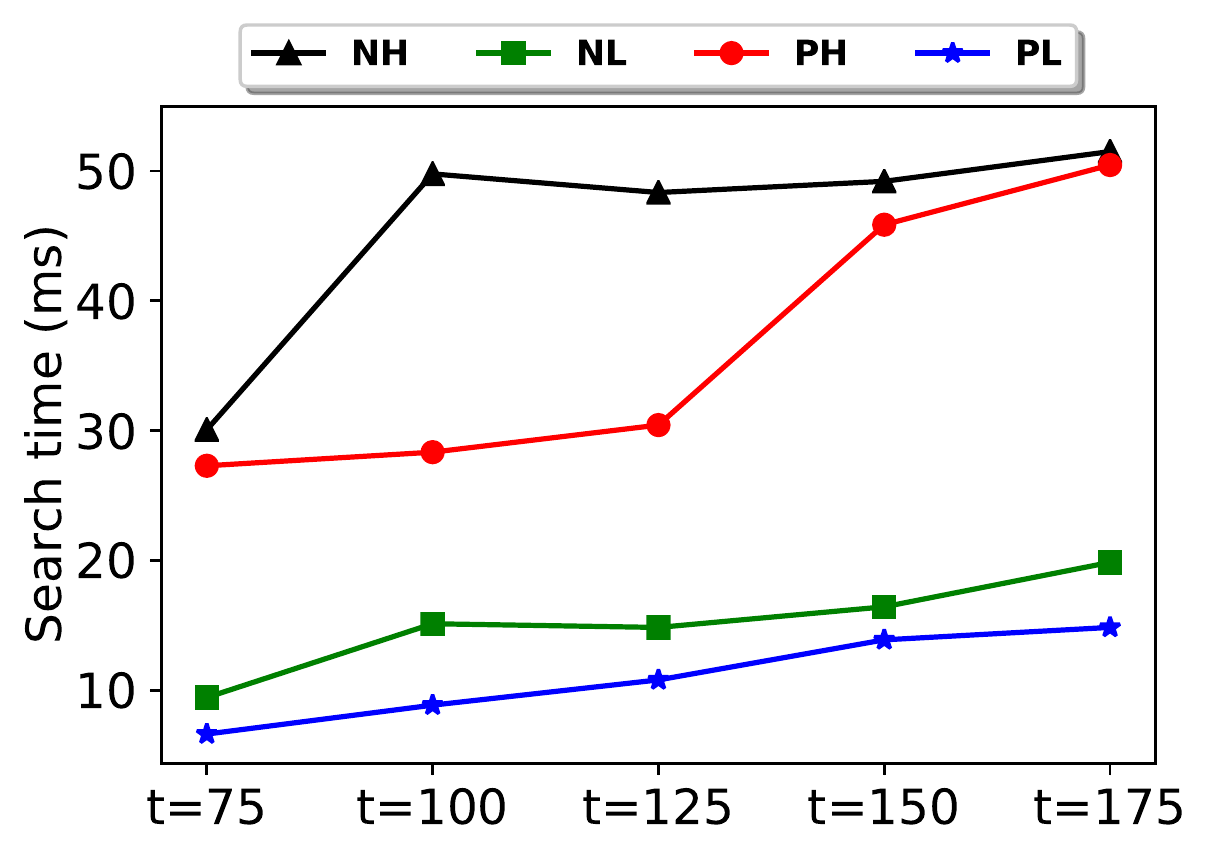} 
  	\end{minipage}
  	\hspace{-0.18cm}
  	\begin{minipage}[t]{0.44\linewidth}
  		\vspace{-105pt}
		\captionof{table}{Result length with $\alpha = 512$}
		\scalebox{0.8}{
			\begin{tabular}{|c|c|c|c|c|c|}
			\hline
			\multirow{2}{*}{Setting} &  \multicolumn{5}{c|}{\textbf{Time intervals}}\\ 
			\cline{2-6}
			&$t=75$ & $t=100$ & 	$t=125$ & $t=150$ & $t=175$ \\ \hline
			\textbf{NH}      &  	$668.02$ &  $843.22$ 		& 	$837.01$ 		& $840.85$ & $861.44$ \\ \hline
			\textbf{NL}      &   	$81.64$ &  $140.95$ 		& 	$147.73$ 		& $168.17$ & $174.40$ \\ \hline
			\textbf{PH}      &   	$577.06$ &  $610.02$ 		& 	$614.34$ 		& $857.73$ & $879.63$ \\ \hline
			\textbf{PL}      &    	$51.66$ &  $85.33$		& 	$89.21$ 		& $93.82$ & $112.32$\\ \hline
			\end{tabular}
		}
	\label{table:resultlength512}
	\end{minipage}

	\hspace{8cm} (b) $\alpha = 512$
    \vspace{-5pt}
	\caption{Evaluation on the EDB size and search time}
	\vspace{-15pt}
	\label{fig:server}
\end{figure*}

\noindent \textbf{Flushing}: We select two largest cache clusters to simulate the \textsf{flushing} operation. In particular, we set a small time window, 20 seconds, to trigger flushing. If these clusters do not exceed up to 75\% of their original capacities, then the \textsf{flushing} operation is invoked. 
Figure~\ref{fig:flush} reports EDB size and cache size over the time with a scanning window of 20 seconds. The operation occurs at $t=73$, $45$, $80$, $121$, $144$, $189$, $222$, $272$, and $331$ seconds. We observe that the cache size drops significantly at these timestamps since \textit{Cache Controller} flushes the cached pairs to \textit{Padding Controller}. Note that the EDB and cache sizes are flat while \textit{Padding Controller} performs padding and encryption.
Empirically, we observe that \textsf{flushing} operation averagely reduced the cache load efficiently up to $1.9\sim 2.8\times$ across the padding strategies in the same streaming period.

\noindent \textbf{Re-encryption}: To investigate the performance of \textsf{re-encryption}, we experiment ShieldDB after 175 seconds operated with \emph{NH} at $\alpha=256$. We select the keyword cluster that has the most entries stored in EDB for the re-encryption. This keyword set is also re-used as the query set to benchmark the query performance before, during, and after re-encryption. There are 180,677 real entries associating with 256 keywords of this cluster. Table~\ref{table:reencryption256} demonstrates the performance of the re-encryption. This operation takes 131.3 seconds for fetching process, and 103.11 seconds for padding and re-insertion. During the operation, the average query time per keyword is the smallest due to the deletion of all entries in the selected cluster. Note that this query time takes into account the search over local cache clusters if the keyword is not available in EDB. After re-encryption, the number of bogus entries used for the cluster is nearly reduced by 64.1\%, making the average search time shorter.

\noindent \textbf{Overall performance}:
%
Table~\ref{table:overalperformance} summarises the performance of \textit{Padding Service} regarding three critical measurements of throughput per second, average cache size at every second, and the overall padding overhead. 
%
As seen, there are no perfect padding strategies that can achieve a great balance. 
 \textit{Low} padding mode makes a higher throughput value and lightens padding overhead, but it incurs a significant cache load. 
In contrast, \textit{high} padding mode makes the cache load lightweight, but it introduces a higher overhead.   

Note that the padding strategies against the persistent adversary are also applicable to the non-persistent adversary. 
The \textit{firstBatch} condition can theoretically make some clusters might be not achieved in a long time if some keywords never appear. 
However, this is not the case in our current experiments. Therefore, the throughput for \emph{PH} and \emph{NH}, and \emph{PL} and \emph{NL} is close, respectively.

The value $\alpha$ relates to the number of keywords in clusters. A higher value indicates that more keywords are co-located in the same cluster. Hence, they all will have the same result length after padded.
From the results, \system~shows the tradeoff when selecting a higher value of $\alpha$. That is, the throughput is declined nearly double while padding overhead increases almost twice (see Table~\ref{table:overalperformance}).

\begin{table*}[!t]
\caption{Overall performance of ~\system~throughout a 175-second streaming period}
\vspace{-15pt}
\begin{center}
\small
\begin{tabular}{|c|c|c|c|c|c|c|c|}
\hline
\multirow{2}{*}{Setting} & \multirow{2}{*}{\textbf{Adversary Target}} & \multicolumn{2}{c|}{\textbf{Throughput per second}} & \multicolumn{2}{c|}{\textbf{Avg. cache load}} & \multicolumn{2}{c|}{\textbf{Padding overhead}} \\ \cline{3-8} 
& & \multicolumn{1}{c|}{$\alpha=256$} & \multicolumn{1}{c|}{$\alpha=512$} & \multicolumn{1}{c|}{$\alpha=256$} & \multicolumn{1}{c|}{$\alpha=512$} & \multicolumn{1}{c|}{$\alpha=256$} & \multicolumn{1}{c|}{$\alpha=512$} \\ \hline
\textbf{NH}& Non-persistent& $2,634.27$ & $1,459.62$ & $99,347.8$ & $82,267.8$ & $3.8\sim4.12$&$ 5.6\sim5.8$ \\ \hline
\textbf{NL}& Non-persistent& $2,779.77$ &$ 1,515.74$ &$ 168,681.4$ &$ 164,960$ & $0.07\sim0.13$ &$0.06\sim0.16$\\ \hline
\textbf{PH}& Persistent& $2,702.05$& $1,289.64$ &$ 97,351.6$ &$97,557.6$& $4.8\sim6.3$ &$6.4\sim8.9$\\ \hline
\textbf{PL}& Persistent& $2,833.46$& $1,590.46$ &$195,702.2$ &$196,413.6$& $0.08\sim0.14$& $0.08\sim0.23$\\ \hline
\end{tabular}%
\end{center}
\vspace{-10pt}
\label{table:overalperformance}
\end{table*}

\begin{table*}[!t]
\caption{Overall performance of the insecure streaming system and the forward-private SSE streaming system}
\vspace{-15pt}
\begin{center}
\small
\begin{tabular}{|c|c|c|c|c|c|c|}
\hline
& \multirow{2}{*}{insecure system} & \multicolumn{5}{c|}{Forward-private system} \\ \cline{3-7} 
& & $t=75$ & $t=100$ & $t=125$& $t=150$ & $t=162$  \\ \hline
Throughput (pairs/s)&  $4.3 \times 10^4$ & $1.73 \times 10^4$ & $1.71 \times 10^4$ & $1.71 \times 10^4$ & $1.73 \times 10^4$  &$1.74 \times 10^4$ \\ \hline
{Avg. result length (\#pairs)} &  $483.66$  &  $281.5$ & $384.8$ & $476.53$ & $513.61$ & $483.61$\\ \hline
{Search latency (ms)} &    $5.12$  &$20.9$ & $24.62$ & $33.75$ & $40.95$ & $43.59$\\ \hline
Storage overhead (\#pairs)&  $2.41 \times 10^6$ & $1.3 \times 10^6$ & $1.7 \times 10^6$ & $2.14 \times 10^6$ & $2.17 \times 10^6$ & $2.41 \times 10^6$ \\\hline
\end{tabular}
\label{table:forward_system}
\end{center}
\vspace{-15pt}
\end{table*}

\begin{table}
\caption{Re-encryption on the largest cluster}
\vspace{-15pt}
\begin{center}
\small
\begin{tabular}{|c|c|c|c|}
\hline
{}& \textbf{Before} & \textbf{During}  & \textbf{After} \\ \hline
Bogus entries used    &  $643,131$ &  $230,715$ & $230,715$\\ \hline
Search time (ms)      & $379.37$  &  $0.03$ & $210.18$\\ \hline
\end{tabular}
\label{table:reencryption256}
\end{center}
\vspace{-10pt}
\end{table}

\noindent  \textbf{Comparison with baselines}:
We further investigate the security and performance trade-off between \system~and two baselines.
The first baseline (aka Baseline-I) is an insecure system for which batches of un-encrypted keyword/id pairs without padding are streamed to the server's storage.
We define the batch size as $256$ pairs. 
The second one (aka Baseline-II) is also the streaming system without padding, but it realises our searchable encryption scheme with forward privacy, presented in Figure~\ref{fig:shieldDB_protocols}, to encrypt the pairs of every batch insertion.

We measure the overall performance of these baselines by using the same streaming database and the measurement metrics as evaluated for ~\system~(in Section \ref{experiment:setup}).
It is clear that Baseline-II brings $2.5\times$ overhead in addition throughput compared to Baseline-I. 
The reason is because that the encrypted entries of keywords in the same batch indeed are generated from the ephemeral key of the batch and keyword's extracted state. 
We note that Baseline-I maintained a constant throughput and completed streaming within $55$ seconds, while Baseline-II finished in $162$ seconds (Table \ref{table:forward_system}).
The throughput overhead ~\system~ brought forward is about $6.04\sim6.5\times$ (resp. $10.7\sim11.74\times)$   lower than performance of Baseline-II when setting $\alpha=256$ (resp. $512$).
We note that overhead is caused by the encryption of additional bogus pairs introduced in every batch insertion.
The average result length for keywords streamed to the EDB of ~\system~ is about $1.8\sim2.3\times$ (\textit{high} padding mode used) greater than that value if Baseline-II is deployed.
The search latency of ~\system ~is almost double ($1.7\sim 2.1\times$), slower than Baseline-II.
%
%
We observe that the security enhanced by the padding and forward privacy would overall bring the streaming throughput per second $\sim 16.3\times$ (resp. $\sim 27\times$) slower than Baseline-I when setting $\alpha=256$ (resp. 512).

\noindent  \textbf{Empirical analysis of streaming distribution}: Next, we investigate how the streaming distribution of real data outsourced by \system~ to EDB changes over the time. 
To do this, we consider the training distribution used to generate the padding dataset and cluster's caches extracted from the training dataset in the \textsf{Setup}, (in Section~\ref{subsec:setup}), as the baseline distribution.
Note that the training distribution was different with respecting to $\alpha$ (i.e., the minimal number of keywords in every cache cluster) (see Equation 1).
%
%
Then, we monitor the streaming distribution of real data at different times $t$  when ~\system~ employs different combinatorial settings of padding strategies and modes.
In particular, we use the Kullback–Leibler (KL) distance~\cite{Zhang17}~%
%
to measure the difference between such streaming distributions and the baseline distribution (Figure~\ref{fig:streamingdist}).

Our observation shows that the streaming distribution was different compared to the baseline at the early streaming time (i.e., $t=75-150$). Then, it tended to converge to the baseline when the streaming dataset was almost outsourced completely ($t\geq 175$).
At this time, the padding dataset was also almost used.
The reason for that is because the completed streaming dataset shares the same distribution with the training dataset.
However, at the earlier time, the difference was large because there was some keywords in the training distribution that did not appear yet in these early streaming batches.
%
%
We note that, with $\alpha=256$, the persistent padding settings (i.e., \textit{PH} and \textit{PL}) have the largest  distribution difference since it requires the existence of all keywords in the cluster at the \textit{firstBatch} (in Algorithm~\ref{alg:paddingstrategy}).
With $\alpha=512$ (i.e., more keywords required in clusters), the distribution difference was double (i.e., $0.8\sim1.0$ KL unit) since the setting causes a longer waiting period before the padding strategies meet, under the same streaming rate.
%


\subsection{Discussion on Deployment}
\label{sec:discussion}

%
%
%
%
%
%
%
%


 We note that the above experiments consider the keyword frequency distribution in \textsf{setup} is similar to the one in a period of \textsf{streaming} operation.
We deem that the assumption and the corresponding setting of \system~ for deployment can be practically held in practice.
%
%
%
%

\begin{figure}
\centering
\includegraphics[width=0.5\textwidth,height=3.5cm]{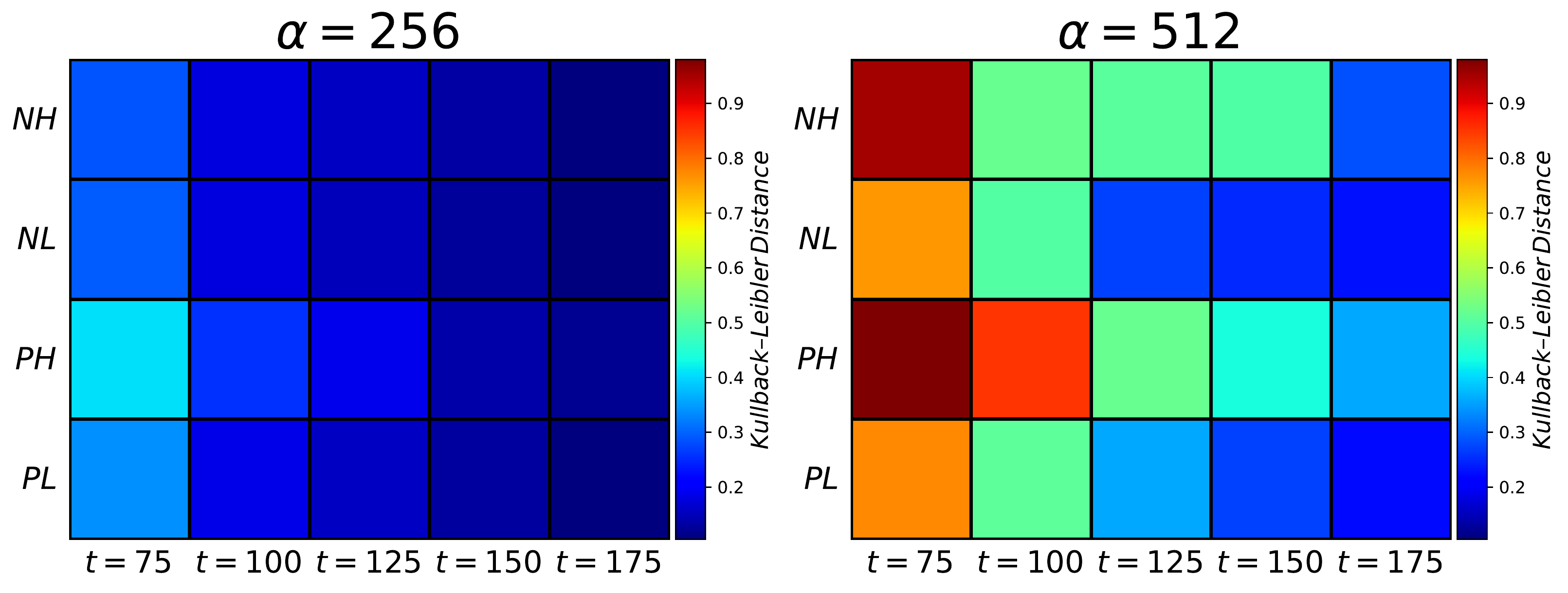} 
\vspace{-15pt}
\caption{The difference in streaming distribution}
\vspace{-15pt}
\label{fig:streamingdist}
\end{figure}

First, this setting is applicable to streaming applications for which the underlying distribution does not change much over time or it is known in advance, like the known data range of the IoT sensors~\cite{Zhang2017StreamingKC,Gomes19}.
Second, in our observation (Figure~\ref{fig:streamingdist}), we show that the streaming distribution changes towards the training distribution over a streaming period, not requiring the exact matching between them for any particular interval. Therefore, the assumption can at least be hold for that duration, and better than assuming that the streaming and training distribution are close for any particular time interval.  
%
%
%
%
%
%
%
In addition, we note that the \textsf{setup} operation can be re-invoked again to re-cluster keywords based on up-to-date streaming data if the streaming distribution is different from the training distribution. In that way, the re-clustering can use that up-to-date streaming distribution as the training distribution. 
%
%
%

We are aware that the keyword distribution difference can cause a long tail effect when applying the proposed padding strategies to low frequent keywords.
For instance, if a keyword only occurs in the first batch of that keyword's cluster and disappears for all subsequent batches, \textit{Padding Controller} still pads that keyword during subsequent batches when padding strategies $NH$ and $NL$ are used.
%
%
%
As seen in the above experiments, such different distribution can happen at the early streaming stages. %
%
We also note that, when the streaming distribution differs from the training distributions, it may cause some ``cold'' clusters and/or the intensive usage of the padding datasets in some  ``hot'' clusters. 
As a results, the overall streaming throughput can be slowed down.

To mitigate the above issues, \system~ offers  \textsf{flushing} and \textsf{re-encryption} operations regarding the highly varied frequency of streaming keywords.
As experimented, \textsf{flushing} could quickly reduce $1.9\sim 2.8\times$ the cache load to boost the ``cold'' clusters. In addition, \textsf{re-encryption} could lowered $64\%$ the amount of bogus pairs used by re-padding all keywords in the ``hot'' cluster.
The above result is obtained when we checked and applied the operations for every fixed time window. 
%
%
Nevertheless, it is non-trivial to optimise the padding overhead in the streaming setting and we leave it as future work.
%
%

\vspace{-10pt}
%
%





\section{Related Works} 
\label{sec:related}

\noindent \textbf{Searchable symmetric encryption}: \system~employs  SSE~\cite{SoWa00} as an underlying building block to enable single-keyword encrypted search. Curtmola et al.~\cite{CurtmolaGKO06} and Kamara et al.~\cite{KamaraPR12} formalise the security of SSE for static and dynamic databases respectively, and devise concrete constructions with sublinear search time. A line of schemes~\cite{ChaseK10,CashJJ13,JareckiJK13,Cash14,Vo20,Vo21} (just to list a few) are proposed to improve performance and expressiveness of SSE. 
Driven by leakage-abuse attacks~\cite{CashGP15,ZhangKP16, Blackstone20}, new schemes~\cite{Bost16,RaphaelBO17,SunYLS18,Sun21} with less leakage in search and update are proposed to achieve forward and backward security.
Note that, although oblivious RAM~\cite{Stefanov12,Stefanov13} provides the highest protection for the \textit{Server}'s memory access pattern, we do not consider it for~\system~due its inefficient capability in the streaming setting. In details, the approach requires more computation and storage at the \textit{Client}, and the communication between the \textit{Client} and the \textit{Server}. Also, ORAM does not hide the size of the query results, unless there is non-trivial padding.
%
%

In the meanwhile, padding countermeasures~\cite{IslamKK12,CashGP15,BostF17} are  considered as an effective approach to obfuscate the leakage during search operations of SSE. 
In particular, Islam et al.~\cite{IslamKK12} propose the first padding countermeasure for SSE; keywords are grouped into different clusters, where each keyword in a cluster matches a set of identical document ids. This requires another data structure to help the client to differentiate real and bogus document ids after search, since all bogus ids are selected from the real ones. 
After that, Cash et al.~\cite{CashGP15} propose another approach; the number of ids in each keyword matching list is padded up to the nearest multiple of an integer, aka padding factor. To guarantee effectiveness, this factor needs to be increased until no unique result size exists. However, this padding factor is a system-wide parameter, and incrementing it introduces redundant padding for all other padded matching lists. 
To reduce padding overhead, Bost and Fouque~\cite{BostF17} propose to pad the keyword matching lists based on clusters of keywords with similar frequency. Their proposed clustering algorithm achieves minimised padding overhead while thwarting the count attack in the static setting. 
Very recently, Xu et al.~\cite{XYWWX19} investigate the formal method to quantify the padding security strength, and propose a padding generation algorithm which makes the bogus and documents similar. 
%
%
Again, all the above padding countermeasures focus on the static setting, where the dataset remains unchanged after the setup. We note that the assumption in this setting is not always true in practice, and therefore \system~is designed to embed padding countermeasures in the dynamic setting, where the keyword existence is a matter in online streaming.  


Recent works~\cite{Kamara19,Patel19} propose volume-hiding encryption schemes to mitigate the leakage-abuse attacks.
We note that those schemes are focused on the static setting, as they resort to specialised data structures and constructions. First, they are not dynamic friendly. Specifically, multi-hashing and cuckoo hashing techniques are adopted as the underlying data structures. It is not easy to insert new data into those data structures, and we are not aware any existing volume-hiding schemes support efficient updates. Second, volume hiding schemes may hide the size of the query result, but it is not clear whether they can protect the relationships between different query keywords when applying them into the context of keyword search. 
%


\vspace{2pt}
\noindent \textbf{Encrypted database systems}: \system~can also be fit into a line of research on designing encrypted database systems. Most of existing encrypted databases~\cite{Poparz11,PappasVK14,PoddarBP16,YuanGWWLJ17, PapadimitriouBCRHSMB16,Zhao16,Panagiotis20,Lei19} focus on supporting rich queries over encrypted data in SQL and NoSQL databases. They mainly target on query functionality and normally integrate different primitives together to achieve the goal. Like the issues in SSE, inference attacks against encrypted databases~\cite{NaveedKW15,KellarisGKA16} are designed to compromise their claimed protection. To address this issue, one approach is to use advanced cryptographic tools such as secure multi-party computation~\cite{PappasVK14,PoddarBP16}. Note that padding can also be adapted to mitigate inference attacks. A system called Seabed~\cite{PapadimitriouBCRHSMB16} proposes a schema for RDBMS that introduces redundant data values in each attribute of data records to hide the frequency of the underlying data values. 
Compared with the above systems, \system~ focuses on the document-oriented data model and supports keyword search over encrypted documents.

%
%

\vspace{-10pt}
\section{Conclusions} %
\label{sec:conclusion}

ShieldDB is an encrypted database system that supports keyword search over encrypted documents with advanced security features. 
Our system employs the SSE framework to implement encrypted data structures for efficient queries. 
To defend against leakage-abuse attacks against SSE, ShieldDB includes effective padding countermeasures targeting adversaries in the dynamic setting. 
To demonstrate the performance of our system, we develop a prototype, and perform  intensive evaluations on various metrics.
We show that our proposed padding strategy is practical and deployable to real-world streaming applications/systems that require the privacy preservation on data stream.

\section{Acknowledgements}

The work was supported in part by the Monash University Postgraduate Publications Award, the Data61-Monash Collaborative Research Project, and the ARC Discovery Project DP200103308. The last author has been supported by the Research Grants Council of Hong Kong under Grant CityU 11217819, Grant CityU 11217620, and R6021-20F. 

\vspace{-10pt}
\bibliographystyle{IEEEtran}
\bibliography{shieldDB}

\begin{thebibliography}{10}
\providecommand{\url}[1]{#1}
\csname url@samestyle\endcsname
\providecommand{\newblock}{\relax}
\providecommand{\bibinfo}[2]{#2}
\providecommand{\BIBentrySTDinterwordspacing}{\spaceskip=0pt\relax}
\providecommand{\BIBentryALTinterwordstretchfactor}{4}
\providecommand{\BIBentryALTinterwordspacing}{\spaceskip=\fontdimen2\font plus
\BIBentryALTinterwordstretchfactor\fontdimen3\font minus
  \fontdimen4\font\relax}
\providecommand{\BIBforeignlanguage}[2]{{%
\expandafter\ifx\csname l@#1\endcsname\relax
\typeout{** WARNING: IEEEtran.bst: No hyphenation pattern has been}%
\typeout{** loaded for the language `#1'. Using the pattern for}%
\typeout{** the default language instead.}%
\else
\language=\csname l@#1\endcsname
\fi
#2}}
\providecommand{\BIBdecl}{\relax}
\BIBdecl

\bibitem{InformationisBeautiful}
{Information is Beautiful}, ``{World's Biggest Data Breaches},'' Online at
  \url{http://www.informationisbeautiful.net/visualizations/worlds-biggest-data-breaches-hacks/},
  2020.

\bibitem{IBMAU-breach19}
IBM, ``2017 ponemon cost of data breach study - australia-specific report,''
  Online at \url{https://www-03.ibm.com/security/au/en/data-breach/}, 2019.

\bibitem{verizon-breach20}
Verizon, ``2020 data breach investigations report,'' Online at
  \url{https://enterprise.verizon.com/en-au/resources/reports/dbir/}, 2020.

\bibitem{CLiu14}
C.~{Liu}, J.~{Chen}, L.~T. {Yang}, X.~{Zhang}, C.~{Yang}, R.~{Ranjan}, and
  R.~{Kotagiri}, ``Authorized public auditing of dynamic big data storage on
  cloud with efficient verifiable fine-grained updates,'' \emph{IEEE TPDS},
  2014.

\bibitem{Fadolalkarim20}
D.~{Fadolalkarim}, E.~{Bertino}, and A.~{Sallam}, ``An anomaly detection system
  for the protection of relational database systems against data leakage by
  application programs,'' in \emph{Proc. IEEE ICDE}, 2020.

\bibitem{Ji19}
Y.~{Ji}, C.~{Xu}, J.~{Xu}, and H.~{Hu}, ``vabs: Towards verifiable
  attribute-based search over shared cloud data,'' in \emph{Proc. IEEE ICDE
  '19}, 2019.

\bibitem{Yi16}
X.~{Yi}, R.~{Paulet}, E.~{Bertino}, and V.~{Varadharajan}, ``Practical
  approximate k nearest neighbor queries with location and query privacy,''
  \emph{IEEE TKDE}, 2016.

\bibitem{Poparz11}
R.~A. Popa, C.~Redfield, N.~Zeldovich, and H.~Balakrishnan, ``{CryptDB:
  protecting confidentiality with encrypted query processing},'' in
  \emph{Proc.~ACM SOSP}, 2011.

\bibitem{microsoftEDB}
{Microsoft SQL Server 2016}, ``{Always Encrypted (Database Engine)},''
  \url{https://msdn.microsoft.com/en-us/library/mt163865.aspx/}, 2016.

\bibitem{PappasVK14}
V.~Pappas, B.~Vo, F.~Krell, S.~Choi, V.~Kolesv, A.~Keromytis, and T.~Malkin,
  ``{Blind Seer: A Scalable Private DBMS},'' in \emph{Proc.~IEEE S\&P}, 2014.

\bibitem{PapadimitriouBCRHSMB16}
A.~Papadimitriou, R.~Bhagwan, N.~Chandran, R.~Ramjee, A.~Haeberlen, H.~Singh,
  A.~Modi, and S.~Badrinarayanan, ``{Big Data Analytics over Encrypted Datasets
  with Seabed},'' in \emph{Proc.~USENIX OSDI}, 2016.

\bibitem{PoddarBP16}
R.~Poddar, T.~Boelter, and R.~A. Popa, ``{Arx: A strongly encrypted database
  system},'' \emph{VLDB Endowment}, vol.~12, no.~11, 2019.

\bibitem{YuanGWWLJ17}
X.~Yuan, Y.~Guo, X.~Wang, C.~Wang, B.~Li, and X.~Jia, ``Enckv: An encrypted
  key-value store with rich queries,'' in \emph{Proc.~ACM AsiaCCS}, 2017.

\bibitem{Huanchen20}
H.~Zhang, X.~Liu, D.~G. Andersen, M.~Kaminsky, K.~Keeton, and A.~Pavlo,
  ``Order-preserving key compression for in-memory search trees,'' in
  \emph{SIGMOD}, 2020.

\bibitem{Meng18}
X.~{Meng}, H.~{Zhu}, and G.~{Kollios}, ``Top-k query processing on encrypted
  databases with strong security guarantees,'' in \emph{ICDE}, 2018.

\bibitem{NaveedKW15}
M.~Naveed, S.~Kamara, and C.~V. Wright, ``{Inference Attacks on
  Property-Preserving Encrypted Databases},'' in \emph{Proc.~ACM CCS}, 2015.

\bibitem{Vincent18}
V.~Bindschaedler, P.~Grubbs, D.~Cash, T.~Ristenpart, and V.~Shmatikov, ``The
  tao of inference in privacy-protected databases,'' \emph{VLDB Endowment},
  vol.~11, no.~11, 2018.

\bibitem{SoWa00}
D.~Song, D.~Wagner, and A.~Perrig, ``Practical techniques for searches on
  encrypted data,'' in \emph{Proc.~IEEE S\&P}, 2000.

\bibitem{CurtmolaGKO06}
R.~Curtmola, J.~Garay, S.~Kamara, and R.~Ostrovsky, ``Searchable symmetric
  encryption: improved definitions and efficient constructions,'' in
  \emph{Proc.~ACM CCS}, 2006.

\bibitem{IslamKK12}
M.~Islam, M.~Kuzu, and M.~Kantarcioglu, ``Access pattern disclosure on
  searchable encryption: Ramification, attack and mitigation,'' in
  \emph{{NDSS}'12}, 2012.

\bibitem{CashGP15}
D.~Cash, P.~Grubbs, J.~Perry, and T.~Ristenpart, ``{Leakage-Abuse Attacks
  against Searchable Encryption},'' in \emph{Proc.~ACM CCS}, 2015.

\bibitem{ZhangKP16}
Y.~Zhang, J.~Katz, and C.~Papamanthou, ``All your queries are belong to us: The
  power of file-injection attacks on searchable encryption.'' in \emph{Proc.
  ~USENIX Security}, 2016.

\bibitem{GrubbsRS17}
P.~Grubbs, T.~Ristenpart, and V.~Shmatikov, ``Why your encrypted database is
  not secure,'' in \emph{Proc. ~ACM HotOS}, 2017.

\bibitem{Saba19}
S.~Eskandarian and M.~Zaharia, ``Oblidb: Oblivious query processing for secure
  databases,'' \emph{VLDB Endowment}, vol.~13, no.~2, 2019.

\bibitem{Liu20}
Z.~{Liu}, B.~{Li}, Y.~{Huang}, J.~{Li}, Y.~{Xiang}, and W.~{Pedrycz},
  ``Newmcos: Towards a practical multi-cloud oblivious storage scheme,''
  \emph{IEEE TKDE}, vol.~32, no.~4, 2020.

\bibitem{Naveed15}
\BIBentryALTinterwordspacing
M.~Naveed, ``The fallacy of composition of oblivious ram and searchable
  encryption,'' \emph{IACR Cryptology ePrint Archive}, vol. 2015, p. 668, 2015.
  [Online]. Available: \url{https://eprint.iacr.org/2015/668}
\BIBentrySTDinterwordspacing

\bibitem{Cash15}
D.~Cash, P.~Grubbs, J.~Perry, and T.~Ristenpart, ``Leakage-abuse attacks
  against searchable encryption,'' in \emph{Proc. of ACM CCS}, 2015.

\bibitem{Mishra18}
P.~{Mishra}, R.~{Poddar}, J.~{Chen}, A.~{Chiesa}, and R.~A. {Popa}, ``Oblix: An
  efficient oblivious search index,'' in \emph{Proc.~IEEE S\&P}, 2018.

\bibitem{BostF17}
R.~Bost and P.-A. Fouque, ``Thrawting leakage abuse attacks againts searchable
  encryption a formal approach and applications to database padding,''
  Cryptology ePrint Archive, Report 2017/1060, 2017.

\bibitem{Kamara12}
S.~Kamara, C.~Papamanthou, and T.~Roeder, ``Dynamic searchable symmetric
  encryption,'' in \emph{Proc. ACM CCS}, 2012.

\bibitem{Cash14}
D.~Cash, J.~Jaeger, S.~Jarecki, and C.~Jutla, ``Dynamic searchable encryption
  in very-large databases: Data structures and implementation,'' in \emph{Proc.
  NDSS}, 2014.

\bibitem{SongDYXZ17}
X.~Song, C.~Dong, D.~Yuan, I.~Xu, and M.~Zhao, ``Forward private searchable
  symmetric encryption with optimized i/o efficiency,'' \emph{IEEE TDSC}, 2018.

\bibitem{Bost16}
R.~Bost, ``Sophos - forward secure searchable encryption,'' in \emph{Proc.~ACM
  CCS}, 2016.

\bibitem{KamaraPR12}
S.~Kamara, C.~Papamanthou, and T.~Roeder, ``Dynamic searchable symmetric
  encryption,'' in \emph{Proc.~ACM CCS}, 2012.

\bibitem{SunLSSY16}
S.-F. Sun, J.~K. Liu, A.~Sakzad, R.~Steinfeld, and T.~H. Yuen, ``An efficient
  non-interactive multi-client searchable encryption with support for boolean
  queries,'' in \emph{Proc.~ESORICS}, 2016.

\bibitem{JareckiJK13}
S.~Jarecki, C.~Jutla, H.~Krawczyk, M.~Rosu, and M.~Steiner, ``{Outsourced
  Symmetric Private Information Retrieval},'' in \emph{Proc.~ACM CCS}, 2013.

\bibitem{RaphaelBO17}
R.~Bost, B.~Minaud, and O.~Ohrimenko, ``Forward and backward private searchable
  encryption from constrained cryptographic primitives,'' in \emph{Proc.~ACM
  CCS}, 2017.

\bibitem{SunYLS18}
S.-F. Sun, X.~Yuan, J.~Liu, R.~Steinfeld, A.~Sakzad, V.~Vo, and S.~Nepal,
  ``Practical backward-secure searchable encryption from symmetric puncturable
  encryption,'' in \emph{Proc.~ACM CCS}, 2018.

\bibitem{Vo20}
V.~Vo, S.~Lai, X.~Yuan, S.-F. Sun, S.~Nepal, and J.~K. Liu, ``Accelerating
  forward and backward private searchable encryption using trusted execution,''
  in \emph{ACNS}, 2020.

\bibitem{BlackstoneKM20}
L.~Blackstone, S.~Kamara, and T.~Moataz, ``Revisiting leakage abuse attacks,''
  in \emph{NDSS}, 2020.

\bibitem{Zhang17}
Z.~Zhang, \emph{Statistical Implications of Turing's Formula}.\hskip 1em plus
  0.5em minus 0.4em\relax New Jersey: Wiley, 2017.

\bibitem{Zhang2017StreamingKC}
Y.~Zhang, K.~Tangwongsan, and S.~Tirthapura, ``Streaming k-means clustering
  with fast queries,'' \emph{ICDE}, 2017.

\bibitem{Gomes19}
H.~M. Gomes, J.~Read, A.~Bifet, J.~P. Barddal, and J.~a. Gama, ``Machine
  learning for streaming data: State of the art, challenges, and
  opportunities,'' \emph{SIGKDD Explor. Newsl.}, 2019.

\bibitem{ChaseK10}
M.~Chase and S.~Kamara, ``Structured encryption and controlled disclosure,'' in
  \emph{Proc.~ASIACRYPT}, 2010.

\bibitem{CashJJ13}
D.~Cash, S.~Jarecki, C.~Jutla, H.~Krawczyk, M.-C. Ro{\c{s}}u, and M.~Steiner,
  ``{Highly-Scalable Searchable Symmetric Encryption with Support for Boolean
  Queries},'' in \emph{Proc.~CRYPTO}, 2013.

\bibitem{Vo21}
V.~Vo, S.~Lai, X.~Yuan, S.~Nepal, and J.~K. Liu, ``Towards efficient and strong
  backward private searchable encryption with secure enclaves,'' in
  \emph{{ACNS}'21}, 2021.

\bibitem{Blackstone20}
L.~Blackstone, S.~Kamara, and T.~Moataz, ``{Revisiting Leakage Abuse
  Attacks},'' in \emph{NDSS}, 2020.

\bibitem{Sun21}
S.-F. Sun, R.~Steinfeld, S.~Lai, X.~Yuan, A.~Sakzad, J.~K. Liu, S.~Nepal, and
  D.~Gu, ``Practical non-interactive searchable encryption with forward and
  backward privacy,'' in \emph{Proc.~NDSS}, 2021.

\bibitem{Stefanov12}
E.~Stefanov, E.~Shi, and D.~Song, ``Towards practical oblivious ram,'' 2012.

\bibitem{Stefanov13}
E.~Stefanov, M.~van Dijk, E.~Shi, T.-H.~H. Chan, C.~Fletcher, L.~Ren, X.~Yu,
  and S.~Devadas, ``Path oram: An extremely simple oblivious ram protocol,''
  2013.

\bibitem{XYWWX19}
L.~Xu, X.~Yuan, C.~Wang, Q.~Wang, and C.~Xu, ``Hardening database padding for
  searchable encryption,'' in \emph{Proc.~IEEE INFOCOM}, 2019.

\bibitem{Kamara19}
S.~Kamara and T.~Moataz, ``Computationally volume-hiding structured
  encryption,'' in \emph{EUROCRYPT19}, 2019.

\bibitem{Patel19}
S.~Patel, G.~Persiano, K.~Yeo, and M.~Yung, ``Mitigating leakage in secure
  cloud-hosted data structures: Volume-hiding for multi-maps via hashing,'' in
  \emph{ACM CCS}, 2019.

\bibitem{Zhao16}
Z.~Chang, L.~Zou, and F.~Li, ``Privacy preserving subgraph matching on large
  graphs in cloud,'' in \emph{SIGMOD}, 2016.

\bibitem{Panagiotis20}
P.~Antonopoulos, A.~Arasu, K.~D. Singh, K.~Eguro, N.~Gupta, R.~Jain,
  R.~Kaushik, H.~Kodavalla, D.~Kossmann, N.~Ogg, R.~Ramamurthy, J.~Szymaszek,
  J.~Trimmer, K.~Vaswani, R.~Venkatesan, and M.~Zwilling, ``Azure sql database
  always encrypted,'' in \emph{SIGMOD}, 2020.

\bibitem{Lei19}
X.~{Lei}, A.~X. {Liu}, R.~{Li}, and G.~{Tu}, ``Seceqp: A secure and efficient
  scheme for sknn query problem over encrypted geodata on cloud,'' in
  \emph{ICDE '19}, 2019.

\bibitem{KellarisGKA16}
G.~Kellaris, G.~Kollios, K.~Nissim, and A.~O{\textquoteright}Neill, ``{Generic
  Attacks on Secure Outsourced Databases},'' in \emph{Proc.~ACM CCS}, 2016.

\end{thebibliography}

%
%
%
%
%
%
\newpage 
\appendices
\section{Security of ShieldDB}
\label{subsec:security_shielddb}

\subsection{The leakage of dynamic searchable encryption scheme}

ShieldDB implements a dynamic searchable encryption scheme (DSSE) $\Sigma = (\textsf{Setup}, \textsf{Streaming}, \textsf{Search})$, consisting of three protocols between a padding service $P$, a storage server $S$, and an querying client $C$.
A database DB$_t = (w_i,id_i)_{i=1}^{|DB_t|}$ is defined as a tuple of keyword and document id pairs with $w_i \subseteq \{0,1\}^*$ and $id_i \in \{0,1\}^l$ at the time interval $t \geq0$. 
We first formalise the SSE-based leakage functions of ~\system~ as follows.

\textsf{Setup}$(\textrm{DB}_0)$ is a protocol that takes as input a database DB$_0$, and outputs a tuple of $(k_1,k_2,\{L_1,\dots,L_m\},st,B,\textrm{EDB}_0)$, where $k_1,k_2$ are secret keys to encrypt keywords and document ids, a set $\{L_1,\dots,L_m\}$ contains cache clusters, $st$ maintains keyword states, and $B$ is a bogus dataset to be used for padding, and EDB$_0$ is the encrypted database at $t=0$.

\textsf{Streaming}$(k_1,k_2,L_u,st,B,\{(w_i,id_i)\};\textrm{EDB}_{t-1},\{(u_i,v_i)\})$ is a protocol between $P$ with inputs $k_1$,$k_2$, and $L_u$ ($1\leq u \leq m$) the cache cluster to be updated, the states $st$, the bogus dataset $B$, and the set of keyword and document id pairs $\{(w_i,id_i)\}$ to be streamed, and $S$ with input $\textrm{EDB}_{t-1}$ the encrypted database at time $t-1$ ($t \geq 1$), and $\{(u_i,v_i)\}$ the set of encrypted keyword and document identifier pairs for batch insertion. 
Once $P$ uploads $\{(u_i,v_i)\}$ to $S$, $st$ and $B$ gets updated, $L_u$ is reset. At $S$, once $\textrm{EDB}_{t-1}$ gets updated by $\{(u_i,v_i)\}$, it changes to $\textrm{EDB}_{t}$.

\textsf{Search}$(k_1,k_2,q,st;\textrm{EDB}_{t})$ is a protocol between $C$ with the keys $k_1,k_2$, the query $q$ querying the matching documents of a single keyword $w_i$, and the state $st$, and $S$ with $\textrm{EDB}_{t}$. 
Meanwhile, $C$ queries $P$ for retrieving cached documents of the query keyword.

The security of ShieldDB can be quantified via a leakage function $\mathcal{L}=(\mathcal{L}^{Stp},\mathcal{L}^{Stream},\mathcal{L}^{Srch})$. 
It defines the information exposed in \textsf{Setup}, \textsf{Streaming}, and \textsf{Search}, respectively.
%
The function ensures that ShieldDB does not reveal any information beyond the one that can be inferred from $\mathcal{L}^{Stp}$, $\mathcal{L}^{Stream}$, and $\mathcal{L}^{Srch}$.

In \textsf{Setup}, $\mathcal{L}^{Stp} = |\textrm{EDB}_0|$ presenting the size of $\textrm{EDB}_0$, i.e., the number of encrypted keyword and document id pairs.

In \textsf{Streaming}, ShieldDB is \textit{forward private} as presented in \textsf{Streaming} protocol. Hence $\mathcal{L}^{Stream}$ can be written as
\[
    \mathcal{L}^{Stream}(\{(w,id)\}) = \mathcal{L}^\prime(\{id\})
\]

\noindent where $\{(w,id)\}$ denotes a batch of keyword and id pairs $w$, and $\mathcal{L}^\prime $ is a stateless function. Hence, $\mathcal{L}^{Stream}$ only reveals the number of pairs to be added to EDB. ShieldDB does not leak any information about the updated keywords. 
In particular, $S$ cannot learn that the newly inserted documents match a keyword that being previously queried. 

In \textsf{Search}, $\mathcal{L}^{Srch}$ reveals common leakage functions~\cite{CurtmolaGKO06}: the \textit{access pattern} \textsf{ap} and the \textit{search pattern} \textsf{sp} as follows. 

The \textsf{ap} reveals the encrypted matching document identifiers associated with search tokens. For instance, if an adversary controls $\textrm{EDB}_t$, she monitors the search query list $Q_t=\{q_1,\dots,q_{n-1}\}$ by the time order. 
Then, $\textsf{ap}(q_i)$ (with $1 \leq i \leq n-1 $) for a query keyword $w_i$ is presented as 
\[
\textsf{ap}(q_i)= \textrm{EDB}(w_i) = \{(u_{w_i},v_{w_i})\}
\]
where $u_{w_i}$ and $v_{w_i}$ are an encrypted keyword and document id entry associated with $w_i$ in $\textrm{EDB}_t$. 

The \textsf{sp} leaks the repetition of search tokens sent by \textit{C} to \textit{S}, and hence, the repetition of queried keywords in those search tokens. 
\[
\textsf{sp}(q_i)= \{\forall j \neq i, q_j \in Q_t, w_j = w_i\}
\]

Next, we detail the leakage during the interaction between \textit{C} and \textit{S} over $Q_t$ on a given $\textnormal{DB}_t$. We call an instantiation of the interaction as a \textit{history} $H_t=(\textrm{DB}_t,q_1,\dots,q_{n-1})$. We note that the states of keywords in $\textnormal{DB}_t$ do not change during these queries. The leakage function of $H_t$ is presented as
\[
\mathcal{L}(H_t)=(|\textrm{EDB}(w_i)|,\dots,|\textrm{EDB}(w_{n-1})|,\alpha(H_t),\sigma(H_t))
\] 
where $|\textrm{EDB}(w_i)|$ ($1 \leq i \leq n-1$) is the number of matching documents associated with the keyword $w_i$ mapping to the query $q_i$, $\alpha(H_t)= \{\textsf{ap}(q_1),\dots,\textsf{ap}(q_{n-1})\}$ is the \textit{access pattern} induced by $Q_t$, and $\sigma(H_t)$ is a symmetric binary matrix such that for $1 \leq i,j \leq n-1$, the element at $i^{th}$ row and $j^{th}$ column is 1 if $w_i=w_j$, and 0 otherwise.

\vspace{-5pt}
\subsection{Constrained security in ~\textnormal{\system}}

We note that the database knowledge of non-persistent and persistent adversaries falls outside the traditional SSE formalisation~\cite{BostF17}.
The reason is because the notion is limited by the fact that knowing the DB, the query list is uniquely defined by the acceptable leakage of SSE.
Namely, there is already the uniqueness of a \textit{history} given the knowledge of the adversary.
%
%
Therefore, we want to define new constrained security that can formalise the adversary's knowledge in $H_t$.
But, given the constraint, there are multiple \textit{histories} at time $t$ satisfying the leakage function (i.e., making $H_t$ no longer unique). 
In this way, one needs to find two different lists of queries generating the exact same leakage with the same $\textrm{DB}_t$. 
%
As a starting point, we extend the Definition 3.1 in ~\cite{BostF17} to formalise 
%
$H_t$ satisfying  the constraint $C$ iff $C(H_t)= {\normalfont \tt{true}}$ as in Definition.~\ref{def:np_constraints}.

\begin{definition}\label{def:np_constraints}
A constraint C $=(C_0, C_1,\dots,C_{n-1})$ over a database set $\mathcal{DB}_t$ and a query set $Q_t=\{q_1, \dots, q_{n-1}\}$, is a sequence of algorithms such that, for $\textnormal{DB}_t \in \mathcal{DB}_t$, $C_0(\textnormal{DB}_t)= (flag_0,st_0)$, where $flag_0$ is ${\normalfont \tt{true}}$ or ${\normalfont \tt{false}}$ and $st_0$ captures $C_0$'s state, and for q $\in Q_t$, $C_i(q, flag_{i-1})=(flag_i), (i\geq 1)$. The constraint is consistent if $C_i(.,{\normalfont \tt{false}} ,.) = ({\normalfont \tt{false}} ,.)$ (the constraint remains ${\normalfont \tt{false}}$ if it once evaluates to $ {\normalfont \tt{false}})$.

For a history $H_t = (\textnormal{DB}_{t},q_1, \dots, q_{n-1})$, we note $C(H_t)$ the evaluation of
\begin{equation*}
\begin{split}
C(H_t) := C_{n-1}(q_{n-1},C_{n-2}(\dots,C_0(\textnormal{DB}_t))).
\end{split}
\end{equation*}
If $C(H_t)= {\normalfont \tt{true}}$, we say that $H_t$ satisfies $C$. A constraint C is valid if there exists two different efficiently constructable histories $H_t$ and $H^\prime_t$ satisfying C.
\end{definition}

After defining the knowledge in $H_t$ known by the adversary, we also formalise some elements (i.e., queries) in $H_t$ that are unknown to the adversary. Namely, they are left \textit{free} from the constraint \textit{C}. 
%
%
We note that Bost et. al. ~\cite{BostF17} already defined \textit{free} components in static database setting (i.e., not time interval $t$) for constraint security. Therefore, we extend the Definition 3.2 in ~\cite{BostF17} to formalise \textit{free} component in $C$ regarding $H_t$ in below Def.~\ref{def:np_free}.

\begin{definition}\label{def:np_free}\textnormal{(Free history component)}
We say that C lets the i-th query \textnormal{free} if for history $H_t=(\textnormal{DB}_{t},q_1, \dots, q_{n-1})$ satisfying C, for every search (resp. update) query q if $q_i$ is a search (resp. update) query, $H^\prime_t=(\textnormal{DB}^\prime_{t},q_1, \dots, q_{i-1},q,q_{i+1},\dots, q_{n-1})$ also satisfies C, where $\textnormal{DB}^\prime_{t} \in \mathcal{DB}_t$.
\end{definition} 

The idea behind of letting $i$-$th$ query \textit{free} is that there exists some other queries in the history $H_t^\prime$ such that  $H_t^\prime$ still satisfies $C$ (and both $\mathcal{L}(H_t)=\mathcal{L}(H^\prime_t)$) without modifying the leakage $\mathcal{L}(H^\prime_t)$.

Now, we also define the \textit{acceptable} constraint notion, so that, given a constraint $C$, and a leakage function $\mathcal{L}$, for every history $H_t$, we are able to find another \textit{history} satisfying $C$ with the same leakage.
\begin{definition}\label{def:np_acceptableconstraints}
A constraint C is $\mathcal{L}$-acceptable for some leakage $\mathcal{L}$ if, for every efficiently computable history $H_t$ satisfying C, there exists an efficiently computable $H^\prime_t$ $\neq $ $H_t$ satisfying C, for $H^\prime_t=(\textnormal{DB}^\prime_{t},q_1, \dots, q_{n-1})$, such that $\mathcal{L}(H_t)=\mathcal{L}(H^\prime_t)$.

A set of constraints $\mathfrak{C}$ is said to be $\mathcal{L}$-acceptable if all its elements are $\mathcal{L}$-acceptable.
\end{definition} 

Now, after giving background definitions, we start to investigate the query at time $t$. We recall that the \textit{Client} only triggers the search on completely outsourced data in $\textrm{EDB}_t$, where ($t>0$) is a random interval upon receiving search query tokens.
Therefore, we consider that the leakage function only depends on the query itself, and on the state of $\textrm{DB}_t$: $\mathcal{L}(q)$ can be presented as a stateless function of $f_\mathcal{L}(q,\textrm{DB}_t)$.
We make an observation on  $\textrm{EDB}_t$ that: let $C$ be a constraint, $H_t=(\textrm{DB}_t,q_1,\dots,q_{n-1})$ an \textit{history} satisfying $C$, and $q, q^\prime$ be two queries such that $\widetilde{H}_t=H_t||q=(\textrm{DB}_t,q_1,\dots,q_{n-1},q)$ and $\widetilde{H}_t^\prime = H_t||q^\prime=(\textrm{DB}_t,q_1,\dots,q_{n-1},q^\prime)$. 
Then, if $f_\mathcal{L}(q,\textrm{DB}_t)=f_\mathcal{L}(q^\prime,\textrm{DB}_t)$, then both $\widetilde{H}_t$ and $\widetilde{H}_t^\prime$ with the same leakage satisfying $C$.
This observation can be iterated to create multiple (i.e., more than 2) \textit{histories} using the same $\textrm{DB}_t$ and they are both satisfying $C$ with the same leakage.
Therefore, we can define a clustering $\Gamma_t=\{G_1,\dots,G_m\}$ of queries induced by the leakage $\mathcal{L}$ after history $H_t$ is a partition of a query set $Q_t$, for which, in every cluster, queries share the same leakage after running the history $H_t$ as below. 
\[\bigcup_{i=1}^m G_i= Q_t\] \vspace{-9pt}
\[\forall i \neq j ~~ G_i \cap G_j = \emptyset\] \vspace{-12pt}
\[ \textnormal{and}~\forall q, q^\prime \in G_i, \mathcal{L}(q,H_t) = \mathcal{L}(q^\prime,H_t)\]

\noindent where $\mathcal{L}(H_t,q)$ is the output of $\mathcal{L}(q)$ after having been run on each element of $H_t$. 
Note that, we omit the subscript $t$ in $\Gamma_t$ in the clear context of $H_t$; otherwise, we state it separately. 
We denote $\Gamma_\mathcal{L}(H_t)$ the clustering induced by $\mathcal{L}$ after $H_t$. 
We can see that it is impossible to merge different clusters in  $\Gamma_\mathcal{L}(H_t)$ with the same leakage. 
Therefore, formally, for $\Gamma_\mathcal{L}(H_t)=\{G_1,\dots,G_m\}$, where $m$ is the total number of clusters, we have:
\[ \forall i \neq j, \forall q \in G_i, \forall q^\prime \in G_j, \mathcal{L}(H_t,q) \neq \mathcal{L}(H_t,q^\prime)\]

\noindent We present $\Gamma_{\mathcal{L},C}(H_t)$ the $\mathcal{L}$-induced clustering applied on history $H_t$ satisfying $C$ such that a subset of queries $Q_t$ in queries $q$ gives $C(H_t||q)=\normalfont \tt{true}$.
We can see that, in the singular query $q$ earlier, $\mathcal{L}(q)$ only depends on $q$ and $\textnormal{DB}_{t}$.
Therefore, more generally, when $q$ is a set of queries with $C(H_t||q)=\normalfont \tt{true}$, $\Gamma_{\mathcal{L},C}(H_t)$ only depends on $\textnormal{DB}_{t}$.
Indeed, the clustering $\Gamma_{\mathcal{L},C}(H_t)$ acquires at least two elements in every cluster.
Otherwise,  an \textit{history} $H_t$ can be constructed without any different \textit{history} $H_t^\prime$.
Namely, we need at least $|G_i| > 2$ for $\forall i \in [1,m]$, to make sure that there are at least $2$ constrained histories can be found. 
Therefore, we can extend Def.~\ref{def:np_acceptableconstraints} to have an acceptable constraint $C$ with $\alpha$ histories, where $|G_i| > \alpha$.
We note that the notation $\alpha$ here is inline with the clustering algorithm in \textsf{setup} in ~\system ~ (Section 4.1).

\begin{definition}\label{def:np_acceptableconstraints_k}\textnormal{(Extended acceptable constraint)}
A constraint C is $(\mathcal{L},\alpha)$-acceptable for some leakage $\mathcal{L}$ and integer $\alpha>1$ if, for every efficiently computable history $H_t^0$ satisfying C $(i.e., C(H_t^0) = {\normalfont \tt{true}})$, there exists $(\alpha-1)$ efficiently computable $\{H^i_t\}_{1\leq i \leq \alpha-1}$  such that $H^i_t \neq H^j_t$ for $i\neq j$, that are all satisfying C, and $\mathcal{L}(H^0_t) =\dots =\mathcal{L}(H^{\alpha-1})$.
\end{definition} 

Now, we can see that when $|G_i|\geq \alpha$, for $\forall i \in [1,m]$, i.e.,  strictly more than one element in each cluster of  $\Gamma_{\mathcal{L},C}(H_t)$, $C$ is $(\mathcal{L},\alpha)$-acceptable, as formalised in below Proposition~\ref{prop:k_acceptable}.

\begin{proposition}\label{prop:k_acceptable}
Let C be a constraint, and $\mathcal{L}$ a leakage function. If for every history $H_t$ satisfying $C$, the clustering $\Gamma_{\mathcal{L},C}(H_t)=\{G_1,\dots,G_m\}$ is such that $|G_i|\geq \alpha$ for all i, $C$ is $(\mathcal{L},\alpha)-acceptable$.
\end{proposition}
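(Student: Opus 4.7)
\textbf{Proof proposal for Proposition~\ref{prop:k_acceptable}.}

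The plan is to give a direct constructive argument: given any efficiently computable history $H_t^0$ satisfying $C$, I will exhibit $\alpha-1$ distinct alternative histories, all satisfying $C$ and all producing the same leakage. The natural place to introduce the variation is at a single query position, and the cluster-size hypothesis will give me exactly the $\alpha-1$ fresh choices I need.

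Concretely, write $H_t^0 = (\mathrm{DB}_t, q_1^0, \ldots, q_{n-1}^0)$ and let $H_{t,\mathrm{pref}} = (\mathrm{DB}_t, q_1^0, \ldots, q_{n-2}^0)$ denote the prefix obtained by dropping the final query. Since $C(H_t^0) = \mathtt{true}$ and $C$ is consistent, $C$ evaluates to $\mathtt{true}$ on $H_{t,\mathrm{pref}}$ as well. Now consider the clustering $\Gamma_{\mathcal{L},C}(H_{t,\mathrm{pref}}) = \{G_1, \ldots, G_m\}$ that this prefix induces on the space of admissible last queries. By hypothesis, every $G_j$ has cardinality at least $\alpha$; in particular the unique cluster $G_{j^\star}$ containing $q_{n-1}^0$ contains at least $\alpha-1$ further queries, which I enumerate as $q_{n-1}^1, \ldots, q_{n-1}^{\alpha-1}$. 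For each $i \in [1, \alpha-1]$, define
\[
H_t^i := (\mathrm{DB}_t, q_1^0, \ldots, q_{n-2}^0, q_{n-1}^i).
\]
By the defining property of a cluster, $\mathcal{L}(H_{t,\mathrm{pref}}, q_{n-1}^i) = \mathcal{L}(H_{t,\mathrm{pref}}, q_{n-1}^0)$, so $\mathcal{L}(H_t^i) = \mathcal{L}(H_t^0)$ for all $i$. Moreover, the cluster $\Gamma_{\mathcal{L},C}(H_{t,\mathrm{pref}})$ was defined over precisely those final queries $q$ satisfying $C(H_{t,\mathrm{pref}} \mathbin\| q) = \mathtt{true}$, so each $H_t^i$ satisfies $C$. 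Finally, distinctness of the $H_t^i$ (including from $H_t^0$) is immediate since the $\alpha$ queries $q_{n-1}^0, \ldots, q_{n-1}^{\alpha-1}$ were chosen pairwise distinct inside $G_{j^\star}$.

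The remaining task is to justify efficient computability of the $H_t^i$. The prefix is inherited from $H_t^0$ (which is efficiently computable by assumption), so it suffices to produce $\alpha-1$ additional queries inside $G_{j^\star}$ efficiently. This follows by walking through representatives of the query space and testing membership in $G_{j^\star}$ via the stateless leakage function $f_{\mathcal{L}}(\cdot, \mathrm{DB}_t)$ and the constraint predicate $C$; since $|G_{j^\star}| \geq \alpha$, at most polynomially many samples are needed to collect the required $\alpha - 1$ distinct alternatives. This yields Definition~\ref{def:np_acceptableconstraints_k} directly.

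The main obstacle I anticipate is the efficiency argument at the very end: the proposition implicitly presumes that one can effectively enumerate (or sample) fresh elements of a cluster, and a fully rigorous treatment would need to pin down the computational model for the query space and confirm that cluster membership is decidable in polynomial time. In the SSE context of the paper, the query space is a polynomial-size keyword universe and $f_{\mathcal{L}}$ is polynomial-time computable, so this is essentially a bookkeeping step; nonetheless, it is the only place where the proof is not purely combinatorial, and it would warrant a short explicit remark tying the construction to the efficiency assumption on $H_t^0$.
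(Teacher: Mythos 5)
You should first note that the paper never gives a formal proof of Proposition~\ref{prop:k_acceptable}: it is stated bare, and its only justification is the preceding observation that if $f_{\mathcal{L}}(q,\mathrm{DB}_t)=f_{\mathcal{L}}(q^\prime,\mathrm{DB}_t)$ then $H_t\|q$ and $H_t\|q^\prime$ both satisfy $C$ with the same leakage, ``iterated'' to obtain more than two histories. Your construction is essentially a formalization of that same observation --- vary the last query inside its cluster relative to the prefix --- and your two additions (the argument that consistency of $C$ forces the prefix evaluation to be $\mathtt{true}$, and the explicit discussion of efficient enumerability of cluster members) are correct and genuinely fill holes the paper leaves open.

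There is, however, one gap that you inherit from the paper's informal sketch but that the paper's own downstream use of the proposition quietly avoids. The full-history leakage $\mathcal{L}(H_t)$ defined in Appendix~A includes the search pattern $\sigma(H_t)$, the symmetric matrix recording which query positions carry the same keyword. If the keyword of $q_{n-1}^0$ also occurs at some earlier position $j<n-1$, then your $H_t^i$, which changes only the final position, has $\sigma_{j,n-1}=0$ where $H_t^0$ has $\sigma_{j,n-1}=1$, so $\mathcal{L}(H_t^i)\neq\mathcal{L}(H_t^0)$ --- unless one insists that the per-query leakage defining the clustering is the history-dependent $\mathcal{L}(q,H_{t,\mathrm{pref}})$ and already encodes repetition relative to the prefix, which the paper explicitly does not do (it reduces $\mathcal{L}(q)$ to the stateless $f_{\mathcal{L}}(q,\mathrm{DB}_t)$). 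When the paper actually applies the proposition to show that $\mathfrak{C}^{\mathcal{DB}_t}$ is $(\mathcal{L}_{NP},\alpha)$-acceptable, it builds the alternative history by swapping two same-cluster keywords throughout the \emph{entire} history (every occurrence of $q_1$ becomes $q$ and vice versa); this transposition preserves both the result lengths and $\sigma(H_t)$. You should either replace your single-position substitution by that global swap, or add an explicit hypothesis that same-cluster queries agree on their repetition pattern with respect to the prefix; as written, the leakage-equality step fails for histories with repeated queries.
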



\subsection{Security against Non-persistent Adversary}
\label{subsec:security_ana_np}
\noindent{\textbf{Prior knowledge of the database}}:
Considering the adversary's knowledge of the database is $\textnormal{DB}_{t}$ when she captures $\mathcal{L}(H_t)$, 
we use the predicate $C^{\textnormal{DB}_t}$ to formalise this knowledge, by adapting the notion of server's knowledge in~\cite{BostF17}.
Formally, we have $C^{\textnormal{DB}_t}(H_t)= {\normalfont \tt{true}}$ if the database of the input history is $\textnormal{DB}_t$.
As used in Definition \ref{def:np_acceptableconstraints}, $C^{\textnormal{DB}_t}$ ensures that all challenge histories' database is the same, i.e., $\textnormal{DB}_t$.
That also leave all queries in $\textnormal{DB}_t$ are left \textit{free}, as defined in Definition~\ref{def:np_free}. 
More generally, we can model the fact that the adversary know the database by considering the constraint set $\mathfrak{C}^{\mathcal{DB}_t}=\{C^{\textnormal{DB}_{t}},{\textnormal{DB}_{t}}\in \mathcal{DB}_t\}$.

Now, we recall that the non-persistent adversary only captures an interval $t$, and \textsf{search} only triggers on encrypted entries inserted in $\textrm{EDB}_t$. Therefore, we consider the scheme  $\Sigma_{NP}=(\textsf{Setup},\textsf{Search})$ at time $t$ for the non-persistent adversary. We start adding a padding mechanism presented in Algorithm 1 (i.e., Padding Strategies) to $\Sigma$ such that, for every keyword in $\textnormal{DB}_{t}$, there are at least different $(\alpha-1)$ keywords with the same number of matching documents. 
%
%
Then, with the knowledge of ${\textnormal{DB}_t}$, the leakage function of $\Sigma_{NP}$ is formally defined as $\mathcal{L}_{NP}=(\mathcal{L}^{Stp},\mathcal{L}^{Srch},\mathcal{L}^{\alpha-pad})$, where $\mathcal{L}^{Stp}$ and $\mathcal{L}^{Srch}$ reveals the leakage in \textsf{Setup} at at $t$ and \textsf{Search} against $\textrm{EDB}_t$, respectively (see Section~\ref{subsec:security_shielddb}), and the new leakage $\mathcal{L}^{\alpha-pad}$ reveals the minimum size of clusters induced by $\mathcal{L}^{\alpha-pad}$.

By using Proposition~\ref{prop:k_acceptable}, we can show that $\mathfrak{C}^{\mathcal{DB}_t}$ is an $(\mathcal{L}_{NP},\alpha)$-acceptable set of constraints, where $\alpha$ is the minimum cluster size (over all constructable databases). 
The reason is that, since constraints in $\mathfrak{C}^{\mathcal{DB}_t}$ leave all queries \textit{free} for every history $H_t=(\textnormal{DB}_{t},q_1, \dots, q_{n-1})$, we can generate a different history $H^\prime_t$ with the same leakage by choosing another  query $q \neq q_1$ that are both matching the same number of documents, and changing all queries $q_i=q_1$ in $H_t$ to $q$. Also, if there is queries $q_j = q$ in $H_t$, we can switch queries in $q_j$ to $q_1$. 
This can give us a history $H^\prime_t\neq H_t$ with the same leakage of $H_t$.
We note that there are at least $\alpha$ choices of $q$ to create  $\Gamma_{\mathcal{L}_{NP}}(H_t)$ we can derive $\mathfrak{C}^{\mathcal{DB}_t}(\mathcal{L}_{NP},\alpha)$-acceptable.
%

Now, we are ready to define the notion of constrained adaptive indistinguishability for $\Sigma_{NP}$ given $\mathfrak{C}^{\mathcal{DB}_t}$ and the leakage $(\mathcal{L}_{NP})$.

%
%

\begin{definition}\label{def:security_n_adversary}
 Let $\Sigma_{NP}$ = {\normalfont(\textsf{Setup,Search})} be the SSE scheme of \system,  $\lambda $ be the security parameter, and $\mathcal{A}$ be a non-persistent adversary. Let $\mathfrak{C}^{\mathcal{DB}_t}$ be a set of  $(\mathcal{L}_{NP},\alpha)$-acceptable constraints.  Let {\normalfont Ind$_{SSE,\mathcal{A},\mathcal{L}_{NP}, \mathfrak{C}^{\mathcal{DB}_t},\alpha}$} be the following game:

\begin{minipage}[t]{6.5cm}

{\normalfont Ind$_{\mathrm{SSE},\mathcal{A},\mathcal{L}_{NP}, \mathfrak{C}^{\mathcal{DB}_t},\alpha}$}($\lambda$) \textit{Game}:\\
$\hspace*{3mm}b\xleftarrow{\$}\{0, \dots, \alpha-1\}$\\
$\hspace*{3mm}(C^{\textnormal{DB}_{t}}_0,\textnormal{DB}^0_t,\dots,\textnormal{DB}^{\alpha-1}_t)\leftarrow \mathcal{A}(1^{\lambda})$\\
$\hspace*{3mm}(K,\textnormal{EDB}^b_t)\leftarrow {\normalfont\textsf{Setup}}(\textnormal{DB}^b_t)$\\
$\hspace*{3mm}(C^{\textnormal{DB}_{t}}_1,q^0_1,\dots,q^{\alpha-1}_1)\leftarrow \mathcal{A}(\textnormal{EDB}^b_t)$\\
$\hspace*{3mm}\tau^b_{1}\leftarrow \normalfont\textsf{Search}(q^b_1)$\\
\hspace*{3mm}\textbf{for} $i = 2$ \textbf{to} $n$ \textbf{do}\\
$\hspace*{6mm}(C^{\textnormal{DB}_{t}}_i,q^0_i,\dots,q^{\alpha-1}_i)\leftarrow \mathcal{A}(q^b_{i-1})$\\
$\hspace*{6mm}\tau^b_{i}\leftarrow \normalfont\textsf{Search}(q^b_i)$\\
\hspace*{3mm}\textbf{end for}\\
$\hspace*{3mm}b^\prime \leftarrow \mathcal{A}(\tau^b_{n})$\\
\hspace*{3mm}\textbf{if} b $=$ b$^\prime$ \textbf{return} 1, \textbf{otherwise return} 0\\ 
\end{minipage}

where $\tau^b_{i}\leftarrow \normalfont\textsf{Search}(q^b_i)$ presents the transcript of the query $q^b_i$, and with the restriction that, for all the $H^i_t=(\textnormal{DB}^i_t,q^0_i,\dots, q^{n-1}_i)$,
\begin{itemize}
  \setlength{\itemsep}{1pt}
  \setlength{\parskip}{0pt}
  \setlength{\parsep}{0pt}
\item $C^{\textnormal{DB}_{t}} \in \mathfrak{C}^{\mathcal{DB}_t}, and~ \forall 0\leq i \leq (\alpha-1), C^{\textnormal{DB}_t}(H^i_t) = {\normalfont \tt{true}}$
\item $\mathcal{L}(H^0_t)=\dots=\mathcal{L}(H^{\alpha-1}_t)$
\end{itemize}

\noindent We say that $\Sigma$ is $(\mathcal{L}_{NP},\mathfrak{C}^{\mathcal{DB}_t},\alpha)$-constrained-adaptively-indistinguishable if for all probabilistic polynomial time adversary $\mathcal{A},$
\begin{equation}
\begin{split}
\label{eq:nonpersistent_alpha}
 & {\normalfont  \textbf{Adv}^{\mathrm{Ind}}_{\mathcal{A},\mathcal{L}_{NP},\mathfrak{C}^{\mathcal{DB}_t},\alpha}}  (\lambda) = \\
      &    \bigl| \mathbb{P}[{\mathrm{Ind}_{\mathrm{SSE},\mathcal{A},\mathcal{L}_{NP},\mathfrak{C}^{\mathcal{DB}_t},\alpha}}(\lambda)=1] - \frac{1}{\alpha} \bigr| \leq negl(\lambda).
\end{split}
\end{equation}
\end{definition}  

We can see that $\Sigma_{NP}$ offers at least $log(\alpha)$ bits of security. Given $\mathfrak{C}^{\mathcal{DB}_t}$ $(\mathcal{L}_{NP},\alpha)$-acceptable, we can analysing the transcripts $\tau^b_{i}$ under the choice of $\alpha$.
%
%
First, we make an observation on the keyword choice in  $(\textnormal{DB}^0_t,\dots,\textnormal{DB}^{\alpha-1}_t)$ as follows.
We denote by $\Delta^i_t=\{w^i_1,\dots,w^i_{n-1}\}$ the keyword space of $\textnormal{DB}^i_t$, where $i \in \{0,\alpha-1\}$.
Then, $C^{\textnormal{DB}_t}(H^i_t) = {\normalfont \tt{true}}$ and all $\mathcal{L}(H^0_t)=\dots=\mathcal{L}(H^{\alpha-1}_t)$ imply  $\Delta^0_t = \dots = \Delta^{\alpha-1}_t$.
Let $f(w)$ be a function returning the frequency of the keyword $w$, we can see that, for all $w^0_j \in \Delta^0_t$, where $j \in |\Delta^0_t|$, there are at least one another $w^i_j$ in $\Delta^i_t$ (i.e., $\forall i\neq 0$) such that $f_{w^0_i} = f_{w^i_i}$. 
This turns out that the \textsf{Setup} needs to groups at least $\alpha$ keywords and pad them to be the same length such that, for a given $q^b_i$ in \textsf{Search}, under the chosen $b$, the transcript $\tau^b_{i}$ can be hardened by at least $(\alpha-1)$ choices.

Now, we adapt the Theorem 2 in~\cite{BostF17} to prove the extended constrained indistinguishability (i.e., Definition~\ref{def:security_n_adversary})  by using regular leakage indistinguishability and extended acceptability of constraint set $\mathfrak{C}^{\mathcal{DB}_t}$ as follows. 

\begin{theorem}\label{theo:theorem_non_persistent}
Let $\Sigma_{NP}$ = {\normalfont(\textsf{Setup,Search})} be our SSE scheme, and $\mathfrak{C}^{\mathcal{DB}_t}$ a set of knowledge constraints. If $\Sigma_{NP}$ is $\mathcal{L}_{NP}$-constrained-adaptively-indistinguishable secure, and  $\mathfrak{C}^{\mathcal{DB}_t}$ is   $(\mathcal{L}_{NP},\alpha)$-acceptable, then $\Sigma_{NP}$ is ($\mathcal{L}_{NP}$,$\mathfrak{C}^{\mathcal{DB}_t},\alpha$)-constrained-adaptively-indistinguishability secure.
\end{theorem}

\begin{proof}

Let $\mathcal{A}$ be an adversary in the {\normalfont Ind$_{\mathrm{SSE},\mathcal{A},\mathcal{L}_{NP}, \mathfrak{C}^{\mathcal{DB}_t},\alpha}$} game. 
We construct an adversary $\mathcal{B}$ against the game. $\mathcal{B}$ first randomly picks two integer $\alpha_0,\alpha_1 \in \{0,\alpha-1\}$. 
Then, $\mathcal{B}$ starts $\mathcal{A}$ and receives $\alpha$ databases $(\textnormal{DB}^0_t,\dots,\textnormal{DB}^{\alpha-1}_t)$. 
Upon giving the pair $(\textnormal{DB}^{\alpha_0}_t,\textnormal{DB}^{\alpha_1}_t)$ to the challenger, where the challenger holds a random secret bit $b$, $\mathcal{B}$ receives the challenge encrypted database $\textnormal{EDB}^*_t$ which she forwards to $\mathcal{A}$. 
Then, $\mathcal{A}$ repeatedly outputs $\alpha$ queries $(q^0_i,\dots,q^{\alpha-1}_i)$ and gives to $\mathcal{B}$. To respond, $\mathcal{B}$ outputs $(q^{\alpha_0}_i,q^{\alpha_1}_i)$ to the game, and receives back the transcript $\tau^*_{i}$ and forwards it to $\mathcal{A}$. 
Then, $\mathcal{A}$ outputs the integer $\alpha^\prime$. If $\alpha^\prime=\alpha_0$, $\mathcal{B}$ outputs $b^\prime=0$, else if $\alpha^\prime=\alpha_1$, $\mathcal{B}$ outputs $b^\prime=1$, and otherwise outputs the probability $1/2$ for the output $0$ and the probability $1/2$ for the output $1$.
%

We first make an observation: for the pair $(H^{\alpha_0}_t,H^{\alpha_1}_t)$, the views of the adversary  $\mathcal{B}$ are indistinguishable due to   $\mathcal{L}_{NP}(H^{\alpha_0}_t)=\mathcal{L}_{NP}(H^{\alpha_1}_t)$, presenting both satisfying $\mathfrak{C}^{\mathcal{DB}_t}$. 
Then we can formalise $\mathcal{B}$ as follows:
\begin{equation}
\begin{split}
\label{eq:adv_b}
{\normalfont  \textbf{Adv}^{\mathrm{Ind}}_{\mathcal{B},\mathcal{L}_{NP},\mathfrak{C}^{\mathcal{DB}_t}}}  (\lambda) =  \bigl|
           \mathbb{P}[b=b^\prime] - \frac{1}{2} \bigr| \leq negl(\lambda) 
\end{split}
\end{equation}
Now, we evaluate $ \mathbb{P}[b=b^\prime]$ as follows.
\begin{equation}
\begin{split}
\label{eg:probability_B}
\mathbb{P}[b=b^\prime] & =  \\
          &~\mathbb{P}[ b=b^\prime| \alpha^\prime \in \{\alpha_0,\alpha_1\}]  \cdot  \mathbb{P}[\alpha^\prime \in \{\alpha_0,\alpha_1\}] \\
        & +\mathbb{P}[b=b^\prime|\alpha^\prime \notin \{\alpha_0,\alpha_1\}] \cdot  \mathbb{P}[\alpha^\prime \notin \{\alpha_0,\alpha_1\}] \\
        & = \mathbb{P}[ b=b^\prime \cap \alpha^\prime \in \{\alpha_0,\alpha_1\}] \\
        & +\mathbb{P}[b=b^\prime|\alpha^\prime \notin \{\alpha_0,\alpha_1\}] \cdot  \mathbb{P}[\alpha^\prime \notin \{\alpha_0,\alpha_1\}] \\
        & = \mathbb{P}[\mathcal{A} \textnormal{ wins the } {\normalfont Ind_{\mathrm{SSE},\mathcal{A},\mathcal{L}_{NP}, \mathfrak{C}^{\mathcal{DB}_t},\alpha}} \textnormal{ game}] \\
        & + \frac{1}{2}\left(1-\mathbb{P}[\alpha^\prime \in \{\alpha_0,\alpha_1\}]\right)
\end{split}
\end{equation}

Now, we evaluate $\mathbb{P}[\alpha^\prime \in \{\alpha_0,\alpha_1\}]$ as follows.
\begin{equation*}
\begin{split}
\mathbb{P}[\alpha^\prime \in \{\alpha_0,\alpha_1\}] & = \mathbb{P}[ \alpha^\prime=\alpha_0] + \mathbb{P}[ \alpha^\prime=\alpha_1]
\end{split}
\end{equation*}

Since we have
\begin{equation*}
\begin{split}
\mathbb{P}[ \alpha^\prime=\alpha_0] + & \mathbb{P}[ \alpha^\prime=\alpha_1]  = \\
& \mathbb{P}[ \alpha^\prime=\alpha_b| b=0] + \mathbb{P}[ \alpha^\prime=\alpha_b |b=1]
\end{split}
\end{equation*}
then,
\begin{equation*}
\begin{split}
\mathbb{P}[\alpha^\prime \in \{\alpha_0,\alpha_1\}] & = \frac{1}{2}\left (\mathbb{P}[ \alpha^\prime=\alpha_b| b=0] +  \mathbb{P}[ \alpha^\prime=\alpha_0]  \right) \\
 & + \frac{1}{2}\left (\mathbb{P}[ \alpha^\prime=\alpha_b| b=1] +  \mathbb{P}[ \alpha^\prime=\alpha_1]  \right)
\end{split}
\end{equation*}
We note that $\mathbb{P}[ \alpha^\prime=\alpha_b]$ is the probability $\mathcal{A}$ wins the 1-out-of-$\alpha$ indistinguishability game, and $\alpha_0$ and $\alpha_1$ are uniformly selected from $\{0,\alpha-1\}$, then we have
\begin{equation}
\begin{split}
\label{eq:alpha_12}
\mathbb{P}[\alpha^\prime & \in \{\alpha_0,\alpha_1\}]  = \\ &\mathbb{P}[\mathcal{A} \textnormal{ wins the } {\normalfont Ind_{\mathrm{SSE},\mathcal{A},\mathcal{L}_{NP}, \mathfrak{C}^{\mathcal{DB}_t},\alpha}} \textnormal{ game}] 
+ \frac{1}{\alpha}
\end{split}
\end{equation}

\noindent Applying Eq.~\ref{eq:alpha_12} to Eq.~\ref{eg:probability_B}, we have
\begin{equation*}
\begin{split}
\mathbb{P}[b=b^\prime] & = \\ & \frac{1}{2} \cdot\mathbb{P}[\mathcal{A} \textnormal{ wins the } {\normalfont Ind_{\mathrm{SSE},\mathcal{A},\mathcal{L}_{NP}, \mathfrak{C}^{\mathcal{DB}_t},\alpha}} \textnormal{ game}] \\
& + \frac{1}{2} - \frac{1}{2\alpha}
\end{split}
\end{equation*}

\noindent Then, from Equation~\ref{eq:adv_b}, we can derive
\begin{equation}
\begin{split}
\label{eq:adv_b_derieved}
& {\normalfont   \textbf{Adv}^{\mathrm{Ind}}_{\mathcal{B},\mathcal{L}_{NP},\mathfrak{C}^{\mathcal{DB}_t}}}  (\lambda) =\\
& \frac{1}{2} \left(\mathbb{P}[\mathcal{A} \textnormal{ wins the } {\normalfont Ind_{\mathrm{SSE},\mathcal{A},\mathcal{L}_{NP}, \mathfrak{C}^{\mathcal{DB}_t},\alpha}} \textnormal{ game}] - \frac{1}{\alpha}\right)
\end{split}
\end{equation}

\noindent Applying Equation~\ref{eq:adv_b_derieved} to Equation~\ref{eq:nonpersistent_alpha}, finally, we can have

\begin{equation}
\begin{split}
\label{eq:adv_b_a_adv}
 {\normalfont   \textbf{Adv}^{\mathrm{Ind}}_{\mathcal{B},\mathcal{L}_{NP},\mathfrak{C}^{\mathcal{DB}_t}}}  (\lambda) =
\frac{1}{2}{\normalfont  \textbf{Adv}^{\mathrm{Ind}}_{\mathcal{A},\mathcal{L}_{NP},\mathfrak{C}^{\mathcal{DB}_t},\alpha}}  (\lambda)
\end{split}
\end{equation}

\end{proof}

\subsection{Security against Persistent Adversary}
\label{subsec:persistentdefinitions}

\noindent \textbf{Prior knowledge of the databases}
We start to generalise the knowledge of the non-persistent adversary over the time to be the persistent adversary's knowledge of databases. 
Namely, we denote by $(\textnormal{DB}_{t=0},\dots, \textnormal{DB}_{t=T})$ such knowledge, where $T$ denotes the streaming period.
We also make use of the constraint set $\mathfrak{C}^{\mathcal{DB}_t}=\{C^{\textnormal{DB}_{t}},{\textnormal{DB}_{t}}\in \mathcal{DB}_t\}$, defined in  Section~\ref{subsec:security_ana_np}, to formulate
$\mathbf{C}^{[1,\dots,T]}=\{\mathfrak{C}^{\mathcal{DB}_0},\dots,\mathfrak{C}^{\mathcal{DB}_T}\}$ the generalisation of constraint sets over the period such that we know every $\mathfrak{C}^{\mathcal{DB}_t}(\mathcal{L}_{NP},\alpha)$-acceptable, $\forall t\in [0,T]$.

%
Let $\delta$ be a stateless function that outputs the keyword set difference in a pairwise different inputs. Then,  $\delta(\textnormal{DB}_{t},\textnormal{DB}_{t^\prime})=W^{t,t^\prime}$, where $W^{t,t^\prime}=\{w_i | w_i \in \textnormal{DB}_{t}, w_i \notin \textnormal{DB}_{t^\prime}\}$. 
We consider the leakage function only depends on the  query itself $(i.e., q(.))$ and on the state of database at the querying time: $\mathcal{L}_{NP}(q)_t$ and $\mathcal{L}_{NP}(q)_{t^\prime}$  be presented as  stateless functions of $f_{\mathcal{L}_{NP}}(q,\textnormal{DB}_{t})$, and $f_{\mathcal{L}_{NP}}(q,\textnormal{DB}_{t^\prime})$, respectively.

%
Let $Q=\{q_1, \dots, q_{n-1}\}$ be a query set, and $C^{\textnormal{DB}_{t}}$ and $C^{\textnormal{DB}_{t^\prime}}$  be  constraints applied on the $H_t=(\textnormal{DB}_{t},Q)$ and $H_{t^\prime}=(\textnormal{DB}_{t^\prime},Q)$, respectively.
%
From Proposition~\ref{prop:k_acceptable}, we denote by  $\Gamma_{\mathcal{L}_{NP},C^{\textnormal{DB}_t}}(H_t)=\{G_{1,t},\dots,G_{m,t}\}$  the 
clustering $\Gamma_t=\{G_{1,t},\dots,G_{m,t}\}$ of queries induced by the leakage $\mathcal{L}_{NP}$ after history $H_t$ is a partition of $Q$, for which, in every cluster, queries share the same leakage after running the history $H_t$. 
Similarly, we also derive the clustering $\Gamma_{t^\prime}=\{G_{1,t^\prime},\dots,G_{m^\prime,t^\prime}\}$ of queries $Q_t$ induced by the leakage $\mathcal{L}_{NP}$ after history $H_t^\prime$.
We make an observation on $\textrm{EDB}_t$  and $\textrm{EDB}_{t^\prime}$ that:
let $q(w)$ (resp.  $q(w^\prime)$) be the query of $w$ (resp. $w^\prime$), where $ w \notin W^{t,t^\prime}, w^\prime \in W^{t,t^\prime}$, 
and  $q(w), q(w^\prime) \in G_{v,t}$, ($\exists v \in [1,m]$), such that: 
\[ (\widetilde{H}^{w}_t=H_t||q(w),
\widetilde{H}^{w^\prime}_{t}=H_{t}||q(w^\prime))\]
and 
\[(\widetilde{H}^{w}_{t^\prime}=H_{t^\prime}||q(w), \widetilde{H}^{w^\prime}_{t^\prime}=H_{t^\prime}||q(w^\prime))\]
We can see that $f_{\mathcal{L}_{NP}}(q(w),\textrm{DB}_t)=f_{\mathcal{L}_{NP}}(q(w^\prime),\textrm{DB}_t)$, but $f_{\mathcal{L}_{NP}}(q(w),\textrm{DB}_{t^\prime})\neq f_{\mathcal{L}_{NP}}(q(w^\prime),\textrm{DB}_{t^\prime})$
due to $w \in \textnormal{DB}_{t^\prime}$ while $w^\prime \notin \textnormal{DB}_{t^\prime}$, causing $\textrm{EDB}_{t^\prime}(w) \neq \textrm{EDB}_{t^\prime}(w^\prime)$.
We can see that: if both $q(w), q(w^\prime) \in G_{v,t^\prime}$, then $f_{\mathcal{L}_{NP}}(q(w),\textrm{DB}_{t^\prime}) = f_{\mathcal{L}_{NP}}(q(w^\prime),\textrm{DB}_{t^\prime})$.
This observation can be iterated for all other pairwise different queries in $G_{v_t}$ and $G_{v_{t^\prime}}$.
%
%
More generally, we need to have $|G_{v,t}| = |G_{v,t^\prime}|$, $\forall t,t \in [0,T]$ such that $\forall q(w),q(w^\prime) \in G_{v,t}$, they always have the same leakage at different $t^\prime \in [1,T]$.
Formally, given $m$ the number of clusters, we define
\[ \forall G_{v,t}, G_{v,t^\prime} \neq \emptyset,  F_{t,t^\prime,v}(G_{v,t},G_{v,t^\prime})=(|G_{v,t}| \stackrel{?}{=} |G_{v,t^\prime}|) \]
%
Now, we also use \textit{constrained} security to formalise that leakage over $\textrm{EDB}_t$ and $\textrm{EDB}_{t-1}$, $\forall t\geq 1$ as in.

\begin{definition}\label{def:knowledge_constraints}
A constraint $\mathbb{F}^t=(F_{t,t-1,1},\dots,F_{t,t-1,m})$ over two clusters $\Gamma_t=\{G_{1,t},\dots,G_{m,t}\}$ and $\Gamma_t^\prime=\{G_{1,t^\prime},\dots,G_{m,t^\prime}\}$ 
induced by the leakage $\mathcal{L}_{NP}$ after histories $H_t$ (resp. $H_{t-1}$) over $Q_t$ (resp. $Q_{t-1}$), is a sequence of algorithms such that $F(t,t-1,i)=flag_i$, where $flag_i$ is ${\normalfont \tt{true}}$ or ${\normalfont \tt{false}}$. The constraint is consistent if $(.,{\normalfont \tt{false}} ,.)=({\normalfont \tt{false}},.)$, then $\mathbb{F}^t= {\normalfont \tt{false}}$ (the constraint remains ${\normalfont \tt{false}}$ if it once evaluates to $ {\normalfont \tt{false}})$.

Let a constraint set $\mathcal{F}=(\mathbb{F}^1,\dots,\mathbb{F}^T)$ is a sequence of algorithms evaluated at every $t\in[1,T]$. The set is consistent if $(.,{\normalfont \tt{false}} ,.)=({\normalfont \tt{false}},.)$, then $\mathcal{F}= {\normalfont \tt{false}}$ (the constraint remains ${\normalfont \tt{false}}$ if it once evaluates to $ {\normalfont \tt{false}})$. In a short form, we write $\mathcal{F}^T$ to present the condition $\mathcal{F}={\normalfont \tt{true}}$. 
\end{definition} 

\vspace{-5pt}
%
%
%
%
%
%
%
The security of a scheme $\Sigma_{P}=(\textsf{Setup},\textsf{Stream},\textsf{Search})$ against the persistent adversary over a streaming period $T$.
We start adding a padding mechanism against the persistent adversary in Algorithm 1 (i.e., Padding Strategies) to $\Sigma_{P}$ such that, $\forall G_{i,t}$ in $\Gamma_t=\{G_{1,t},\dots,G_{m,t}\}$ induced by $\textsf{Search}_t$ (i.e., searching at time $t$) against  $\textrm{EDB}_t$, $G_{i,t}$ always has the same size $|G_{i,t}|=|G_{i,t^\prime}|$, where $G_{i,t^\prime}$ in $\Gamma_{t^\prime}=\{G_{1,t^\prime},\dots,G_{m,t^\prime}\}$, $\forall t^\prime \neq t$.
%
%
Let $\mathcal{L}^{Stream}_{[1,\dots,T]}=\{{L^{Stream}_1},\dots,{{L^{Stream}_T}}\}$, and $\mathcal{L}^{Search}_{[1,\dots,T]}=\{\mathcal{L}^{Srch}_{NP,i}\}$, for $i\in [1,\dots,T]$, where $\mathcal{L}^{Srch}_{NP,i}=(\mathcal{L}^{Srch}_i,\mathcal{L}^{\alpha-pad}_i)$ is the search leakage at time $i$, where  $\mathcal{L}^{\alpha-pad}_i)$ reveals the sizes of clusters induced by $\mathcal{L}^{\alpha-pad}_i$. 
Then, the leakage function of $\Sigma_{P}$ can be quantified via the leakage function $\mathcal{L}_{P}=(\mathcal{L}^{Stp},\mathcal{L}^{Stream}_{[1,\dots,T]},\mathcal{L}^{Search}_{[1,\dots,T]})$. 

By using Definition~\ref{def:knowledge_constraints}, we can show that $\mathbf{C}^{[1,\dots,T]}=\{\mathfrak{C}^{\mathcal{DB}_0},\dots,\mathfrak{C}^{\mathcal{DB}_T}\}$ is an $(\mathcal{L}_{P},\alpha,\mathcal{F})$-acceptable set of constraint sets, where $\alpha$ denotes the minimum cluster size (over all constructable databases) and $\mathcal{F}^T$ denotes the condition   $\mathcal{F}={\normalfont \tt{true}}$. The reason is that, for every time $t\in [0,T]$, $\mathfrak{C}^{\mathcal{DB}_t}$ is $(\mathcal{L}_{NP},\alpha)$-acceptable, and 
$\forall G_{i,t}$ in $\Gamma_t=\{G_{1,t},\dots,G_{m,t}\}$ induced by $\textsf{Search}_t$ (i.e., searching at time $t$) against  $\textrm{EDB}_t$, $G_{i,t}$ always has the same size $|G_{i,t}|=|G_{i,t^\prime}|$, where $G_{i,t^\prime}$ in $\Gamma_{t^\prime}=\{G_{1,t^\prime},\dots,G_{m,t^\prime}\}$, $\forall t^\prime \neq t$
%
%
%
%
%

Now, we are ready to define the notion of constraned adaptive indistinguishability for $\Sigma_{P}$ given $\mathbf{C}^{[1,\dots,T]}$. This security game is formalised in Definition~\ref{def:security_p_adversary}.

\begin{definition}\label{def:security_p_adversary}
 Let $\Sigma$ = {\normalfont(\textsf{Setup,Streaming,Search})} be the DSSE scheme of \system,  $\lambda $ be the security parameter, and $\mathcal{A}$ be a persistent adversary. Let $\mathbf{C}^{[1,\dots,T]}$  be a set of $(\mathcal{L}_P,\alpha,\mathcal{F})$-acceptable constraint sets. %
 Let $u_t$ (\textsf{streaming}) (resp. $Q_t$ (\textsf{search})) be an update (resp. query set, i.e., $Q_t=\{q_{t,1},\dots,q_{t,n}\}$) at time $t$, and $\mathcal{L}^{Stream}$ (resp. $\mathcal{L}^{Srch}$) be the leakage after the query \textit{u}(resp. \textit{q}), respectively. 
 Let {\normalfont Ind$_{DSSE,\mathcal{A},\mathcal{L}_{P}, \mathbf{C}^{[1,\dots,T]},\alpha,\mathcal{F}}$} be the following game:

\begin{minipage}[t]{6.5cm}

{\normalfont Ind$_{DSSE,\mathcal{A},\mathcal{L}_{P}, \mathbf{C}^{[1,\dots,T]},\alpha,\mathcal{F}}$}($\lambda$) ~\textit{Game}:\\
$\hspace*{3mm}b\xleftarrow{\$}\{0, \dots, \alpha-1\}$\\
$\hspace*{3mm}(\mathfrak{C}^{\mathcal{DB}_0},\textnormal{DB}^0_t,\dots,\textnormal{DB}^{\alpha-1}_t)\leftarrow \mathcal{A}(1^{\lambda})$\\
$\hspace*{3mm}(K,\textnormal{EDB}^b_0)\leftarrow {\normalfont\textsf{Setup}}(\textnormal{DB}^b_0)$\\
\hspace*{3mm}\textbf{for} $t = 1$ \textbf{to} $T$ \textbf
{do}\\
$\hspace*{8mm}\textnormal{EDB}^b_t \leftarrow \normalfont\textsf{Streaming}(u_t)$\\
$\hspace*{8mm}(\mathfrak{C}^{\mathcal{DB}_t},Q_t)\leftarrow \mathcal{A}(\textnormal{EDB}^b_t)$\\
$\hspace*{8mm}\{\tau^b_{t,1},\dots,\tau^b_{t,n}\} \leftarrow \normalfont\textsf{Search}(Q_t,\textnormal{EDB}^b_t)$\\
\hspace*{3mm}\textbf{end for}\\
$\hspace*{3mm}b^\prime \leftarrow \mathcal{A}(\tau^b_{T,1},\dots,\tau^b_{T,n})$\\
\hspace*{3mm}\textbf{if} b $=$ b$^\prime$ \textbf{return} 1, \textbf{otherwise return} 0\\ 
\end{minipage}

where $\tau^b_{t,i}$ presents the transcript of the query $q_{t,i}$, and with the restriction that, given $\mathbf{C}^{[1,\dots,T]}=\{\mathfrak{C}^{\mathcal{DB}_0},\dots,\mathfrak{C}^{\mathcal{DB}_T}\}$,  for $\mathfrak{C}^{\mathcal{DB}_t}=\{C^{\textnormal{DB}_{t}},{\textnormal{DB}_{t}}\in \mathcal{DB}_t\}$, for all the $H^i_t=(\textnormal{DB}^i_t,q_{t,1},\dots, q_{t,n})$,
\begin{itemize}
  \setlength{\itemsep}{1pt}
  \setlength{\parskip}{0pt}
  \setlength{\parsep}{0pt}
\item $C^{\textnormal{DB}_{t}} \in \mathfrak{C}^{\mathcal{DB}_t}, and~ \forall 0\leq i \leq (\alpha-1), C^{\textnormal{DB}_t}(H^i_t) = {\normalfont \tt{true}}$
\item $\mathcal{L}(H^0_t)=\dots=\mathcal{L}(H^{\alpha-1}_t)$
\item $\mathcal{F}=(\mathbb{F}^1,\dots,\mathbb{F}^T)={\normalfont \tt{true}}$
\end{itemize}

\noindent We say that $\Sigma$ is $(\mathcal{L}_{P}, \mathbf{C}^{[1,\dots,T]},\alpha,\mathcal{F}   )$-constrained-adaptively-indistinguishable if for all probabilistic polynomial time adversary $\mathcal{A},$
\begin{equation}
\begin{split}
\label{eq:persistent_alpha_f}
 & {\normalfont  \textbf{Adv}^{\mathrm{Ind}}_{\mathcal{A},\mathcal{L}_{P},\mathbf{C}^{[1,\dots,T]},\alpha,\mathcal{F}}}  (\lambda) = \\
      &    \bigl| \mathbb{P}[{\mathrm{Ind}_{\mathrm{DSSE},\mathcal{A},\mathcal{L}_{P},\mathbf{C}^{[1,\dots,T]},\alpha,\mathcal{F}}}(\lambda)=1] - \frac{1}{\alpha} \bigr| \leq negl(\lambda).
\end{split}
\end{equation}
\end{definition}  

%
%
%
%

We can see that $\Sigma_P$ offers at least $log(\alpha)$ bits of security. Given $\mathbf{C}^{[1,\dots,T]}$ is an $(\mathcal{L}_{P},\alpha,\mathcal{F})$-acceptable set of constraint sets, we can analysing every transcript in the set  $\{\tau^b_{t,1},\dots,\tau^b_{t,n}\}$ under the choice of $\alpha$.
Let $u_t$ simply contains a pair of $(w_i,id)$, we can make an observation in~\system~ as follows. 

If $ST[w_i].c=0$, presenting that $w_i$ appears in the first time, then \textit{Padding Controller} checks the cache cluster that expects to have $w_i$ against the \textit{first batch} condition. 
We recall that the condition ensures the existence of all keywords in the cluster before padding. If the condition is fail, we see that both $\textnormal{EDB}_i^0$ and $\textnormal{EDB}_i^1$ are indistinguishable under the choice of $b\xleftarrow{\$}\{0, \dots, \alpha-1\}$. The reason is because the challenger does not send any batch to the server. 
In contrast, if the condition is passed, \textit{Padding Controller} pads all the keywords in the cluster to be the same length and encrypt them before sending a batch to the server.  Meanwhile, $ST[w_i].c$ gets updated. Accordingly, $\textnormal{EDB}_t^b$ is indistinguishable under the choice of because these databases receive the batch of keywords that have the same length. 
The \textit{first batch} condition is crucial when \system~ is against the persistent adversary. It ensures there is no new keyword in the cluster appears in subsequent batches. Hence, the adversary cannot distinguish when a new keyword is added in $\textnormal{EDB}^b_i$.

If $ST[w_i].c >0 $ and \textit{first batch} condition has been met, \textit{Padding Controller} performs padding similarly with the padding strategy against the non-persistent adversary, presented in Algorithm 1. We can also see that $\textnormal{EDB}_t^b$ is  indistinguishable because \textit{Padding Controller} guarantees all the keywords in a cluster have the same length.

Now, we start analysing the query $q_t,i$ that queries the keyword $w_i$, with $ST[w_i].c=0$ or $ST[w_i].c>0$.

If $ST[w_i].c=0$, the adversary cannot guess the picked database because $\tau^b_{t,i}$ return nothing. 

If $ST[w_i].c>0$, $\tau^b_{t,i}$ is indistinguishable to all other query keywords in the same cluster at time $t$.

\begin{theorem}\label{theo:theorem_persistent}
Let $\Sigma_{P}$ = {\normalfont(\textsf{Setup,Streaming,Search})} be our DSSE scheme, and $\mathbf{C}^{[1,\dots,T]}=\{\mathfrak{C}^{\mathcal{DB}_0},\dots,\mathfrak{C}^{\mathcal{DB}_T}\}$ is a set of constraint sets. If $\Sigma$ is $\mathcal{L}_{P}$-constrained-adaptively-indistinguishable secure, and $\mathbf{C}^{[1,\dots,T]}$ is $(\mathcal{L}_{P},\alpha,\mathcal{F})$-acceptable, then $\Sigma$ is $(\mathcal{L}_{P},\mathbf{C}^{[1,\dots,T]},\alpha,\mathcal{F})$-constrained-adaptively-indistinguishability secure.
\end{theorem}

\begin{proof}

Let $\mathcal{A}$ be an adversary in the {\normalfont Ind$_{\mathrm{DSSE},\mathcal{A},\mathcal{L}_{P}, \mathbf{C}^{[1,\dots,T]},\alpha,\mathcal{F}}$} game. 
We construct an adversary $\mathcal{B}$ against the game. $\mathcal{B}$ first randomly picks two integer $\alpha_0,\alpha_1 \in \{0,\alpha-1\}$. 
Then, $\mathcal{B}$ starts $\mathcal{A}$ and receives $\alpha$ databases $(\textnormal{DB}^0_0,\dots,\textnormal{DB}^{\alpha-1}_0)$. 
Upon giving the pair $(\textnormal{DB}^{\alpha_0}_0,\textnormal{DB}^{\alpha_1}_0)$ to the challenger, where the challenger holds a random secret bit $b$, $\mathcal{B}$ receives the challenge encrypted database $\textnormal{EDB}^*_0$ which she forwards to $\mathcal{A}$. 
Then, for every $t\in[1,\dots,T]$, we have:
\begin{itemize}
\item $\mathcal{A}$ sends $u_t$ (stream queries) to  challenger and receives $\textnormal{EDB}^*_t$.
\item $\mathcal{A}$ outputs $((q^0_{t,1},\dots,q^{\alpha-1}_{t,1}),\dots,(q^0_{t,n},\dots,q^{\alpha-1}_{t,n}))$ and gives to $\mathcal{B}$.
\item To respond, $\mathcal{B}$ outputs $((q^{\alpha_0}_{t,1},q^{\alpha_1}_{t,1}),\dots, (q^{\alpha_0}_{t,n},q^{\alpha_1}_{t,n}))$ to the game and receives back the transcripts $(\tau^*_{t,1},\dots,\tau^*_{t,n})$, and forwards them to $\mathcal{A}$.
\end{itemize}

After executing all $t\in[1,\dots,T]$, $\mathcal{A}$ outputs the integer $\alpha^\prime$.
If $\alpha^\prime=\alpha_0$, $\mathcal{B}$ outputs $b^\prime=0$, else if $\alpha^\prime=\alpha_1$, $\mathcal{B}$ outputs $b^\prime=1$, and otherwise outputs the probability $1/2$ for the output $0$ and the probability $1/2$ for the output $1$.
%

We first make an observation: for the pair $(H^{\alpha_0}_t,H^{\alpha_1}_t)$ at $\forall t \in [1,T]$, the views of the adversary  $\mathcal{B}$ are indistinguishable due to   $\mathcal{L}_{P}(H^{\alpha_0}_t)=\mathcal{L}_{P}(H^{\alpha_1}_t)$, presenting both satisfying $\mathbf{C}^{[1,\dots,T]}$. 
Then we can formalise $\mathcal{B}$ as follows:
\begin{equation}
\begin{split}
\label{eq:adv_b_persistent}
{\normalfont  \textbf{Adv}^{\mathrm{Ind}}_{\mathcal{B},\mathcal{L}_{P}, \mathbf{C}^{[1,\dots,T]},\mathcal{F}}}  (\lambda) =  \bigl|
           \mathbb{P}[b=b^\prime] - \frac{1}{2} \bigr| \leq negl(\lambda) 
\end{split}
\end{equation}
Now, we evaluate $ \mathbb{P}[b=b^\prime]$ as follows.
\begin{equation}
\begin{split}
\label{eg:probability_B_persistent}
\mathbb{P}[b=b^\prime] & =  \\
          &~\mathbb{P}[ b=b^\prime| \alpha^\prime \in \{\alpha_0,\alpha_1\}]  \cdot  \mathbb{P}[\alpha^\prime \in \{\alpha_0,\alpha_1\}] \\
        & +\mathbb{P}[b=b^\prime|\alpha^\prime \notin \{\alpha_0,\alpha_1\}] \cdot  \mathbb{P}[\alpha^\prime \notin \{\alpha_0,\alpha_1\}] \\
        & = \mathbb{P}[ b=b^\prime \cap \alpha^\prime \in \{\alpha_0,\alpha_1\}] \\
        & +\mathbb{P}[b=b^\prime|\alpha^\prime \notin \{\alpha_0,\alpha_1\}] \cdot  \mathbb{P}[\alpha^\prime \notin \{\alpha_0,\alpha_1\}] \\
        & = \mathbb{P}[\mathcal{A} \textnormal{ wins the } {\normalfont Ind_{\mathrm{DSSE},\mathcal{A},\mathcal{L}_{P}, \mathbf{C}^{[1,\dots,T]},\alpha,\mathcal{F}}} \textnormal{ game}] \\
        & + \frac{1}{2}\left(1-\mathbb{P}[\alpha^\prime \in \{\alpha_0,\alpha_1\}]\right)
\end{split}
\end{equation}

Now, we evaluate $\mathbb{P}[\alpha^\prime \in \{\alpha_0,\alpha_1\}]$ as follows.
\begin{equation*}
\begin{split}
\mathbb{P}[\alpha^\prime \in \{\alpha_0,\alpha_1\}] & = \mathbb{P}[ \alpha^\prime=\alpha_0] + \mathbb{P}[ \alpha^\prime=\alpha_1]
\end{split}
\end{equation*}

Since we have
\begin{equation*}
\begin{split}
\mathbb{P}[ \alpha^\prime=\alpha_0] + & \mathbb{P}[ \alpha^\prime=\alpha_1]  = \\
& \mathbb{P}[ \alpha^\prime=\alpha_b| b=0] + \mathbb{P}[ \alpha^\prime=\alpha_b |b=1]
\end{split}
\end{equation*}
then,
\begin{equation*}
\begin{split}
\mathbb{P}[\alpha^\prime \in \{\alpha_0,\alpha_1\}] & = \frac{1}{2}\left (\mathbb{P}[ \alpha^\prime=\alpha_b| b=0] +  \mathbb{P}[ \alpha^\prime=\alpha_0]  \right) \\
 & + \frac{1}{2}\left (\mathbb{P}[ \alpha^\prime=\alpha_b| b=1] +  \mathbb{P}[ \alpha^\prime=\alpha_1]  \right)
\end{split}
\end{equation*}
We note that $\mathbb{P}[ \alpha^\prime=\alpha_b]$ is the probability $\mathcal{A}$ wins the 1-out-of-$\alpha$ indistinguishability game, and $\alpha_0$ and $\alpha_1$ are uniformly selected from $\{0,\alpha-1\}$, then we have
\begin{equation}
\begin{split}
\label{eq:alpha_12_persistent}
\mathbb{P}[\alpha^\prime & \in \{\alpha_0,\alpha_1\}]  = \\ &\mathbb{P}[\mathcal{A} \textnormal{ wins the } {\normalfont Ind_{\mathrm{DSSE},\mathcal{A},\mathcal{L}_{P}, \mathbf{C}^{[1,\dots,T]},\alpha,\mathcal{F}}} \textnormal{ game}] 
+ \frac{1}{\alpha}
\end{split}
\end{equation}

\noindent Applying Eq.~\ref{eq:alpha_12_persistent} to Eq.~\ref{eg:probability_B_persistent}, we have
\begin{equation*}
\begin{split}
\mathbb{P}[b=b^\prime] & = \\ & \frac{1}{2} \cdot\mathbb{P}[\mathcal{A} \textnormal{ wins the } {\normalfont Ind_{\mathrm{SSE},\mathcal{A},\mathcal{L}_{P}, \mathbf{C}^{[1,\dots,T]},\alpha,\mathcal{F}}} \textnormal{ game}] \\
& + \frac{1}{2} - \frac{1}{2\alpha}
\end{split}
\end{equation*}

\noindent Then, from Equation~\ref{eq:adv_b_persistent}, we can derive
\begin{equation}
\begin{split}
\label{eq:adv_b_derieved_persistent}
& {\normalfont   \textbf{Adv}^{\mathrm{Ind}}_{\mathcal{B},\mathcal{L}_{P},\mathbf{C}^{[1,\dots,T]},\mathcal{F}}}  (\lambda) =\\
& \frac{1}{2} \left(\mathbb{P}[\mathcal{A} \textnormal{ wins the } {\normalfont Ind_{\mathrm{SSE},\mathcal{A},\mathcal{L}_{P}, \mathbf{C}^{[1,\dots,T]},\alpha,\mathcal{F}}} \textnormal{ game}] - \frac{1}{\alpha}\right)
\end{split}
\end{equation}

\noindent Applying Equation~\ref{eq:adv_b_derieved_persistent} to Equation~\ref{eq:persistent_alpha_f}, finally, we can conclude

\begin{equation}
\begin{split}
\label{eq:adv_b_a_adv_persistent}
 {\normalfont   \textbf{Adv}^{\mathrm{Ind}}_{\mathcal{B},\mathcal{L}_{P},\mathbf{C}^{[1,\dots,T]},\mathcal{F}}}  (\lambda) =
\frac{1}{2}{\normalfont  \textbf{Adv}^{\mathrm{Ind}}_{\mathcal{A},\mathcal{L}_{P},\mathbf{C}^{[1,\dots,T]},\alpha,\mathcal{F}}}  (\lambda)
\end{split}
\end{equation}

%
%
%
%

\end{proof}

\newpage
\begin{IEEEbiography}[{\includegraphics[width=0.9in,height=1.0in,clip,keepaspectratio]{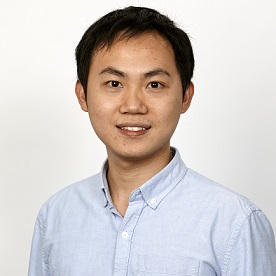}}]{Viet Vo} is a Ph.D. student at the Department of Software and Security, Monash University. He received his B.S. degree from Ho Chi Minh University of Science in 2010, the M.Eng. degree from Chonnam National University in 2013, the MPhil. degree from Monash University in 2016. His research interests include data security and privacy, secure networked system, and data processing. His research has been supported by CSIRO Data61. He was the recipient of the Publication Award at Monash University in 2021.\end{IEEEbiography}
\begin{IEEEbiography}[{\includegraphics[width=0.9in,height=1.0in,clip,keepaspectratio]{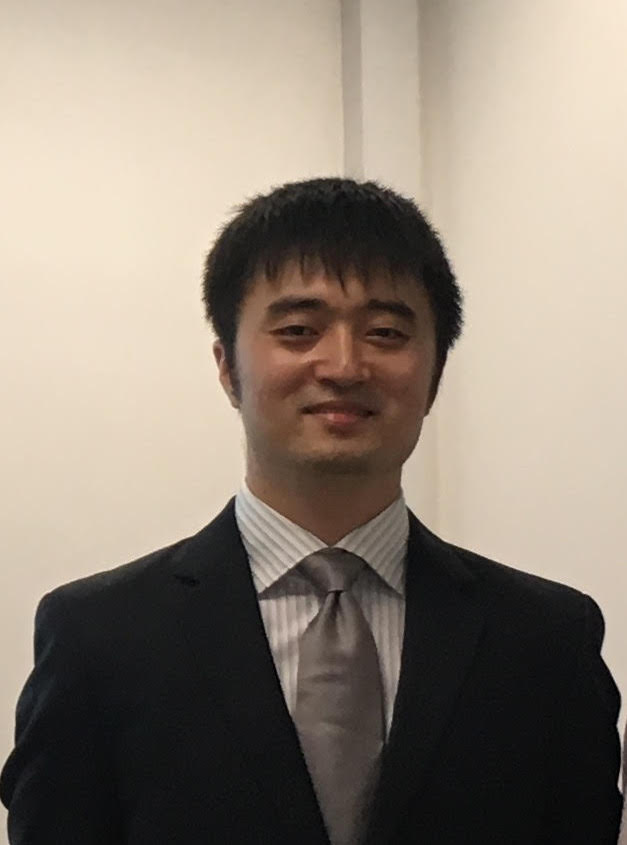}}]{Xingliang Yuan} is currently a Senior Lecturer (aka US Associate Professor) at the Department of Software Systems and Cybersecurity in the Faculty of Information Technology, Monash University, Australia. His research interests include data security and privacy, secure networked system, machine learning security and privacy, confidential computing. His research has been supported by Australian Research Council, CSIRO Data61, and Oceania Cyber Security Centre. In the past few years, his work has appeared in prestigious venues in cybersecurity, computer networks, and distributed systems, such as ACM CCS, NDSS, IEEE INFOCOM, IEEE TDSC, IEEE TIFS, and IEEE TPDS. He was the recipient of the Dean’s Award for Excellence in Research by an Early Career Researcher at Monash Faculty of IT in 2020. He received the best paper award in the European Symposium on Research in Computer Security (ESORICS) 2021, the IEEE Conference on Dependable and Secure Computing (IDSC) 2019, and the IEEE International Conference on Mobility, Sensing and Networking (MSN) 2015.\end{IEEEbiography}
\begin{IEEEbiography}[{\includegraphics[width=0.9in,height=1.0in,clip,keepaspectratio]{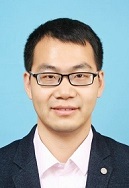}}]{Dr. Shi-Feng Sun} is currently an Associate Professor in the School of Cyber Science and Engineering at Shanghai Jiao Tong University, China. Prior to joining SJTU, he was a Lecturer in the Department of Software Systems and Cybersecurity at Monash University, Australia. Before that, he worked as a Research Fellow at Monash University and CSIRO, Australia. He received his Phd degree in Computer Science from Shanghai Jiao Tong University in 2016 and worked as a visiting scholar in the Department of Computing and Information Systems at the University of Melbourne during his PhD study. His research interest centers on cryptography and data privacy, particularly on provably secure cryptosystems against physical attacks, data privacy-preserving technology in cloud storage, and privacy-enhancing technology in blockchain. He has published more than 40 quality papers, including publications in ACM CCS, NDSS, EUROCRYPT, PKC, ESORICS, AsiaCCS, IEEE TDSC and IEEE TSC, etc.

\end{IEEEbiography}
\begin{IEEEbiography}[{\includegraphics[width=0.9in,height=1.0in,clip,keepaspectratio]{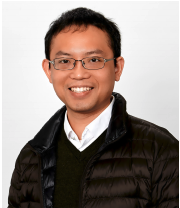}}]{Joseph Liu} is an Associate Professor in the Faculty of Information Technology, Monash University in Melbourne, Australia. He got his PhD from the Chinese University of Hong Kong at 2004. His research areas include cyber security, blockchain and applied cryptography. He has received more than 9800 citations and his H-index is 56, with more than 200 publications in top venues such as CRYPTO, ACM CCS, NDSS, INFOCOM. He is currently the lead of the Monash Cyber Security Discipline Group. He has established the Monash Blockchain Technology Centre at 2019 and serves as the founding director. His remarkable research in linkable ring signature forms the theory basis of Monero (XMR), one of the largest cryptocurrencies in the world with current market capitalization more than US\$6 billion. He has been given the prestigious ICT Researcher of the Year 2018 Award by the Australian Computer Society (ACS),  the largest professional body in Australia representing the ICT sector, for his contribution to the blockchain and cyber security community. He has been invited as the IEEE Distinguished Lecturer in 2021 for the topic of Blockchain in Supply Chain.
\end{IEEEbiography}
\begin{IEEEbiography}[{\includegraphics[width=0.9in,height=1.0in,clip,keepaspectratio]{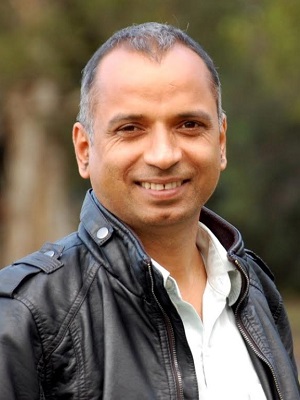}}]{Dr. Surya Nepal} is a Senior Principal Research Scientist at CSIRO Data61 and the Deputy Research Director of Cybersecurity Cooperative Research Centre (CRC). He has been with CSIRO since 2000 and currently leads the distributed systems security group comprising 30 staff and 50 PhD students. His main research focus is on the development and implementation of technologies in the area of distributed systems, with a specific focus on security, privacy and trust. He has more than 250 peer-reviewed publications to his credit. He is a member of the editorial boards of IEEE Transactions on Service Computing, ACM Transactions on Internet Technology, IEEE Transactions on Dependable and Secure Computing,  and Frontiers of Big Data- Security Privacy, and Trust. Dr Nepal also holds an honorary professor position at Macquarie University and a conjoint faculty position at UNSW. \end{IEEEbiography}
\begin{IEEEbiography}[{\includegraphics[width=0.9in,height=1.0in,clip,keepaspectratio]{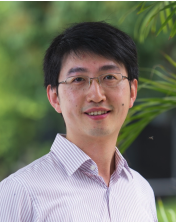}}]{Cong Wang} is a Professor in the Department of Computer Science, City University of Hong Kong. His research interests include data and network security, blockchain and decentralized applications, and privacy-enhancing technologies. He has been one of the Founding Members of the Young Academy of Sciences of Hong Kong since 2017, and has been conferred the RGC Research Fellow in 2021. He received the Outstanding Researcher Award (junior faculty) in 2019, the Outstanding Supervisor Award in 2017 and the President's Awards in 2019 and 2016, all from City University of Hong Kong. He is a co-recipient of the Best Paper Award of IEEE ICDCS 2020, ICPADS 2018, MSN 2015, the Best Student Paper Award of IEEE ICDCS 2017, and the IEEE INFOCOM Test of Time Paper Award 2020. His research has been supported by multiple government research fund agencies, including National Natural Science Foundation of China, Hong Kong Research Grants Council, and Hong Kong Innovation and Technology Commission. He has served as associate editor for IEEE Transactions on Dependable and Secure Computing (TDSC), IEEE Transactions on Services Computing (TSC), IEEE Internet of Things Journal (IoT-J), IEEE Networking Letters, and The Journal of Blockchain Research, and TPC co-chairs for a number of IEEE conferences and workshops. He is a fellow of the IEEE, and member of the ACM.\end{IEEEbiography}

\end{document}